\documentclass[aps,11pt]{revtex4}
\usepackage{algorithm}
\usepackage{makeidx}
\usepackage{multirow}
\usepackage{array}
\usepackage{amsmath,amssymb,amsthm}
\usepackage{mathrsfs}
\usepackage{amsfonts}
\usepackage{setspace}
\usepackage[]{xcolor}
\usepackage{graphicx,color,scalefnt}
\usepackage{epstopdf,mathtools,amsfonts}
\usepackage{ulem}
\usepackage{hyperref}
\usepackage{subfigure,epstopdf}
\newcommand{\be}{\begin{equation}}
\newcommand{\ee}{\end{equation}}
\newcommand{\bea}{\begin{eqnarray}}
\newcommand{\eea}{\end{eqnarray}}
\newtheorem{theorem}{Theorem}[section]

\newtheorem{corollary}[theorem]{Corollary}
\newtheorem{proposition}[theorem]{Proposition}
\newtheorem{remark}[theorem]{Remark}

\newcommand{\ud}{\mathrm{d}}

\begin{document}
\title{\Large \bf A Bimodal Weibull Distribution: Properties and Inference}

\smallskip 
\author{\bf Roberto Vila$^{a1}$}
\author{Mehmet Niyazi \c{C}ankaya$^{a2,a3}$}
\bigskip 
\affiliation{$^{a1}$ Departamento de Estatística, Universidade de Brasília, Brazil}

\affiliation{$^{a2}$ Faculty of Applied Sciences, Department of International Trading and Finance, U\c{s}ak University,}
\affiliation{$^{a3}$  Faculty of Art and Sciences, Department of Statistics, U\c{s}ak University, U\c{s}ak, Turkey}
\begin{abstract}
\begin{center}
	\text{\normalsize{Abstract}}
\end{center}
\vspace{-0,3cm}
 Modeling is a challenging topic and using parametric models is an important stage to reach flexible function for modeling.  Weibull distribution has two parameters which are shape $\alpha$ and scale $\beta$. In this study, bimodality parameter is added and so  bimodal Weibull distribution is proposed by using a quadratic transformation technique used to generate bimodal functions produced due to using the quadratic expression. The analytical simplicity of Weibull and quadratic form give an advantage to derive a bimodal Weibull via constructing normalizing constant. The characteristics and properties of the proposed distribution are examined to show its usability in modeling.  After examination as first stage in modeling issue, it is appropriate to use bimodal Weibull for modeling data sets. Two estimation methods which are maximum $\log_q$ likelihood and its special form including objective functions $\log_q(f)$ and $\log(f)$ are used to estimate the parameters of shape, scale and bimodality parameters of the function.  The second stage in modeling is overcome by using heuristic algorithm for optimization of function according to parameters due to fact that converging to global point of objective function is performed by heuristic algorithm based on the stochastic optimization. Real data sets are provided to show the modeling competence of the proposed distribution.
 
\smallskip
\noindent{\bf Keywords.} Bimodal distribution; Estimation; Modelling; $q$-inference; Weibull distribution	
\end{abstract}
\nopagebreak


\maketitle
\section{Introduction}
\label{sec1}

Weibull distribution is very popular and used extensively in the applied field of science. There are two main parameters which are shape $\alpha$ and scale $\beta$ in probability density (p.d.) function of Weibull$(\boldsymbol{\theta})$, where $\boldsymbol{\theta}=(\alpha,\beta)$. The special values of shape and scale parameters drop to exponential and Rayleigh distributions \cite{Weibull}. Weibull distribution is used extensively and different forms of Weibull were generated by \cite{EPdist,Caste} and references therein. It has many applications in survival analysis, bioscience \cite{Martinez,OrtegaBetaWei}, management of maintenance, reliability engineering \cite{reliability}. 
The exponential kernels in Weibull and Gamma distributions are $\exp(-z^{\alpha})$ and $\exp(-z)$, respectively, which shows that Weibull is light tailed function if $\alpha>1$. Thus, Weibull can have advantegous when compared with Gamma for the light tailed empirical distributions. In order to have the different forms of tails at the positive axis on the real line, $q$-deformation can be used as an alternative solution to produce heavy-tailed functions \cite{Bercher12a}. It should be noted that $\log_q(f)={f^{1-q}-1 \over 1-q}$, $f \in[0,1]$, $q \in \mathbb{R} \backslash \{1\}$ as an objective function produces heavy-tailed function.   It is not easy to know which parametric model will be true one to represent accurately a population in the questionnaire, which makes  a challenge task for us to choose the true parametric model for the population. This is an outstanding topic in where it is difficult to finalize the dispute about choosing the true model. If p.d. function is defined on the real line, then location can be used to derive the bimodality around location \cite{Cankaya2018} and references therein. In order to derive a bimodal function defined on the positive axis of the real line, two p.d. functions can be tried to mix. In this case, the theoretical property of such mixed function should be examined and must be provided for existence of moments, entropies, tail behaviour, etc in order to apply it into modelling issue. Further, the computational complexity can arise due to the increased number of parameters which will be responsible to model subgroups assumed to be a member of subpopulation at a phenomena in questionnaire \cite{yangcomp}.  The managing bimodaliy for distribution defined on the positive axis is an another challenging task for us if data set shows a bimodality. Another task is about the applying a p.d. function which has important properties such as existence of moments, entropies, etc for conducting an accurate modelling on the data sets. The quadratic transformation for  normal distribution   is proposed by \cite{bimodal} to  produce bimodality and also asymmetry. Using the parameters whose role is defined exactly can guarantee the smoothness property (absolutely continuous function which is differantiable according to Radon Nikodym derivative \cite{knapp}) of the derived function as well.   In order to derive a bimodal distribution from a unimodal distribution, we can use quadratic transformation technique which gives a polynomial movement owing the fact that the used quadratic function can make a fluctation on the unimodal function. We will propose a new bimodal distribution generated by quadratic transformation technique with bimodality parameter $\delta$ and unimodal Weibull distribution with parameters shape $\alpha$ and scale $\beta$.  Thus, we can observe how the bimodality can be derived. After applying this technique, we can get a normalizing constant to produce a p.d. function.  The property of a newly generated function should be extensively examined in order to pass the test about finiteness of moments, existence of entropies, etc. Thus, the proposed distribution can be used to model a data set.  

It can be generally observed that a phenomena in the universe can show a bimodality. There can be a contamination or an irregularity into underlying or majority of the data. If the replicated values at an interval on the real line are observed, then bimodality occurs. In other words, a data set can be a combination of different parametric models or different values of same parametric model $f(x;\boldsymbol{\theta})$  to represent bimodality \cite{mix1,mix2,yangcomp}. The light tailed property, tractability of analytical expression, bimodality kernel $1+(1-\delta x)^2$ used to derive only one mode property occurred due to the polynomial degree $2$ \cite{bimodal}, and also for confining yourself for modelling data set having two modes at the different degree high of peakedness can require to apply bimodality generator for Weibull distribution.

The organization of this paper is as following forms. Bimodal Weibull distribution is proposed by Section \ref{bimodalWei}. Several properties of the proposed parametric model are examined and introduced by Sections  \ref{properties} and \ref{characte}. The section \ref{inference} introduces that the model parameters are estimated by maximum $\log$ and $\log_q$ likelihood methods. Section \ref{appsection} gives numerial assessments of the proposed model and also provides the comparison between recently proposed Bimodal Gamma (BGamma) distribution \cite{Vila2020} and the model in this paper. The conclusion is given by \ref{concsect}.

\section{Bimodal Weibull distribution}\label{bimodalWei}
We say that a non-negative random variable $X$ has a bimodal Weibull distribution with vector parameter $\boldsymbol{\theta}=(\alpha,\beta,\delta)$, denoted by 
$X\sim \text{BWeibull}(\boldsymbol{\theta})$, if its p.d. function is given by
\begin{align}\label{def-Weibull}
f(x;\boldsymbol{\theta})
=
{\alpha\over \beta Z_{\boldsymbol{\theta}}}\, 
\big[1+(1-\delta x)^2\big]\, \bigg({x\over\beta}\bigg)^{\alpha-1}
\exp\Bigg[-\bigg({x\over\beta}\bigg)^{\alpha}\Bigg],
\quad x\geqslant 0; \ \alpha,\beta>0,\ \delta\in\mathbb{R},
\end{align}
where $Z_{\boldsymbol{\theta}}$ is the normalization constant given by
\begin{align}\label{partition}
Z_{\boldsymbol{\theta}}=2+
\delta^2\beta^2\Gamma\bigg(1+{2\over\alpha}\bigg)
-
2\delta\beta\Gamma\bigg(1+{1\over\alpha}\bigg),
\end{align}
which depends only on the parameter vector $\boldsymbol{\theta}$, $\alpha$ is the shape parameter, $\beta$ is the scale parameter and $\delta$ is the parameter that controls the  uni- or bimodality of the distribution. Here, $\Gamma(z)$ denotes the complete gamma function.

The behavior of $f(x;\boldsymbol{\theta})$ with $x\to 0$ or $x\to \infty$ is as follows:
\begin{align*}
\lim_{x\to 0} f(x;\boldsymbol{\theta})
=
\begin{cases}
\infty & \text{for} \quad 0<\alpha<1,
\\
\dfrac{2}{\beta Z_{\boldsymbol{\theta}}} & \text{for} \quad \alpha=1,
\\
0 & \text{for} \quad \alpha>1,
\end{cases}
\end{align*}
\begin{align*}
\lim_{x\to \infty} f(x;\boldsymbol{\theta})
=
0 \quad \forall \alpha>0.
\end{align*}
It is verified that the cumulated distribution (c.d.) function of $X\sim \text{BWeibull}(\boldsymbol{\theta})$ is given by (see Item 2 of Proposition \ref{prop-3} for more details)
\begin{align*}
F(x;\boldsymbol{\theta})
=
\dfrac{\displaystyle
	2-\big[1+(1-\delta x)^2\big]\,
	\exp\Bigg[-\bigg({x\over\beta}\bigg)^\alpha\Bigg]
	-
	{2\delta\beta\over\alpha}\,
	\bigg[ \gamma\bigg({1\over\alpha},{x^\alpha\over\beta^\alpha}\bigg)
	-
	{\delta\beta}\, \gamma\bigg({2\over\alpha},{x^\alpha\over\beta^\alpha}\bigg)
	\bigg]
}{\displaystyle
	2+
	\delta^2\beta^2\Gamma\bigg(1+{2\over\alpha}\bigg)
	-
	2\delta\beta\Gamma\bigg(1+{1\over\alpha}\bigg)},
\quad x\geqslant 0,
\end{align*}
where $\Gamma(s,x)$, $s>0$, is the upper incomplete gamma function, and $\gamma(s,x)$, $s>0$, is the lower incomplete gamma function.
Related quantities such as reliability, hazard rate and the mean residual life can be found in Subsection
\ref{Reliability-hazard rate-mean residual life}.

The p.d and c.d. functions (PDF and CDF) in Figures \ref{figpdfs} and \ref{figcdfs}  are drawn for different values of parameters.
\begin{figure}[htbp]
	\centering
	\subfigure[PDF of BWeibull for $\delta$]{\label{fig:pdf11}\includegraphics[width=0.4\textwidth]{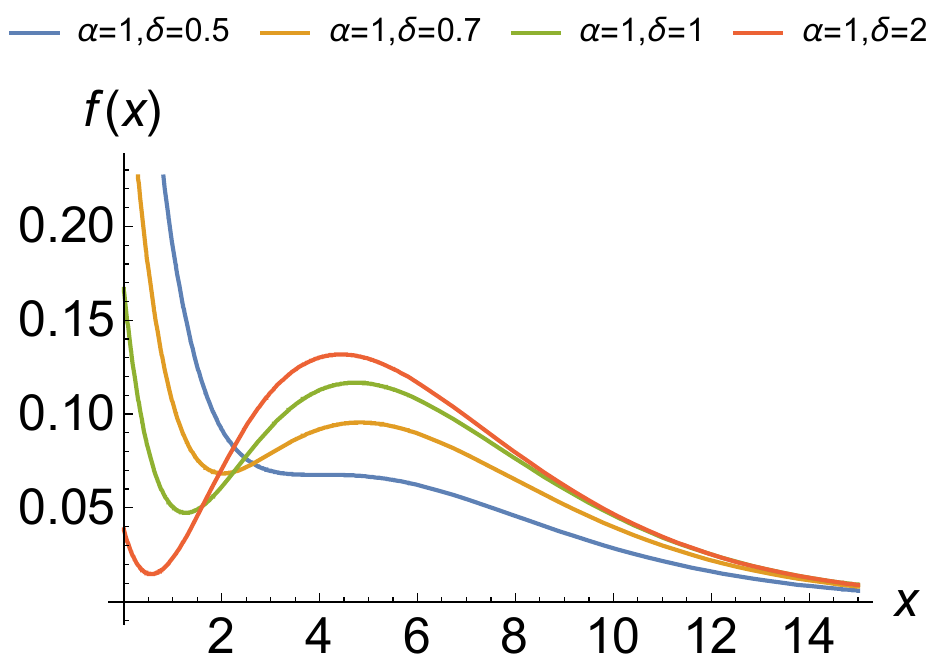}}
	\subfigure[PDF of BWeibull for  $\alpha$ and $\delta$]{\label{fig:pdf2}\includegraphics[width=0.4\textwidth]{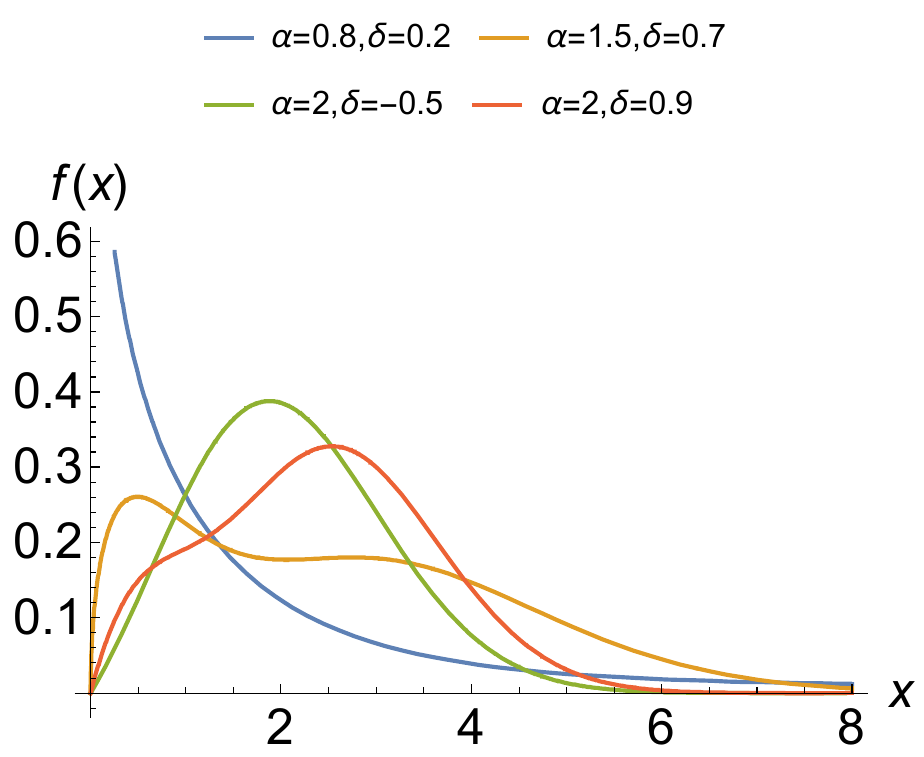}}
	\caption{PDF of BWeibull with different values of parameters $\alpha,\delta$ and $\beta=2$.}
	\label{figpdfs}
\end{figure}
\begin{figure}[htbp]
	\centering
	\subfigure[CDF of BWeibull for $\delta$]{\label{fig:cdf1}\includegraphics[width=0.4\textwidth]{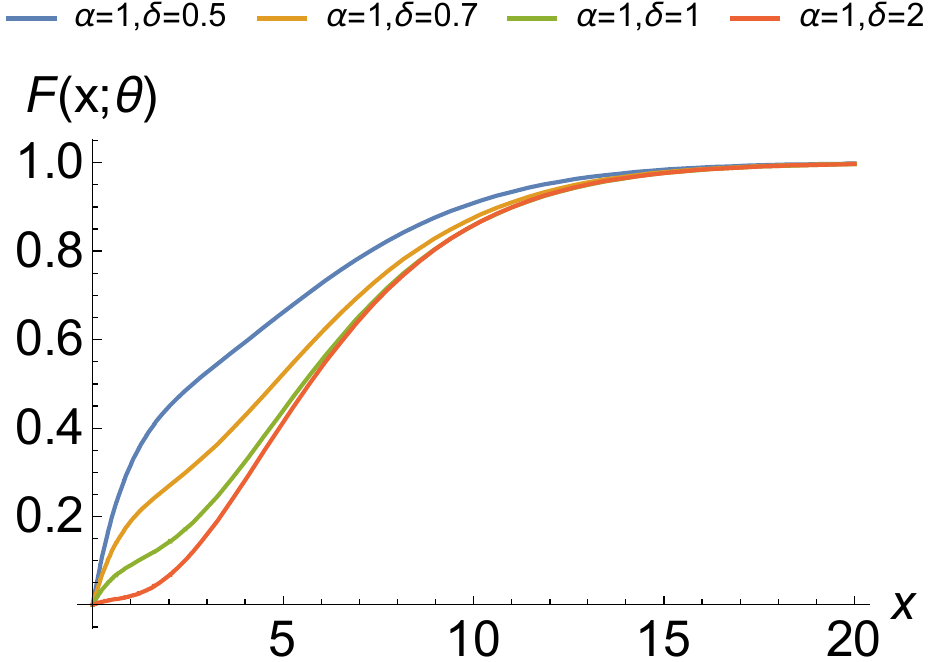}}
	\subfigure[CDF of BWeibull for  $\alpha$ and $\delta$]{\label{fig:cdf2}\includegraphics[width=0.4\textwidth]{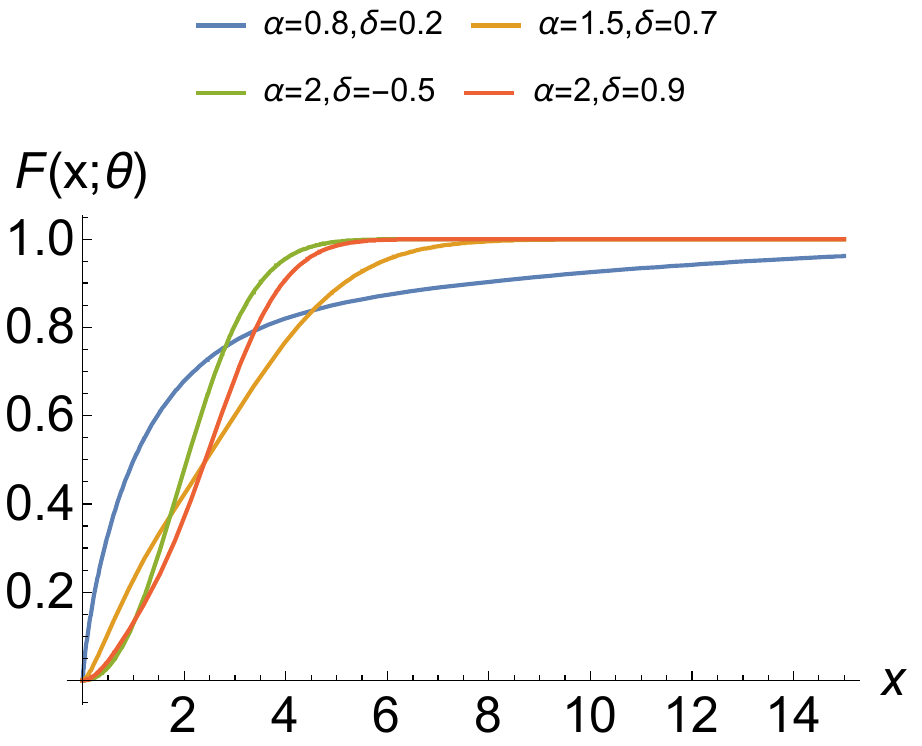}}
	\caption{CDF of BWeibull with different values of parameters $\alpha,\delta$ and $\beta=2$.}
	\label{figcdfs}
\end{figure}



\section{Unimodality and Bimodality properties}\label{properties}

\begin{proposition}\label{mode}
A point $x$ is a mode of the \text{BWeibull} density \eqref{def-Weibull}, if
and only if it is a solution of the following equation
\begin{align}\label{eq-mode}
\alpha\delta^2 x^{\alpha+2}
-
2\alpha\delta x^{\alpha+1}
+
2\alpha x^{\alpha}
-
(\alpha+1)\beta^\alpha\delta^2 x^2
+
2\alpha\beta^\alpha\delta x
-
2(\alpha-1)\beta^\alpha=0.
\end{align}
\end{proposition}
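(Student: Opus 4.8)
The plan is to reduce the mode condition $f'(x)=0$ to the stated polynomial identity by means of a logarithmic derivative. Since the bracket satisfies $1+(1-\delta x)^2\geqslant 1>0$ and the Weibull kernel $(x/\beta)^{\alpha-1}\exp[-(x/\beta)^{\alpha}]$ is strictly positive for $x>0$, the density \eqref{def-Weibull} is smooth and strictly positive on $(0,\infty)$. Consequently a point $x>0$ is a stationary point of $f$ (and hence a candidate mode, the genuine modes being the local maximizers among these) if and only if $(\log f)'(x)=0$, which is the identity I will manipulate.

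Next I would take logarithms and differentiate term by term. Up to an additive constant,
\begin{align*}
\log f(x;\boldsymbol{\theta})
=
\log\big[1+(1-\delta x)^2\big]
+
(\alpha-1)\log x
-
\bigg(\frac{x}{\beta}\bigg)^{\alpha},
\end{align*}
so that
\begin{align*}
\frac{f'(x)}{f(x)}
=
\frac{-2\delta(1-\delta x)}{1+(1-\delta x)^2}
+
\frac{\alpha-1}{x}
-
\frac{\alpha x^{\alpha-1}}{\beta^\alpha}.
\end{align*}
Setting this equal to zero and clearing denominators by multiplying through by the factor $x\,\beta^\alpha\,[1+(1-\delta x)^2]$, which is strictly positive on $(0,\infty)$, I obtain a polynomial-type equation equivalent to $f'(x)=0$. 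Expanding $1+(1-\delta x)^2=2-2\delta x+\delta^2 x^2$ and collecting the monomials in $x^{\alpha},x^{\alpha+1},x^{\alpha+2}$ together with those in $\beta^\alpha,\beta^\alpha x,\beta^\alpha x^2$ should reproduce \eqref{eq-mode} up to an overall sign.

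Because the multiplier $x\,\beta^\alpha\,[1+(1-\delta x)^2]$ never vanishes on $(0,\infty)$, both directions of the equivalence are preserved, which gives the claimed \emph{if and only if}. The main obstacle here is not conceptual but arithmetical bookkeeping: one must expand the three terms carefully and confirm that the coefficients match \eqref{eq-mode} exactly, in particular the coefficient $(\alpha+1)$ of $\beta^\alpha\delta^2 x^2$, which is the sum of the $(\alpha-1)$ contribution coming from the $(\alpha-1)\log x$ term and the $2$ coming from the logarithmic derivative of the bracket. I would finally note the boundary case separately: the endpoint behavior recorded after \eqref{partition} shows that $x=0$ is not an interior mode except in degenerate situations, consistent with \eqref{eq-mode} collapsing to $-2(\alpha-1)\beta^\alpha=0$ when $x=0$.
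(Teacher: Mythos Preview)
Your argument is correct and is exactly the routine computation the paper has in mind: the paper's own proof reads ``The proof is trivial and omitted,'' and your logarithmic-derivative manipulation, after clearing the strictly positive factor $x\,\beta^\alpha[1+(1-\delta x)^2]$ and multiplying through by $-1$, reproduces \eqref{eq-mode} verbatim. Your parenthetical remark that the equation really characterizes the critical points (with the actual modes being the local maxima among them) is a fair clarification of the proposition's slightly loose phrasing.
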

\begin{proof}
The proof is trivial and omitted.
\end{proof}

\begin{proposition}\label{prop-1}
If $\alpha> 2$ is a natural number, the following it hold:
	\begin{itemize}
		\item[1)] If $\delta<0$ then the equation \eqref{eq-mode} has an unique positive root;
		\item[2)] If $\delta> 0$ then the equation \eqref{eq-mode} has five, or three, or one positive roots.
	\end{itemize}
\end{proposition}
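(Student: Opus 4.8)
The plan is to treat the left-hand side of \eqref{eq-mode} as a genuine polynomial $P(x)$ and to count its positive roots via Descartes' Rule of Signs. Since $\alpha>2$ is a natural number, the exponents $\alpha+2$, $\alpha+1$, $\alpha$ are all non-negative integers, so $P(x)$ is a polynomial of degree $\alpha+2$ whose leading coefficient $\alpha\delta^2$ is strictly positive whenever $\delta\neq 0$. First I would list its nonzero coefficients in order of decreasing degree: the relevant monomials have degrees $\alpha+2,\alpha+1,\alpha,2,1,0$, and for $\alpha\geqslant 4$ the intermediate degrees $3,\dots,\alpha-1$ carry zero coefficients, while for $\alpha=3$ all six consecutive degrees $5,4,3,2,1,0$ appear. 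In either case the ordered string of nonzero coefficients is
\[
\big(\alpha\delta^2,\ -2\alpha\delta,\ 2\alpha,\ -(\alpha+1)\beta^\alpha\delta^2,\ 2\alpha\beta^\alpha\delta,\ -2(\alpha-1)\beta^\alpha\big).
\]

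For Item 1, with $\delta<0$ the signs of these six coefficients are $+,+,+,-,-,-$: the first three are positive (note $-2\alpha\delta>0$ because $\delta<0$) and the last three are negative (note $2\alpha\beta^\alpha\delta<0$ and $-2(\alpha-1)\beta^\alpha<0$ because $\alpha>2$). There is exactly one sign change, so Descartes' Rule of Signs forces exactly one positive root. For Item 2, with $\delta>0$ the signs become $+,-,+,-,+,-$, an alternating pattern producing five sign changes; Descartes' rule then yields $5$, $3$, or $1$ positive roots, the count dropping only by multiples of two.

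The argument is essentially bookkeeping, and the only point requiring care is confirming that Descartes' Rule of Signs is applied to the sequence of nonzero coefficients, so that the zero coefficients occurring for $\alpha\geqslant 4$ do not affect the tally; it is also worth checking that $x=0$ is not a root, which is immediate from the nonvanishing constant term $-2(\alpha-1)\beta^\alpha$. I expect no genuine obstacle beyond organizing the sign analysis and handling the small case $\alpha=3$, where no degrees are skipped, in parallel with the generic case $\alpha\geqslant 4$.
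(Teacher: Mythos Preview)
Your proposal is correct and follows exactly the approach the paper uses: the paper's entire proof is the one-line remark that the result ``follows directly by applying Descartes' rule of signs,'' and you have simply written out the sign sequences explicitly and verified the counts. Your careful handling of the zero intermediate coefficients for $\alpha\geqslant 4$ and the separate check of $\alpha=3$ are more detail than the paper provides, but the method is identical.
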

\begin{proof}
	The proof follows directly by applying Descartes' rule of signs (see, e.g. Xue 2012 \cite{Xue2012}).
\end{proof}

\begin{proposition}[Unimodality and monotonicity]
Let $X\sim \text{BWeibull}(\boldsymbol{\theta})$ with $\alpha=1$. The following hold:
\begin{itemize}
\item[1)] If $\delta< -1/\beta$ then  the \text{BWeibull} density is unimodal;
\item[2)] If $\delta>0$ then  the \text{BWeibull} density is unimodal or strictly decreasing.
\end{itemize}
\end{proposition}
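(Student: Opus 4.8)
The plan is to reduce everything to the sign of a single quadratic. First I would set $\alpha=1$ in \eqref{def-Weibull}, so that the density becomes $f(x;\boldsymbol{\theta})=\frac{1}{\beta Z_{\boldsymbol{\theta}}}\,\big[1+(1-\delta x)^2\big]\,e^{-x/\beta}$ for $x\geqslant 0$. Differentiating directly (rather than invoking \eqref{eq-mode}, whose factor $x$ is spurious when $\alpha=1$) gives $f'(x)=-\frac{e^{-x/\beta}}{\beta^2 Z_{\boldsymbol{\theta}}}\,Q(1-\delta x)$, where $Q(w)=w^2+2\beta\delta w+1$. Since the prefactor is strictly positive, the sign of $f'$ is opposite to that of $Q$ evaluated along the line $w=1-\delta x$. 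Thus the monotonicity of $f$ is governed entirely by where the trajectory $x\mapsto w=1-\delta x$, $x\in[0,\infty)$, sits relative to the roots of $Q$.

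Next I would record the elementary facts about $Q$: its discriminant equals $4(\beta^2\delta^2-1)$, the product of its roots is $1$, and their sum is $-2\beta\delta$; moreover $Q(1)=2(1+\beta\delta)$, which is the value at the endpoint $x=0$. Consequently $Q$ has two distinct real roots iff $|\delta|>1/\beta$, and when they are real they are both positive for $\delta<0$ and both negative for $\delta>0$ (from the sign of their sum), their product being $1$. I would also note that as $x$ runs over $[0,\infty)$ the value $w=1-\delta x$ increases from $1$ to $+\infty$ when $\delta<0$, and decreases from $1$ to $-\infty$ when $\delta>0$.

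For Item 1) I would take $\delta<-1/\beta$. Here $Q(1)=2(1+\beta\delta)<0$, and the two roots of $Q$ are positive with product $1$, so one lies in $(0,1)$ and the other, $w_+$, in $(1,\infty)$. Along the trajectory $w$ increases from $1$ to $+\infty$, meeting only $w_+$; hence $Q$ is negative and then positive, so $f'$ is positive and then negative. Therefore $f$ increases then decreases and is unimodal.

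For Item 2) I would take $\delta>0$ and split on $|\delta|$ versus $1/\beta$. If $0<\delta\leqslant 1/\beta$ then $Q$ has no real root (or a double root), so $Q\geqslant 0$ everywhere, giving $f'\leqslant 0$, i.e. $f$ strictly decreasing. If $\delta>1/\beta$ then both roots of $Q$ are negative; as $w$ decreases from $1$ to $-\infty$ it crosses both of them, so $Q$ is positive near $x=0$ (because $Q(1)>0$), negative between the two roots, and positive again afterwards. Correspondingly $f'$ is negative, then positive, then negative, so $f$ exhibits one interior local minimum followed by one interior local maximum, i.e. a single (interior) mode, and the density is unimodal. I expect this last subcase to be the main obstacle: one must verify that the two interior critical points are a minimum and a maximum, not two maxima, so that the shape is unimodal rather than bimodal. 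The decisive ingredient is the sign pattern of $Q$ read off along the trajectory of $w$, together with the observation that the endpoint $x=0$ is not a genuine critical point of $f$ (it is merely the spurious root contributed by the factor $x$ in \eqref{eq-mode} at $\alpha=1$) and hence is not counted as a mode.
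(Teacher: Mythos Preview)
Your argument is correct and in fact more explicit than the paper's. Both proofs reduce the problem to a quadratic governing the sign of $f'$, but the routes differ. The paper works directly with $p_2(x)=\delta^2x^2-2\delta(1+\beta\delta)x+2(1+\beta\delta)$ (obtained from \eqref{eq-mode} after dropping the factor $x$) and applies Descartes' rule of signs: for $\delta<-1/\beta$ there is exactly one positive root, and for $\delta>0$ there are two or zero; the boundary values $f(0^+)=2/(\beta Z_{\boldsymbol{\theta}})$ and $f(\infty)=0$ then force the conclusion. Your substitution $w=1-\delta x$ turns the same quadratic into the cleaner $Q(w)=w^2+2\beta\delta w+1$, and you replace Descartes' rule with Vieta's relations (product $1$, sum $-2\beta\delta$) together with the discriminant $4(\beta^2\delta^2-1)$. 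This buys you more: you obtain the exact bifurcation threshold $\delta=1/\beta$ in Item 2) that separates the strictly decreasing case from the case with two interior critical points, and you can read off directly from the sign pattern of $Q$ along the trajectory $w\in[1,\pm\infty)$ that these two critical points are a minimum followed by a maximum. The paper leaves this last point implicit (``the proof of second item follows''), whereas you justify it. Your remark that the factor $x$ in \eqref{eq-mode} is spurious at $\alpha=1$ is also well taken; the paper sidesteps this by silently passing to $p_2$.
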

\begin{proof}
Let	 $\alpha=1$. In this case note that the equation \eqref{eq-mode} can be written as
\begin{align*}
p_2(x)=\delta^2x^2-2\delta(1+\beta\delta)x+2(1+\beta\delta)=0.
\end{align*}

Firstly we assume that $\delta< -1/\beta$. By Descartes' rule of signs the above equation has a unique positive root, denoted by $x_0$. Since 
$\lim_{x\to 0} f(x;\boldsymbol{\theta})=2 /(\beta Z_{\boldsymbol{\theta}}) $ and $\lim_{x\to \infty} f(x;\boldsymbol{\theta})=0$ we have that $x_0$ is the unique maximum point with maximum value $f(x_0;\boldsymbol{\theta})>2/(\beta Z_{\boldsymbol{\theta}})$. This proves the statement in the first item. 

In the second case, $\delta>0$, Descartes' rule of signs guarantees that the equation $p_2(x)=0$ has two roots, or zero roots. Since
$\lim_{x\to 0} f(x;\boldsymbol{\theta})=2/(\beta Z_{\boldsymbol{\theta}}) $ and $\lim_{x\to \infty} f(x;\boldsymbol{\theta})=0$, the proof of second item follows.
\end{proof}

Let us define
\begin{align}\label{discriminant}
\Delta&=   256 a^{3}e^{3} - 192 a^{2}bde^{2} - 128 a^{2}c^{2}e^{2} + 144 a^{2}cd^{2}e -27 a^{2}d^{4}
\nonumber
\\&
+144 ab^{2}ce^{2} - 6 ab^ {2}d^{2}e - 80 abc^{2}de + 18 abcd^{3} + 16 ac^{4}e 
\\&
-4 ac^{3}d^{2} - 27 b^{4}e^{2} + 18 b^{3}cde - 4 b^{3}d^{3} - 4 b^{2}c^{3}e + b^{2}c^{2}d^{2}
\nonumber
\end{align}
the discriminant of the quartic polynomial $ax^{4}+bx^{3}+cx^{2}+dx+e$, where
\begin{align}\label{coefficients}
a=2\delta^2,  \quad 
b=-4\delta , \quad
c=4-3\beta^2\delta^2, \quad
d=4\beta^2\delta, \quad
e=-2\beta^2.
\end{align} 
\begin{theorem}[Bimodality and unimodality]
Let $X\sim \text{BWeibull}(\boldsymbol{\theta})$ with $\alpha=2$ and $\delta>0$. The following hold:
\begin{itemize}
	\item[1)] If $\Delta>0$ and $13/12<\beta^2\delta^2<4/3$, then the \text{BWeibull} density is bimodal;
	\item[2)] If $\Delta< 0$ and $\beta^2\delta^2\leqslant 4/{3}$, then the \text{BWeibull} density is unimodal;
	\item[3)] If $\Delta= 0$ and $\beta^2\delta^2=4/{3}$, then the \text{BWeibull} density is unimodal;
\end{itemize}
where $\Delta$ is defined by \eqref{discriminant} and \eqref{coefficients}.
\end{theorem}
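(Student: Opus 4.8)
The plan is to specialise the mode equation \eqref{eq-mode} to $\alpha=2$, where it becomes the quartic $ax^{4}+bx^{3}+cx^{2}+dx+e=0$ with the coefficients \eqref{coefficients}, and then to read off the modality from the number of its positive roots. Since $\alpha=2>1$ we have $\lim_{x\to 0}f(x;\boldsymbol{\theta})=0=\lim_{x\to\infty}f(x;\boldsymbol{\theta})$ while $f>0$ on $(0,\infty)$, so the critical points of $f$ are exactly the positive roots of the quartic, and because $f$ rises from $0$ and falls back to $0$ these must alternate maximum--minimum--maximum. Hence the density is unimodal precisely when the quartic has a single positive root and bimodal precisely when it has three positive roots; with $\alpha=2$ no monotone case can occur, so these are the only two possibilities.

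First I would fix the sign pattern of \eqref{coefficients} for $\delta>0$: one has $a>0$, $b<0$, $d>0$, $e<0$, with $c=4-3\beta^{2}\delta^{2}$ of either sign. In every case Descartes' rule of signs (as already invoked in Proposition \ref{prop-1}) records three sign changes, hence one or three positive roots, whereas the reflected polynomial records a single sign change, hence exactly one negative root. The point of this step is that the quartic always has at least one real root, so it can never split into two complex-conjugate pairs, and the count of positive roots equals the number of real roots minus one.

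This reduces the whole problem to the sign of the discriminant $\Delta$ in \eqref{discriminant}. If $\Delta>0$ the quartic has four distinct real roots; removing the unique negative one leaves three simple positive roots, at each of which $f'$ changes sign, so $f$ is bimodal. If $\Delta<0$ it has two real and two complex roots, i.e. a single positive root, so $f$ is unimodal. If $\Delta=0$ the coincident root is a point where $f'$ vanishes without changing sign (an inflection with horizontal tangent), which contributes no extremum, so $f$ is again unimodal. To attach the announced ranges of $\beta^{2}\delta^{2}$ I would substitute \eqref{coefficients} into \eqref{discriminant} and exploit homogeneity to write $\Delta$ as a positive multiple of a one-variable polynomial in $t=\beta^{2}\delta^{2}$; locating the sign of that polynomial on the intervals $13/12<t<4/3$ and $t\leqslant 4/3$, and at the degenerate value $t=4/3$ (where $c=0$), would then deliver the three items.

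The main obstacle is exactly this last reduction. Pushing the discriminant through the substitution \eqref{coefficients} is heavy algebra, and the delicate part is to check that the resulting sign of $\Delta$ genuinely matches the stated thresholds in $\beta^{2}\delta^{2}$, together with the boundary analysis at $\Delta=0$ that the repeated root is of inflection type and not a second peak. By contrast, once four distinct real roots are in hand the max--minimum--max interlacing is automatic, so the difficulty is confined to the discriminant bookkeeping and to the degenerate case of item~3.
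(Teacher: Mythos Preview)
Your overall strategy is the same as the paper's: reduce \eqref{eq-mode} with $\alpha=2$ to the quartic with coefficients \eqref{coefficients}, use Descartes' rule to count positive and negative roots, and let the sign of $\Delta$ separate the real/complex configurations. Your observation that $p_4(-x)$ always has exactly one sign change, hence the quartic has a guaranteed negative real root, is in fact sharper than what the paper does in item~1: the paper introduces the auxiliary invariants $P=8ac-3b^2$ and $D=64a^3e-16a^2c^2+16ab^2c-16a^2bd-3b^4$ and uses $P<0,\ D<0$ (which is where the lower bound $\beta^2\delta^2>13/12$ enters) to exclude the ``two complex pairs'' subcase of $\Delta>0$. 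Your negative-root argument kills that subcase for free, so the range $13/12<\beta^2\delta^2<4/3$ is actually not needed in your version of item~1, and item~2 goes through exactly as you wrote.

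The genuine misreading is the role of the $\beta^2\delta^2$ thresholds. They are \emph{hypotheses} standing next to the sign of $\Delta$, not a description of where $\Delta$ takes that sign; there is nothing to ``attach'' by expanding \eqref{discriminant}. In the paper these thresholds feed the auxiliary invariants $P,D,\Delta_0$, not $\Delta$ itself, so the heavy discriminant computation you flag as the main obstacle is simply not part of the argument.

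Where your sketch does have a gap is item~3. A repeated root of odd multiplicity is a sign change of $f'$ and hence a genuine extremum, not an inflection; your blanket claim that ``the coincident root is a point where $f'$ vanishes without changing sign'' is false precisely in the case that occurs here. The paper handles $\Delta=0$ with $\beta^2\delta^2=4/3$ by computing $\Delta_0=c^2-3bd+12ae=0$ and $D\neq 0$, which forces the root pattern ``one triple root $+$ one simple root, all real''. Combined with the Descartes count this makes the positive root a triple root, hence the unique maximum, and unimodality follows. So for item~3 you should replace the inflection argument by this $(\Delta_0,D)$ computation.
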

\begin{proof}
Let $x$ be a mode of the \text{BWeibull} density. By Proposition \ref{mode} this mode satisfies the equation \eqref{eq-mode}, with $\alpha=2$, or equivalently this one is solution of the quartic equation 
\begin{align*}
p_4(x)=ax^{4}+bx^{3}+cx^{2}+dx+e=0
\end{align*}
with real coefficients defined in \eqref{coefficients}. By using that $\delta>0$, in  both cases, $\beta^2\delta^2<4/{3}$ or $\beta^2\delta^2=4/{3}$ (in this case, $c=0$), Descartes' rule of signs guarantees that:
\begin{align}\label{aff}
\text{The polynomial} \ p_4(x) \ \text{has three, or one positive roots and one negative root.}
\end{align}

To check Items 1), 2) and 3), we first define the following quantities:
\begin{align*}
&P=8ac-3b^2
=
4\delta^2(13-12\beta^2\delta^2);
\\
&D=64a^3e-16a^2c^2+16ab^2c-16a^2bd-3b^4
=
64(4-9\beta^4\delta^4);
\\
&\Delta_0=c^2-3bd+12ae
=16+9\beta^4\delta^4-24\beta^2\delta^2.
\end{align*}

1)
If $\Delta>0$ then either the equation's four roots $p_4(x)=0$ are all real or none is.
Since
$P < 0$ and $D < 0$ whenever $\beta^2\delta^2>13/12$ and $\beta^2\delta^2>2/3$, respectively,  then all four roots of $p_4(x)$ are real and distinct. Then, from affirmation \eqref{aff} it follows that the equation $p_4(x)=0$ has three distinct positive roots, denoted by $x_1, x_2, x_3$, and one negative root. Let's assume that $x_1<x_2<x_3$. Since
$\lim_{x\to 0} f(x;\boldsymbol{\theta})= \lim_{x\to \infty} f(x;\boldsymbol{\theta})=0$ we have that $x_1$ and $x_3$ are two
maximum points and $x_1$ is the unique minimum point. This proves the bimodality property in the first item.

\smallskip
2)
If $\Delta<0$ then the equation  $p_4(x)=0$ has two distinct real roots and two complex conjugate non-real roots. Then, from \eqref{aff} it follows that the equation $p_4(x)=0$ has one positive root, denoted by $x_0$, and one negative root. Since
$\lim_{x\to 0} f(x;\boldsymbol{\theta})= \lim_{x\to \infty} f(x;\boldsymbol{\theta})=0$, it follows that  $x_0$ is the unique maximum point. This proves the second item.

\smallskip
3)
If $\Delta=0$ then the polynomial  $p_4(x)$ has a multiple root. 
Since $\Delta_0 = 0$ and $D=-768 \neq 0$ when $\beta^2\delta^2=4/{3}$, there are a triple root and a simple root, all real. Then, by affirmation \eqref{aff},  the equation $p_4(x)=0$ has a positive triple root, denoted by $x_0$, and one negative root.
Since
$\lim_{x\to 0} f(x;\boldsymbol{\theta})= \lim_{x\to \infty} f(x;\boldsymbol{\theta})=0$ we have that $x_0$ is the unique maximum point. This completes the proof of third item.
\end{proof}

\nopagebreak 
\section{Characteristics of Probability Denstiy Function}\label{characte}
\subsection{Real moments}
\begin{proposition} \label{moments}
	If $X\sim \text{BWeibull}(\boldsymbol{\theta})$ then
\begin{align*}
\mathbb{E}(X^r)
=
\beta^r\,
\dfrac
{\displaystyle 
2\Gamma\bigg(1+{r\over\alpha}\bigg)+\delta^2\beta^2\Gamma\bigg(1+{r+2\over\alpha}\bigg)
-
2\delta\beta\Gamma\bigg(1+{r+1\over\alpha}\bigg)}
{\displaystyle 
2+\delta^2\beta^2\Gamma\bigg(1+{2\over\alpha}\bigg)
-
2\delta\beta\Gamma\bigg(1+{1\over\alpha}\bigg)}, \quad r>-\alpha,
\end{align*}
where $\Gamma(z)$ is the complete gamma function.
\end{proposition}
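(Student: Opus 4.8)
The plan is to compute $\mathbb{E}(X^r)=\int_0^\infty x^r f(x;\boldsymbol{\theta})\,\ud x$ directly from the definition \eqref{def-Weibull}, exploiting the fact that the bimodality kernel $1+(1-\delta x)^2$ is a quadratic polynomial in $x$. First I would expand this kernel as $1+(1-\delta x)^2 = 2 - 2\delta x + \delta^2 x^2$, so that the integrand splits into three terms, each of which is $x^{r+k}$ (for $k=0,1,2$) multiplied by the ordinary Weibull weight $\tfrac{\alpha}{\beta}(x/\beta)^{\alpha-1}\exp[-(x/\beta)^\alpha]$. By linearity of the integral,
\begin{align*}
\mathbb{E}(X^r) = \frac{1}{Z_{\boldsymbol{\theta}}}\big[2 I_0 - 2\delta I_1 + \delta^2 I_2\big], \qquad I_k := \int_0^\infty x^{r+k}\,\frac{\alpha}{\beta}\bigg(\frac{x}{\beta}\bigg)^{\alpha-1}\exp\bigg[-\bigg(\frac{x}{\beta}\bigg)^\alpha\bigg]\,\ud x.
\end{align*}

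The key step is then to evaluate the generic integral $I_k$ in closed form. I would use the substitution $u = (x/\beta)^\alpha$, which gives $\ud u = \tfrac{\alpha}{\beta}(x/\beta)^{\alpha-1}\,\ud x$ and $x^{r+k} = \beta^{r+k} u^{(r+k)/\alpha}$, turning $I_k$ into the standard gamma integral $I_k = \beta^{r+k}\int_0^\infty u^{(r+k)/\alpha} e^{-u}\,\ud u = \beta^{r+k}\,\Gamma\!\big(1 + \tfrac{r+k}{\alpha}\big)$. Substituting $I_0, I_1, I_2$ back and factoring out $\beta^r$ yields exactly the claimed expression, with numerator $2\Gamma(1+r/\alpha) - 2\delta\beta\,\Gamma(1+(r+1)/\alpha) + \delta^2\beta^2\,\Gamma(1+(r+2)/\alpha)$ and denominator $Z_{\boldsymbol{\theta}}$ given by \eqref{partition}.

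The only point requiring care — and the closest thing to an obstacle — is the range of validity. The gamma integral for $I_k$ converges precisely when $1 + (r+k)/\alpha > 0$, i.e. $r > -\alpha - k$; since $k\in\{0,1,2\}$ and $\alpha,\beta,\delta$ enter only through the finite coefficients, the binding constraint comes from $k=0$, giving the stated condition $r > -\alpha$. I would note that under this condition all three integrals converge simultaneously (integrability near $0$ is governed by the smallest power $x^{r+\alpha-1}$, so $r+\alpha>0$ suffices there, while decay at $\infty$ is guaranteed by the exponential factor for every $\alpha>0$), which justifies the termwise integration and establishes the formula. Setting $\delta = 0$ recovers the moments of the ordinary Weibull distribution, a useful sanity check.
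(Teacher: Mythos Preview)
Your proposal is correct and essentially mirrors the paper's own proof: the paper writes $\mathbb{E}(X^r)=\frac{1}{Z_{\boldsymbol{\theta}}}\big[2\mathbb{E}(Y^r)+\delta^2\mathbb{E}(Y^{r+2})-2\delta\mathbb{E}(Y^{r+1})\big]$ with $Y\sim\text{Weibull}(\alpha,\beta)$ and then quotes the Weibull moment formula $\mathbb{E}(Y^r)=\beta^r\Gamma(1+r/\alpha)$, which is exactly your $I_k$ computed via the same change of variables. Your discussion of the convergence condition $r>-\alpha$ is in fact slightly more explicit than the paper's.
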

\begin{proof}
A simple observation shows that
\begin{align*}
\mathbb{E}(X^r)
=
{1\over Z_{\boldsymbol{\theta}}}\, 
\left[2\mathbb{E}(Y^r)+\delta^2\mathbb{E}(Y^{r+2})-2\delta\mathbb{E}(Y^{r+1})\right],
\end{align*}
where $Y\sim \text{Weibull}(\alpha,\beta)$ and $Z_{\boldsymbol{\theta}}$ is as in \eqref{partition}. 
By using the following formula of Item (6) in  \cite{Cankaya2018}: 
\begin{align*}
\Gamma\bigg(s+{1\over\alpha}\bigg)
=
\alpha p^{\alpha s+1} 
\int_{0}^{\infty} 
y^{\alpha s}\, \exp\big[-(yp)^\alpha\big]
\, {\rm d}y,
\end{align*}
we have 
\begin{align*}
\mathbb{E}(Y^r)
=
\beta^r \Gamma\bigg(1+{r\over\alpha}\bigg), \quad r>-\alpha.
\end{align*}
By combining the above identities, the proof follows.
\end{proof}

As a consequence of the above proposition, the closed expressions for the moments, variance, skewness and kurtosis of random variable $X$ are easily obtained. The Figures \ref{fig:sk} and \ref{fig:ku} give the skewness and kurtosis for the different values of bimodality parameter $\delta$, respectively.  For the tried values of $\alpha$ and $\beta$, we observe that when the value of  $\delta$  are not large, we get a large scale of skewness and kurtosis. The Figure \ref{figskekur} represents that the parameter $\delta$ increases the flexiblity of function.
\begin{figure}[htb!]
	\centering
	\subfigure[Skewness of BWeibull for $\alpha$]{\label{fig:sk}\includegraphics[width=0.4\textwidth]{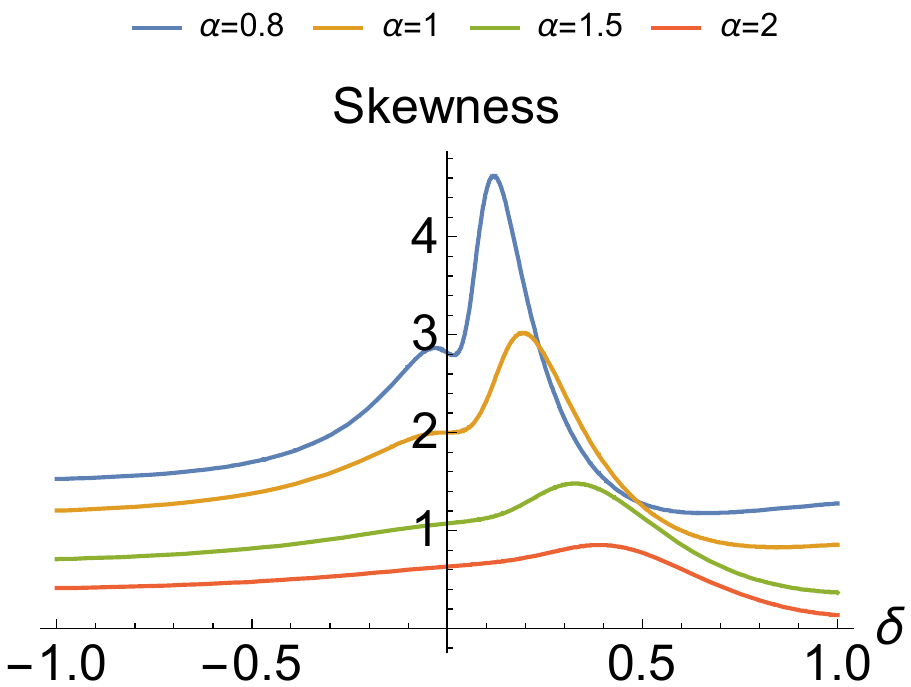}}
	\subfigure[Kurtosis of BWeibull for  $\alpha$]{\label{fig:ku}\includegraphics[width=0.4\textwidth]{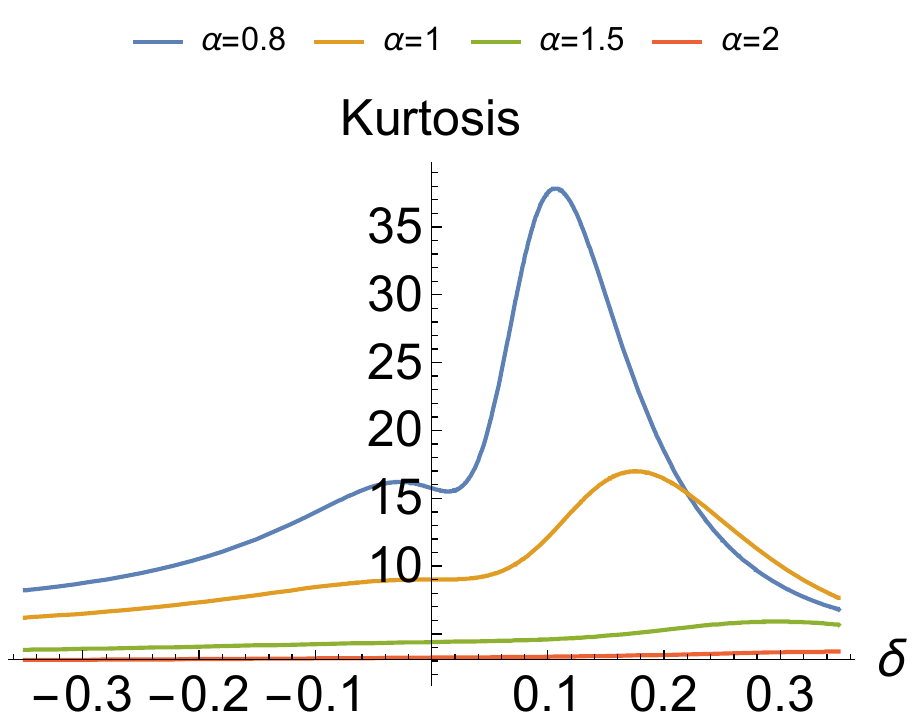}}
	\caption{Skewness and kurtosis of BWeibull with different values of $\alpha$ and $\beta=2$.}
	\label{figskekur}
\end{figure}

\subsection{Moment generating function}
\begin{theorem}\label{MGF}
	If $X\sim\text{BWeibull}(\boldsymbol{\theta})$ then the moment generating function $M_X(t)=\mathbb{E}[\exp({tX})]$ can be expressed as
{\scalefont{0.955}
\begin{align*}
M_X(t)=
\begin{cases}
\displaystyle
\sum_{n=0}^{\infty} {(\beta t)^n}\, 
\dfrac{1+\dfrac{1}{2}\,  {\delta^2\beta^2}(n+2)(n+1)-\delta\beta(n+1)}{1+\delta^2\beta^2-\delta\beta}& \text{for} \ \alpha=1 \ \text{and} \ \vert t\vert<1/\beta,
\\[0,4cm]
\displaystyle
\sum_{n=0}^{\infty} {(\beta t)^n\over n!}\,
\dfrac
{\displaystyle 
	2\Gamma\bigg(1+{n\over\alpha}\bigg)+\delta^2\beta^2\Gamma\bigg(1+{n+2\over\alpha}\bigg)
	-
	2\delta\beta\Gamma\bigg(1+{n+1\over\alpha}\bigg)}
{\displaystyle 
	2+\delta^2\beta^2\Gamma\bigg(1+{2\over\alpha}\bigg)
	-
	2\delta\beta\Gamma\bigg(1+{1\over\alpha}\bigg)}
& \text{for} \ \alpha>1 \ \text{and} \ t\in\mathbb{R}.
\end{cases}
\end{align*}
} 
\end{theorem}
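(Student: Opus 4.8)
The plan is to expand the exponential as $\exp(tX)=\sum_{n=0}^{\infty}(tX)^n/n!$ and to interchange the series with the expectation. Because $X\geqslant 0$, for $t\geqslant 0$ each summand $t^nX^n/n!$ is nonnegative, so Tonelli's theorem yields
\[
M_X(t)=\mathbb{E}[\exp(tX)]=\sum_{n=0}^{\infty}\frac{t^n}{n!}\,\mathbb{E}(X^n),
\]
the two sides being simultaneously finite or infinite. On the symmetric interval where this series converges absolutely the same identity persists for $t<0$, since there a finite moment generating function is analytic in a neighbourhood of the origin with Taylor coefficients $\mathbb{E}(X^n)/n!$. Inserting the moment formula of Proposition \ref{moments} with $r=n$ (admissible as $n\geqslant 0>-\alpha$) and absorbing $\beta^n$ into $(\beta t)^n$ reproduces verbatim the branch stated for $\alpha>1$.

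For $\alpha=1$ I would specialise the gamma factors through $\Gamma(1+m)=m!$: thus $\Gamma(1+n/\alpha)=n!$, $\Gamma(1+(n+1)/\alpha)=(n+1)!$, $\Gamma(1+(n+2)/\alpha)=(n+2)!$, and the denominator collapses to $2+2\delta^2\beta^2-2\delta\beta$. Cancelling $n!$ by means of $(n+1)!/n!=n+1$ and $(n+2)!/n!=(n+2)(n+1)$, and dividing numerator and denominator by $2$, turns the general term into $(\beta t)^n[1+\tfrac12\delta^2\beta^2(n+2)(n+1)-\delta\beta(n+1)]/(1+\delta^2\beta^2-\delta\beta)$, which is precisely the first branch.

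The remaining and genuinely delicate step is to pin down the set of admissible $t$, where the two regimes diverge; this I expect to be the main obstacle. Applying the ratio test to $a_n=(t^n/n!)\,\mathbb{E}(X^n)$, the radius is governed by the growth of $\mathbb{E}(X^n)/n!$. When $\alpha=1$ and $\delta\neq 0$ the factorial term dominates, so $a_n$ behaves like $(\beta t)^n(n+2)(n+1)$ and $|a_{n+1}/a_n|\to|\beta t|$; hence the series converges exactly for $|t|<1/\beta$ and diverges beyond, in agreement with the light exponential tail $\exp(-x/\beta)$ of the $\alpha=1$ density. When $\alpha>1$ I would use the Stirling estimate for $\Gamma(1+n/\alpha)/n!$, whose logarithm is asymptotic to $n(1/\alpha-1)\log n$ with a strictly negative leading coefficient; consequently $|a_{n+1}/a_n|\to 0$, the series is entire, and $M_X(t)$ is finite for every $t\in\mathbb{R}$, mirroring the super-exponentially light Weibull tail $\exp(-(x/\beta)^\alpha)$.
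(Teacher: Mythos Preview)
Your proof is correct and follows essentially the same route as the paper: expand $\exp(tX)$ as a power series, interchange summation and expectation, plug in the moment formula of Proposition~\ref{moments}, and determine the admissible $t$ via the ratio test. The only cosmetic differences are that the paper justifies the interchange for all $t$ at once via Monotone Convergence plus Fubini on the absolute series, whereas you invoke Tonelli for $t\geqslant 0$ and then analyticity for $t<0$; and the paper computes the ratio $\Gamma((n+1+k)/\alpha)/\Gamma((n+k)/\alpha)$ directly for $k=0,1,2$ rather than appealing to Stirling.
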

\begin{proof}
Using the series expansion of $\exp({tX})$ we have
\begin{align*}
	M_X(t)=\mathbb{E}\Bigg[\sum_{n=0}^{\infty} {(tX)^n\over n}\Bigg].
\end{align*}
Assuming the validity of the following identity
\begin{align}\label{objective}
\mathbb{E}\Bigg[\sum_{n=0}^{\infty} {(tX)^n\over n}\Bigg]
=
\sum_{n=0}^{\infty} {t^n\over n!}\, \mathbb{E}(X^n),
\quad \alpha\geqslant 1,
\end{align}
note that by using Proposition \ref{moments} and the identity $\Gamma(n)=(n-1)!$, for $n=1,2,3,\ldots$ (in the case $\alpha=1$); the proof of the proposition follows.

\smallskip 
In what remains of the proof we prove the identity \eqref{objective}.
Indeed, from Monotone Convergence Theorem \cite{knapp} we have
\begin{align}\label{first-relation}
\mathbb{E}\left[
\sum_{n=0}^{\infty} \bigg\vert{(tX)^n\over n}\bigg\vert\right]
=
\sum_{n=0}^{\infty} \mathbb{E}\Bigg[\bigg\vert{(tX)^n\over n}\bigg\vert\Bigg]
=
\sum_{n=0}^{\infty} 
{\vert t\vert^n\over n!}\, \mathbb{E}(X^n),
\end{align}
where, by Proposition \ref{moments}, the series $\sum_{n=0}^{\infty} 
{\vert t\vert^n\over n!}\, \mathbb{E}(X^n)$ is written as
\begin{align}\label{cond-series}
\sum_{n=0}^{\infty} {\vert t\vert^n\over n!}\, \mathbb{E}(X^n)
=
\sum_{n=0}^{\infty} {(\beta\vert t\vert)^n\over n!}\,
\dfrac
{\displaystyle 
	2\Gamma\bigg(1+{n\over\alpha}\bigg)+\delta^2\beta^2\Gamma\bigg(1+{n+2\over\alpha}\bigg)
	-
	2\delta\beta\Gamma\bigg(1+{n+1\over\alpha}\bigg)}
{\displaystyle 
	2+\delta^2\beta^2\Gamma\bigg(1+{2\over\alpha}\bigg)
	-
	2\delta\beta\Gamma\bigg(1+{1\over\alpha}\bigg)}.
\end{align}

Note that the series $u(k)$ defined by
\begin{align*}
u(k)=
\sum_{n=0}^{\infty} a_n
=
\sum_{n=0}^{\infty} 
{(\beta\vert t\vert)^n\over n!}\,
\Gamma\bigg(1+{n+k\over\alpha}\bigg),
\quad	\quad k=0,1,2,
\end{align*}
converges when $\alpha\geqslant 1$, because by the ratio test the limit 
\begin{align*}
L&=\lim_{n\to\infty}\left\vert {a_{n+1}\over a_n}\right\vert
\\
&=
{\beta\vert t\vert}
\lim_{n\to\infty}
\bigg({1\over n+k}\bigg)\, \bigg(1+{k\over n+1}\bigg)\, 
{\displaystyle \Gamma\bigg({n+1+k\over\alpha}\bigg)\over\displaystyle \Gamma\bigg({n+k\over\alpha}\bigg)},
\quad k=0,1,2,
\\
&=
\begin{cases}
0 & \text{for} \quad \alpha>1,
\\
{\beta \vert t\vert} & \text{for} \quad \alpha=1,
\\
\infty & \text{for} \quad 0<\alpha<1,
\end{cases}
\end{align*}
is less than 1  when $\alpha>1$ $\forall t\in\mathbb{R}$; and $\alpha=1$ $\forall \vert t\vert<1/\beta$. 

Therefore, the series \eqref{cond-series} 
\begin{align*}
&
\sum_{n=0}^{\infty} {\vert t\vert^n\over n!}\, \mathbb{E}(X^n)
\\[0,2cm]
&=
\dfrac
{\displaystyle 
	2\sum_{n=0}^{\infty} {(\beta\vert t\vert)^n\over n!}\,
	\Gamma\bigg(1+{n\over\alpha}\bigg)
	+
	\delta^2\beta^2
	\sum_{n=0}^{\infty} {(\beta\vert t\vert)^n\over n!}\,
	\Gamma\bigg(1+{n+2\over\alpha}\bigg)
	-
	2\delta\beta
	\sum_{n=0}^{\infty} {(\beta\vert t\vert)^n\over n!}\,
	\Gamma\bigg(1+{n+1\over\alpha}\bigg)}
{\displaystyle 
	2+\delta^2\beta^2\Gamma\bigg(1+{2\over\alpha}\bigg)
	-
	2\delta\beta\Gamma\bigg(1+{1\over\alpha}\bigg)}
\\[0,2cm]
&=
\dfrac
{\displaystyle 
	2u(0)+\delta^2\beta^2u(2)
	-
	2\delta\beta u(1)}
{\displaystyle 
	2+\delta^2\beta^2\Gamma\bigg(1+{2\over\alpha}\bigg)
	-
	2\delta\beta\Gamma\bigg(1+{1\over\alpha}\bigg)}
\end{align*}
converges for $\alpha\geqslant 1$, and then
\begin{align*}
\mathbb{E}\left[
\sum_{n=0}^{\infty} \bigg\vert{(tX)^n\over n}\bigg\vert\right]
\stackrel{\eqref{first-relation}}{=}
\sum_{n=0}^{\infty} {\vert t\vert^n\over n!}\, \mathbb{E}(X^n)<\infty,
\end{align*}
when $\alpha\geqslant 1$.
So, applying Fubini's Theorem \cite{knapp} we get
\begin{align*}
M_X(t)=\mathbb{E}\Bigg[\sum_{n=0}^{\infty} {(tX)^n\over n}\Bigg]
=
\sum_{n=0}^{\infty} {t^n\over n!}\, \mathbb{E}(X^n),
\end{align*}
whenever $\alpha\geqslant 1$.
This proves \eqref{objective}.
We thus complete the proof.
\end{proof}

\begin{corollary}[Light-tailed distribution] If $X\sim\text{BWeibull}(\boldsymbol{\theta})$ and $\alpha\geqslant 1$, then there exists $t_0 > 0$ such that $\mathbb{P}(X > x)\leqslant \exp(-t_0x)$ for $x$ large enough.

\end{corollary}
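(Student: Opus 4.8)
The plan is to exploit the moment generating function computed in Theorem \ref{MGF} together with the Chernoff bound. The role of the hypothesis $\alpha\geqslant 1$ is precisely to guarantee a neighborhood of the origin on which $M_X$ is finite: for $\alpha>1$ the theorem gives $M_X(t)<\infty$ for every $t\in\mathbb{R}$, while for $\alpha=1$ it gives finiteness for all $\vert t\vert<1/\beta$. In either case I can fix a strictly positive value $t_1>0$ with $M_X(t_1)<\infty$ (for instance any $t_1\in(0,1/\beta)$ when $\alpha=1$, and any $t_1>0$ when $\alpha>1$).

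First I would apply Markov's inequality to the nonnegative random variable $\exp(t_1 X)$. Since $x\mapsto\exp(t_1 x)$ is increasing, the event $\{X>x\}$ coincides with $\{\exp(t_1 X)>\exp(t_1 x)\}$, and Markov's inequality yields
\[
\mathbb{P}(X>x)\leqslant \exp(-t_1 x)\,\mathbb{E}[\exp(t_1 X)]=M_X(t_1)\,\exp(-t_1 x).
\]
Writing $C=M_X(t_1)<\infty$, this is already an exponential tail bound $\mathbb{P}(X>x)\leqslant C\exp(-t_1 x)$; the only remaining work is to absorb the multiplicative constant $C$ into the exponent.

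To do so I would pick any $t_0$ with $0<t_0<t_1$, say $t_0=t_1/2$. Then $C\exp(-t_1 x)\leqslant \exp(-t_0 x)$ is equivalent to $C\leqslant \exp[(t_1-t_0)x]$, and since $t_1-t_0>0$ the right-hand side tends to $+\infty$ as $x\to\infty$. Hence the inequality holds for every $x$ exceeding the threshold $x_0=\max\{0,(\ln C)/(t_1-t_0)\}$, which gives $\mathbb{P}(X>x)\leqslant \exp(-t_0 x)$ for $x$ large enough, as claimed.

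I expect no genuine obstacle here: the substantive content, namely finiteness of the moment generating function near the origin, is already supplied by Theorem \ref{MGF}, and the remainder is the standard Chernoff argument. The only point requiring a little care is the final step of trading the constant $C$ for a slightly smaller decay rate $t_0<t_1$, which is exactly why the statement asserts the bound only for $x$ large enough rather than for all $x\geqslant 0$.
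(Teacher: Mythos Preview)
Your proof is correct and follows exactly the approach the paper intends: invoke Theorem \ref{MGF} to obtain finiteness of $M_X$ at some positive argument, then apply the Chernoff bound. The paper's own proof is actually less detailed than yours---it simply notes the existence of a $t_0$ with $M_X(t_0)<\infty$ and then says ``the proof follows,'' leaving the Markov/Chernoff step and the absorption of the constant implicit, whereas you spell these out carefully.
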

\begin{proof}
In the first case $\alpha= 1$, by Theorem \ref{MGF}, there exists $\vert t_0\vert < 1/\beta$ such that $M_X(t_0)<\infty$, $X\sim\text{BWeibull}(\boldsymbol{\theta})$. 
In the second case $\alpha> 1$, again, by Theorem \ref{MGF}, there exists $t_0\in\mathbb{R}$ such that $M_X(t_0)<\infty$.
Then the proof follows.
\end{proof}

\begin{remark}
Let $X$ a continuous random variable with density function $f_X(x)$. Following the reference \cite{Klugman1998}, the rate of a random variable is given by
\begin{align*}
\tau_X=-\lim_{x\to\infty} {{\rm d} \log\big[f_X(x)\big]\over {\rm d}x}.
\end{align*}
A simple computation shows that
\begin{align*}
\tau_{\rm BWeibull(\boldsymbol{\theta})}
=
\lim_{x\to\infty}\left[\dfrac{2\delta(1-\delta x)}{1+(1-\delta x)^2}-(\alpha-1)\, {1\over x}+{\alpha\over\beta^\alpha}\, x^{\alpha-1}\right]
=
\begin{cases}
{1\over\beta} & {\rm for} \quad \alpha=1,
\\
\infty & {\rm for} \quad \alpha>1,
\\
0 & {\rm for} \quad \alpha<1.
\end{cases}
\end{align*}
Then, far enough out in the tail, every BWeibull distribution looks like
an exponential distribution when $\alpha=1$ and a Normal distribution when $\alpha>1$.
In addition, we have some comparisons between the rates of random variables with known distributions: Inverse-gamma, Log-normal, Generalized-Pareto, BWeibull, BGamma \cite{Vila2020}, exponential 
and Normal;
\begin{align*}
\tau_{{\rm InvGamma}(\alpha,\beta)}
=
\tau_{{\rm LogNorm}(\mu,\sigma^2)}
&=
\tau_{{\rm GenPareto}(\alpha,\beta, \xi)}
= 
\tau_{\rm BWeibull(\alpha<1,\beta,\delta)}=0
\\
&<
\tau_{\rm BWeibull(\alpha=1,\beta,\delta)}=
\tau_{{\rm BGamma}(\alpha,1/\beta,\delta)}=
\tau_{{\rm exp}(1/\beta)}=1/\beta
\\
&<
\tau_{\rm BWeibull(\alpha>1,\beta,\delta)}
=
\tau_{{\rm Normal}(\mu,\sigma^2)}=\infty.
\end{align*}
\end{remark}

By using the very well-known  relation $M_X(t) =
\phi_X(-it)$ between moment generating function and characteristic function of $X\sim \text{BWeibull}(\boldsymbol{\theta})$, the following result follows immediately.
\begin{proposition}
	If $X\sim \text{BWeibull}(\boldsymbol{\theta})$ then the characteristic function $\phi_X(t)=\mathbb{E}[\exp({itX})]$ can be expressed as
	\begin{align*}
	\phi_X(t)=\sum_{n=0}^{\infty} {(i\beta t)^n\over n!}\,
	\dfrac
	{\displaystyle 
		2\Gamma\bigg(1+{n\over\alpha}\bigg)+\delta^2\beta^2\Gamma\bigg(1+{n+2\over\alpha}\bigg)
		-
		2\delta\beta\Gamma\bigg(1+{n+1\over\alpha}\bigg)}
	{\displaystyle 
		2+\delta^2\beta^2\Gamma\bigg(1+{2\over\alpha}\bigg)
		-
		2\delta\beta\Gamma\bigg(1+{1\over\alpha}\bigg)},
	\quad t\in\mathbb{R}.
	\end{align*} 
\end{proposition}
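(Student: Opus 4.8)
The plan is to exploit the elementary relation $\phi_X(t)=M_X(it)$, which is exactly the substitution $t\mapsto it$ announced in the text just before the statement, together with the power series for $M_X$ established in Theorem \ref{MGF}. First I would record that for $\alpha>1$ the ratio-test computation carried out in the proof of Theorem \ref{MGF} gives the limit $L=0$, so the moment generating series has infinite radius of convergence and $M_X$ extends to an entire function; substituting the complex argument $it$ then produces the claimed series term by term. For $\alpha=1$ the same computation gives $L=\beta|t|$, so the representation is valid only on the disk $|t|<1/\beta$, and the assertion ``$t\in\mathbb{R}$'' in the statement should be read as referring to the case $\alpha>1$.

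Rather than lean on analytic continuation, I would prefer to mirror the (already careful) argument of Theorem \ref{MGF} directly with $it$ in place of $t$, since the characteristic function exists for every real $t$ with no integrability hypothesis. Writing $\phi_X(t)=\mathbb{E}\big[\exp(itX)\big]=\mathbb{E}\big[\sum_{n=0}^{\infty}(itX)^n/n!\big]$, the only real content is the interchange of $\mathbb{E}$ and $\sum$, which I would justify by Fubini's theorem precisely as before. The key observation is that, because $X\geqslant 0$ and $|i|=1$, the absolute-convergence condition reduces to $\sum_{n=0}^{\infty}\frac{|t|^n}{n!}\,\mathbb{E}(X^n)<\infty$, which is literally the series \eqref{cond-series} already shown to converge (for $\alpha\geqslant 1$, and for $|t|<1/\beta$ when $\alpha=1$) in the proof of Theorem \ref{MGF}. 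Hence no new convergence analysis is needed: the earlier estimate via the auxiliary sums $u(0),u(1),u(2)$ applies verbatim.

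Once the interchange is licensed, the proof closes mechanically: I would insert the moment formula of Proposition \ref{moments} into $\sum_{n=0}^{\infty}(it)^n\mathbb{E}(X^n)/n!$, pull the common normalizing denominator out of the sum, and recognize the three resulting gamma-weighted power series, yielding the displayed expression with factor $(i\beta t)^n$. The main obstacle is therefore not computational but one of rigor and scope: one must decide between justifying the substitution of the complex argument $it$ into the real transform $M_X$ (which requires the analyticity afforded by the infinite radius of convergence, hence $\alpha>1$) and simply repeating the Fubini step directly; and one must flag that for $\alpha=1$ the stated series converges only on $|t|<1/\beta$, even though $\phi_X$ itself is defined and continuous on all of $\mathbb{R}$.
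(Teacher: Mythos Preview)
Your proposal is correct and follows the paper's own approach: the paper simply invokes the relation $M_X(t)=\phi_X(-it)$ and declares the result immediate from Theorem~\ref{MGF}, which is precisely your substitution $t\mapsto it$. Your additional care---justifying the interchange via the same Fubini/ratio-test estimate and flagging that the series representation is only guaranteed for $\alpha>1$ on all of $\mathbb{R}$ (and on $|t|<1/\beta$ when $\alpha=1$)---is more rigorous than the paper's one-line remark and correctly identifies a gap in the stated range of validity.
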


\subsection{Reliability, hazard rate and the mean residual life}
\label{Reliability-hazard rate-mean residual life}
For any $t\geqslant 0$, the
reliability, the hazard rate and the mean residual life functions, associated with a random variable $X\sim \text{BWeibull}(\boldsymbol{\theta})$, are defined as follows
\begin{align*}
R(t;&\boldsymbol{\theta})
= \int_{t}^{\infty}
f(x;\boldsymbol{\theta})
\, \textrm{d} x,
\quad
H(t;\boldsymbol{\theta})
=
{f(t;\boldsymbol{\theta})\over R(t;\boldsymbol{\theta})},
\quad
\textrm{MRL}(t;\boldsymbol{\theta})
=
{1\over R(t;\boldsymbol{\theta})}
\int_{t}^{\infty}
R(x;\boldsymbol{\theta})
\, \textrm{d} x,
\end{align*}
respectively.

Let $Y\sim \text{Weibull}(\alpha,\beta)$ be a random variable with Weibull distribution. By using the integration by parts formula we have 
\begin{align*}
\mathbb{E}\big[{1}_{\{Y\geqslant t\}} Y^s\big]
=
t^s
\exp\Bigg[-\bigg({t\over\beta}\bigg)^\alpha\Bigg]
+
s
\int_{t}^{\infty}
y^{s-1} \exp\Bigg[-\bigg({y\over\beta}\bigg)^\alpha\Bigg]
\, \textrm{d} y, \quad s\geqslant 0.
\end{align*}
By taking the change of variable $x=\big({u\over\beta}\big)^\alpha$ the right-hand expression above is 
\begin{align*}
&
=
t^s
\exp\Bigg[-\bigg({t\over\beta}\bigg)^\alpha\Bigg]
+
{s\beta^s\over\alpha}
\int_{(t/\beta)^\alpha}^{\infty}
x^{{s\over\alpha}-1} \exp(-x)
\, \textrm{d} x
\\[0,2cm]
&
=
t^s
\exp\Bigg[-\bigg({t\over\beta}\bigg)^\alpha\Bigg]
+ 
{s\beta^s\over\alpha}\,
\Gamma\bigg({s\over\alpha},{t^\alpha\over\beta^\alpha}\bigg),
\end{align*}
where $\Gamma(s,x)$, $s>0$, is the upper incomplete gamma function, and where we are adopting the notation $\Gamma(0,x)=0$. Therefore,
\begin{align}\label{int-formula}
\mathbb{E}\big[{1}_{\{Y\geqslant t\}} Y^s\big]
=
t^s
\exp\Bigg[-\bigg({t\over\beta}\bigg)^\alpha\Bigg]
+ 
{s\beta^s\over\alpha}\,
\Gamma\bigg({s\over\alpha},{t^\alpha\over\beta^\alpha}\bigg),
\quad s\geqslant 0. 
\end{align}

Similarly we find that
\begin{align}\label{int-formula-1}
\mathbb{E}\big[{1}_{\{Y\leqslant t\}} Y^s\big]
=
\begin{cases}
\displaystyle
1-\exp\Bigg[-\bigg({t\over\beta}\bigg)^\alpha\Bigg] & \text{for} \quad s= 0,
\\[0,5cm]
\displaystyle
-t^s
\exp\Bigg[-\bigg({t\over\beta}\bigg)^\alpha\Bigg]
+ 
{s\beta^s\over\alpha}\,
\gamma\bigg({s\over\alpha},{t^\alpha\over\beta^\alpha}\bigg)
& \text{for} \quad s> 0, 
\end{cases}
\end{align}
where $\gamma(s,x)$, $s>0$, is the lower incomplete gamma function.

\begin{remark}
By combining \eqref{int-formula} and \eqref{int-formula-1} we have
\begin{align*}
\mathbb{E}(X^s)
&=
\mathbb{E}\big[{1}_{\{Y\leqslant t\}} Y^s\big]+\mathbb{E}\big[{1}_{\{Y\geqslant t\}} Y^s\big],
\quad s>0,
\\[0,2cm]
&=
{s\beta^s\over\alpha}\,
\bigg[
\gamma\bigg({s\over\alpha},{t^\alpha\over\beta^\alpha}\bigg)
+
\Gamma\bigg({s\over\alpha},{t^\alpha\over\beta^\alpha}\bigg)
\bigg]
=
{s\beta^s\over\alpha}\, \Gamma\bigg({s\over\alpha}\bigg)
=
{\beta^s}\, \Gamma\bigg(1+{s\over\alpha}\bigg),
\end{align*}
where in the third and fourth equalities we use the identities $\gamma(s,x)+\Gamma(s,x)=\Gamma(s)$ and $\Gamma(1+s)=s\Gamma(s)$, respectively. This confirms the validity of Proposition \ref{moments}.
\end{remark}

\begin{proposition}\label{prop-3}
	If $X\sim \text{BWeibull}(\boldsymbol{\theta})$ then
	\begin{enumerate}
		\item[1)]
		$\displaystyle
		R(t;\boldsymbol{\theta})
		=
		\dfrac{\displaystyle
			\big[1+(1-\delta t)^2\big]\,
			\exp\Bigg[-\bigg({t\over\beta}\bigg)^\alpha\Bigg]
			-
			{2\delta\beta\over\alpha}\,
			\bigg[ \Gamma\bigg({1\over\alpha},{t^\alpha\over\beta^\alpha}\bigg)
			-
			{\delta\beta}\, \Gamma\bigg({2\over\alpha},{t^\alpha\over\beta^\alpha}\bigg)
			\bigg]
		}{\displaystyle
			2+
			\delta^2\beta^2\Gamma\bigg(1+{2\over\alpha}\bigg)
			-
			2\delta\beta\Gamma\bigg(1+{1\over\alpha}\bigg)};
		$
		\item[2)] 
		$
		F(t;\boldsymbol{\theta})
		=
		\dfrac{\displaystyle
			2-\big[1+(1-\delta t)^2\big]\,
			\exp\Bigg[-\bigg({t\over\beta}\bigg)^\alpha\Bigg]
			-
			{2\delta\beta\over\alpha}\,
			\bigg[ \gamma\bigg({1\over\alpha},{t^\alpha\over\beta^\alpha}\bigg)
			-
			{\delta\beta}\, \gamma\bigg({2\over\alpha},{t^\alpha\over\beta^\alpha}\bigg)
			\bigg]
		}{\displaystyle
			2+
			\delta^2\beta^2\Gamma\bigg(1+{2\over\alpha}\bigg)
			-
			2\delta\beta\Gamma\bigg(1+{1\over\alpha}\bigg)};
		$
		\item[3)] 
		$
		H(t;\boldsymbol{\theta})
		=
		\dfrac{\displaystyle
			{\alpha\over \beta }\, 
			\big[1+(1-\delta t)^2\big]\, \bigg({t\over\beta}\bigg)^{\alpha-1}
			\exp\Bigg[-\bigg({t\over\beta}\bigg)^{\alpha}\Bigg]
		}{
			\displaystyle
			\big[1+(1-\delta t)^2\big]\,
			\exp\Bigg[-\bigg({t\over\beta}\bigg)^\alpha\Bigg]
			-
			{2\delta\beta\over\alpha}\,
			\bigg[ \Gamma\bigg({1\over\alpha},{t^\alpha\over\beta^\alpha}\bigg)
			-
			{\delta\beta}\, \Gamma\bigg({2\over\alpha},{t^\alpha\over\beta^\alpha}\bigg)
			\bigg]
		}.
		$
		\item[4)] 
		\begin{align*}
			\lim_{t\to 0} H(t;\boldsymbol{\theta})=
			\begin{cases}
				0 & {\rm for} \ \alpha>1,
				\\[0,1cm] \displaystyle 
				{1\over \beta(1-\delta\beta+\delta^2\beta^2)} & {\rm for} \ \alpha=1,
				\\[0,2cm]
				+\infty & {\rm for} \ \alpha<1,
			\end{cases}
			\quad 
			\lim_{t\to +\infty} H(t;\boldsymbol{\theta})=
			\begin{cases}
				+\infty & {\rm for} \ \alpha>1,
				\\[0,1cm] \displaystyle 
				{1\over \beta} & {\rm for} \ \alpha=1,
				\\[0,2cm]
				0 & {\rm for} \ \alpha<1.
			\end{cases}
		\end{align*}
	\end{enumerate}
\end{proposition}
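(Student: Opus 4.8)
The plan is to reduce all four items to the truncated Weibull moment formulas \eqref{int-formula} and \eqref{int-formula-1} already established, and then dispatch the items in order. For Item 1) I would first expand the bimodality kernel as $1+(1-\delta x)^2 = 2-2\delta x+\delta^2 x^2$, so that, writing $g$ for the $\text{Weibull}(\alpha,\beta)$ density and $Y\sim\text{Weibull}(\alpha,\beta)$, integration of \eqref{def-Weibull} over $[t,\infty)$ gives $Z_{\boldsymbol{\theta}}\,R(t;\boldsymbol{\theta}) = 2\,\mathbb{E}[\mathbf{1}_{\{Y\geqslant t\}}] - 2\delta\,\mathbb{E}[\mathbf{1}_{\{Y\geqslant t\}}Y] + \delta^2\,\mathbb{E}[\mathbf{1}_{\{Y\geqslant t\}}Y^2]$. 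Substituting \eqref{int-formula} at $s=0,1,2$ (with the convention $\Gamma(0,x)=0$) and regrouping the three exponential contributions into $(2-2\delta t+\delta^2 t^2)\exp[-(t/\beta)^\alpha]=[1+(1-\delta t)^2]\exp[-(t/\beta)^\alpha]$, while collecting the two incomplete-gamma contributions into $-\tfrac{2\delta\beta}{\alpha}[\Gamma(1/\alpha,\cdot)-\delta\beta\,\Gamma(2/\alpha,\cdot)]$, reproduces the stated numerator. This is routine bookkeeping.

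For Item 2) the quickest route is $F(t;\boldsymbol{\theta})=1-R(t;\boldsymbol{\theta})$; over the common denominator $Z_{\boldsymbol{\theta}}$ the numerator of $F$ is $Z_{\boldsymbol{\theta}}$ minus the numerator of Item 1). I would use $\Gamma(1+s)=s\Gamma(s)$ to rewrite \eqref{partition} as $Z_{\boldsymbol{\theta}}=2-\tfrac{2\delta\beta}{\alpha}\Gamma(1/\alpha)+\tfrac{2\delta^2\beta^2}{\alpha}\Gamma(2/\alpha)$, and then $\Gamma(s,x)=\Gamma(s)-\gamma(s,x)$ to convert the upper incomplete gammas of $R$ into complete-minus-lower ones; the complete-gamma pieces cancel against those of $Z_{\boldsymbol{\theta}}$, leaving exactly $2-[1+(1-\delta t)^2]\exp[-(t/\beta)^\alpha]-\tfrac{2\delta\beta}{\alpha}[\gamma(1/\alpha,\cdot)-\delta\beta\,\gamma(2/\alpha,\cdot)]$ (alternatively one integrates directly via \eqref{int-formula-1}). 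Item 3) is then immediate: inserting \eqref{def-Weibull} and the Item 1) expression into $H=f/R$, the common factor $1/Z_{\boldsymbol{\theta}}$ cancels and the displayed ratio results.

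For Item 4) the limits at $t\to 0$ follow by direct substitution into the Item 3) formula: as $t\to 0$ one has $(t/\beta)^\alpha\to 0$, hence $\Gamma(s,0)=\Gamma(s)$, the denominator converges to $Z_{\boldsymbol{\theta}}$ (equivalently $R(0;\boldsymbol{\theta})=1$), and the numerator is governed by $(t/\beta)^{\alpha-1}$, tending to $0$, $2/\beta$, or $+\infty$ according as $\alpha>1$, $\alpha=1$, or $\alpha<1$. In the boundary case $\alpha=1$ one checks from \eqref{partition} that $Z_{\boldsymbol{\theta}}=2(1-\delta\beta+\delta^2\beta^2)$, and dividing the numerator limit $2/\beta$ by this yields $1/[\beta(1-\delta\beta+\delta^2\beta^2)]$, as stated.

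The genuinely delicate case, which I expect to be the main obstacle, is $t\to\infty$: here both $f$ and $R$ vanish, and a direct attack would require the asymptotics $\Gamma(s,(t/\beta)^\alpha)\sim (t/\beta)^{\alpha(s-1)}\exp[-(t/\beta)^\alpha]$ together with a careful comparison of the resulting powers of $t$ in numerator and denominator. My plan is to bypass this by applying L'Hôpital's rule to $1/H(t)=R(t;\boldsymbol{\theta})/f(t;\boldsymbol{\theta})$: since $R'(t)=-f(t)$, the ratio of derivatives equals $-f(t)/f'(t)=-1/(\log f)'(t)$, whose limit is exactly $1/\tau_{\text{BWeibull}(\boldsymbol{\theta})}$ in terms of the rate $\tau$ already computed in the Remark. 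This gives $\lim_{t\to\infty}H(t)=\tau_{\text{BWeibull}(\boldsymbol{\theta})}$, namely $+\infty$, $1/\beta$, and $0$ for $\alpha>1$, $\alpha=1$, and $\alpha<1$ respectively. The only point requiring care is the admissibility of L'Hôpital, for which I would observe that $f$ is eventually strictly decreasing, so $f'<0$ ultimately and $(\log f)'$ has the stated finite or infinite limit, securing the hypotheses.
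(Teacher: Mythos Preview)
Your argument is correct and, for Items 1)--3) and the $t\to 0$ half of Item 4), essentially coincides with the paper's proof: the paper likewise decomposes $R$ and $F$ as $Z_{\boldsymbol{\theta}}^{-1}\{2\mathbb{E}[\mathbf 1_{\{Y\gtrless t\}}]-2\delta\mathbb{E}[\mathbf 1_{\{Y\gtrless t\}}Y]+\delta^2\mathbb{E}[\mathbf 1_{\{Y\gtrless t\}}Y^2]\}$ and substitutes \eqref{int-formula}--\eqref{int-formula-1}; your derivation of Item 2) via $F=1-R$ and $\Gamma(s,x)=\Gamma(s)-\gamma(s,x)$ is equivalent to the direct use of \eqref{int-formula-1}.

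The one place you depart from the paper is the $t\to\infty$ limit in Item 4). The paper works directly with the explicit expression of Item 3), invoking $\lim_{x\to\infty}\Gamma(s,x)=0$ and unspecified ``standard limit calculations''; carrying this out rigorously actually requires the sharper asymptotic $\Gamma(s,x)\sim x^{s-1}e^{-x}$ so that, after factoring out $\exp[-(t/\beta)^\alpha]$, one can compare the surviving powers of $t$ in numerator and denominator. Your L'H\^opital route sidesteps this computation: writing $1/H=R/f$ with $R'=-f$, the limit reduces to that of $-1/(\log f)'$, which is exactly $1/\tau_{\rm BWeibull(\boldsymbol{\theta})}$ from the earlier Remark on the rate $\tau_X$. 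This is cleaner, requires no incomplete-gamma asymptotics, and exhibits the general identity $\lim_{t\to\infty}H(t)=\tau$ for densities with a well-defined rate; the paper's approach, by contrast, stays closer to the explicit formula and would generalize less readily.
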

The hazard function $H$ in Figure \ref{fighazards}  is drawn for different values of parameters.
\begin{figure}[htbp]
	\centering
	\subfigure[Hazard of BWeibull for $\delta$]{\label{fig:haz1}\includegraphics[width=0.4\textwidth]{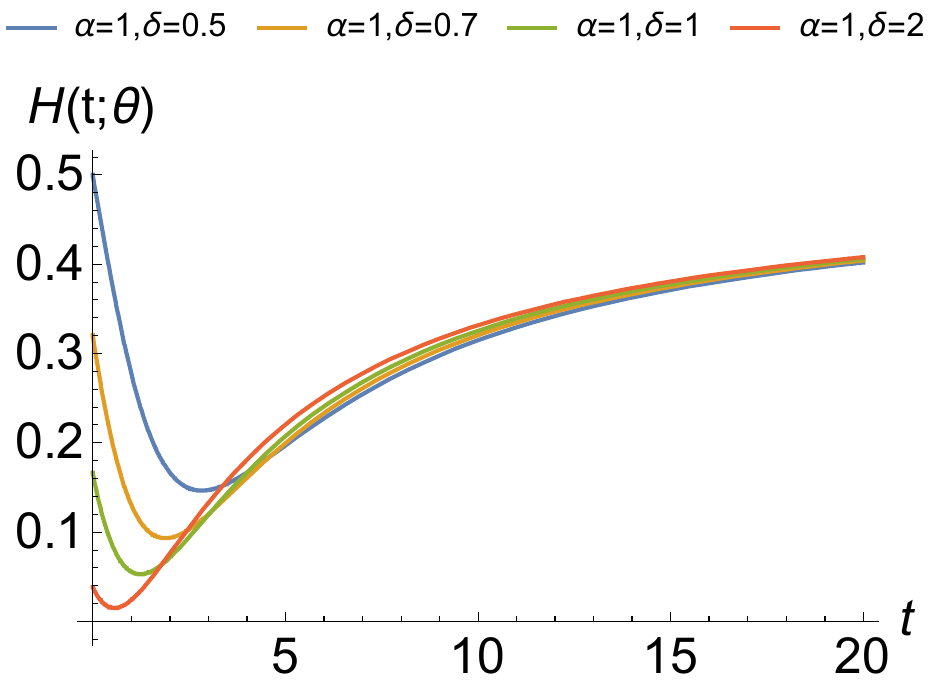}}
	\subfigure[Hazard of BWeibull for  $\alpha$ and $\delta$]{\label{fig:haz2}\includegraphics[width=0.4\textwidth]{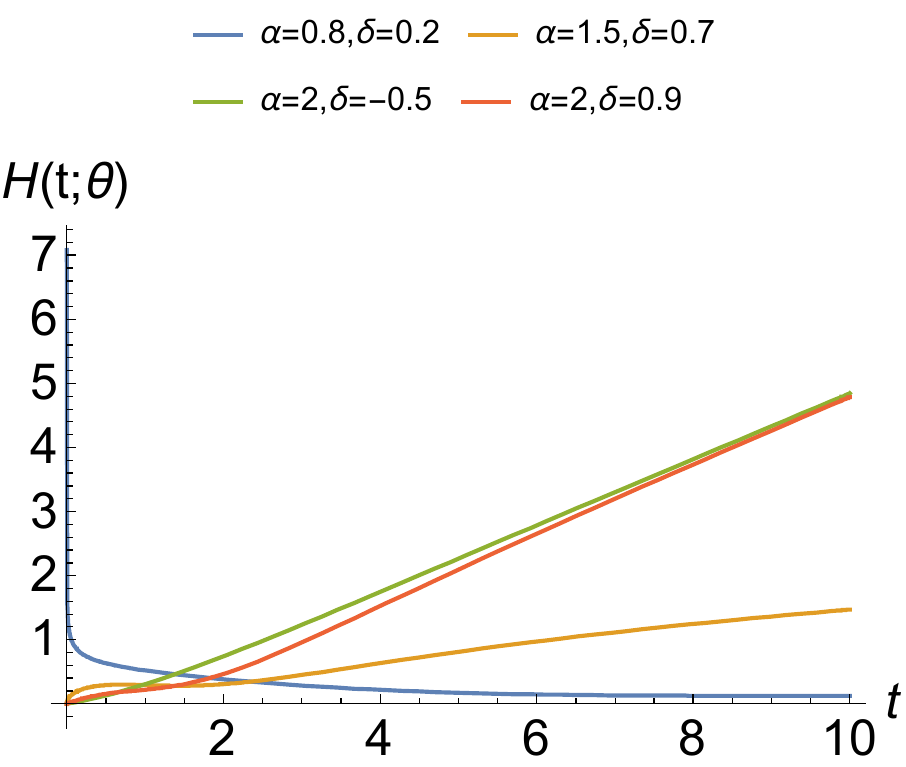}}
	\caption{Hazard of BWeibull with different values of parameters $\alpha,\delta$ and $\beta=2$.}
	\label{fighazards}
\end{figure}
\begin{proof}
	A simple algebraic manipulation in the definition of BWeibull density in equation \eqref{def-Weibull} shows that
	\begin{align*}
		\begin{array}{lllll}
			& \displaystyle
			R(t;\boldsymbol{\theta})
			=
			\dfrac{1}{Z_{\boldsymbol{\theta}}}\,\left\{
			2 \, \mathbb{E}\big[{1}_{\{Y\geqslant t\}}\big]
			-
			2\delta \, \mathbb{E}\big[{1}_{\{Y\geqslant t\}} Y\big]
			+
			\delta^2 \, \mathbb{E}\big[{1}_{\{Y\geqslant t\}} Y^2\big]\right\},
			\\[0,4cm]
			& \displaystyle
			F(t;\boldsymbol{\theta})
			=
			\dfrac{1}{Z_{\boldsymbol{\theta}}}\,\left\{
			2 \, \mathbb{E}\big[{1}_{\{Y\leqslant t\}}\big]
			-
			2\delta \, \mathbb{E}\big[{1}_{\{Y\leqslant t\}} Y\big]
			+
			\delta^2 \, \mathbb{E}\big[{1}_{\{Y\leqslant t\}} Y^2\big]\right\},
		\end{array}
		\quad 
		Y\sim \text{Weibull}(\alpha,\beta),
	\end{align*}
	where $Z_{\boldsymbol{\theta}}$ is as in \eqref{partition}.
	By taking $s=0,1,2$ in \eqref{int-formula}, we get the formula of Item 1).
	By taking $s=0,1,2$ in \eqref{int-formula-1}, the formula of Item 2) follows. 
	The proof of Item 3) follows immediately by combining the definition
	of $H(t;\boldsymbol{\theta})$ with Item 1).
	The proof of Item 4) follows directly by analyzing the form of the hazard function $H(t;\boldsymbol{\theta})$ in Item 3) through the known limits 
	$\lim_{x\to+\infty}\Gamma(s,x)=0$, $\lim_{x\to 0}\Gamma(s,x)=\Gamma(s)$, and standard limit calculations.
\end{proof}

\begin{proposition}\label{prop-exp-ind}
	If $X\sim \text{BWeibull}(\boldsymbol{\theta})$ then
	{\scalefont{0.95}
\begin{align*}
\mathbb{E}\big[{1}_{\{X\geqslant t\}} X\big]
=
		\dfrac{\displaystyle
	t\big[1+(1-\delta t)^2\big]\,
	\exp\Bigg[-\bigg({t\over\beta}\bigg)^\alpha\Bigg]
	+
	{\beta\over\alpha}\,
	\bigg[ 2\Gamma\bigg({1\over\alpha},{t^\alpha\over\beta^\alpha}\bigg)
	-
	4{\delta\beta}\, \Gamma\bigg({2\over\alpha},{t^\alpha\over\beta^\alpha}\bigg)
	+
	3 {\delta^2\beta^2}\, \Gamma\bigg({3\over\alpha},{t^\alpha\over\beta^\alpha}\bigg)
	\bigg]
}{\displaystyle
	2+
	\delta^2\beta^2\Gamma\bigg(1+{2\over\alpha}\bigg)
	-
	2\delta\beta\Gamma\bigg(1+{1\over\alpha}\bigg)}.
\end{align*}
}
\end{proposition}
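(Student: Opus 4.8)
The plan is to express the truncated first moment directly as an integral against the density \eqref{def-Weibull} and then reduce it to truncated Weibull moments, for which the closed formula \eqref{int-formula} is already available. First I would observe that the BWeibull density is exactly $Z_{\boldsymbol{\theta}}^{-1}\big[1+(1-\delta x)^2\big]$ times the density of $Y\sim\text{Weibull}(\alpha,\beta)$. Hence
\begin{align*}
\mathbb{E}\big[{1}_{\{X\geqslant t\}} X\big]
=
\int_t^\infty x\, f(x;\boldsymbol{\theta})\,{\rm d}x
=
{1\over Z_{\boldsymbol{\theta}}}\,
\mathbb{E}\Big[{1}_{\{Y\geqslant t\}}\, Y\big(1+(1-\delta Y)^2\big)\Big].
\end{align*}

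Next I would expand the bimodality kernel as $1+(1-\delta x)^2=2-2\delta x+\delta^2 x^2$, so that the extra factor $x$ in the integrand turns the expectation into the linear combination
\begin{align*}
\mathbb{E}\big[{1}_{\{X\geqslant t\}} X\big]
=
{1\over Z_{\boldsymbol{\theta}}}\,
\Big[
2\,\mathbb{E}\big({1}_{\{Y\geqslant t\}} Y\big)
-2\delta\,\mathbb{E}\big({1}_{\{Y\geqslant t\}} Y^2\big)
+\delta^2\,\mathbb{E}\big({1}_{\{Y\geqslant t\}} Y^3\big)
\Big].
\end{align*}
This is precisely the type of combination of truncated Weibull moments handled earlier for the reliability function $R(t;\boldsymbol{\theta})$, the only difference being the shift in the exponents $s=1,2,3$ (rather than $s=0,1,2$) caused by the additional factor $x$.

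Finally I would apply formula \eqref{int-formula} with $s=1,2,3$ and collect terms. The exponential contributions combine through the identity $2t-2\delta t^2+\delta^2 t^3=t\,(2-2\delta t+\delta^2 t^2)=t\,[1+(1-\delta t)^2]$, reproducing the first summand of the numerator. The incomplete-gamma contributions carry the prefactors ${\beta/\alpha}$, ${2\beta^2/\alpha}$ and ${3\beta^3/\alpha}$ from \eqref{int-formula}; after multiplying by the coefficients $2$, $-2\delta$ and $\delta^2$ and factoring out $\beta/\alpha$, they assemble into the bracketed expression $2\Gamma(1/\alpha,\cdot)-4\delta\beta\,\Gamma(2/\alpha,\cdot)+3\delta^2\beta^2\,\Gamma(3/\alpha,\cdot)$, which is exactly the second summand. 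I expect the main (and only) obstacle to be the bookkeeping needed to match these coefficients correctly; there is no analytic difficulty, since the integration range is bounded below and all truncated moments are finite for $s>-\alpha$ by Proposition \ref{moments}.
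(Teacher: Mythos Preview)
Your proposal is correct and follows essentially the same approach as the paper: decompose $\mathbb{E}\big[{1}_{\{X\geqslant t\}} X\big]$ via the kernel $1+(1-\delta x)^2=2-2\delta x+\delta^2 x^2$ into the combination $Z_{\boldsymbol{\theta}}^{-1}\big[2\,\mathbb{E}({1}_{\{Y\geqslant t\}}Y)-2\delta\,\mathbb{E}({1}_{\{Y\geqslant t\}}Y^2)+\delta^2\,\mathbb{E}({1}_{\{Y\geqslant t\}}Y^3)\big]$ and then invoke \eqref{int-formula} with $s=1,2,3$. Your write-up in fact spells out the coefficient matching more explicitly than the paper does.
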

\begin{proof}
By a similar decomposition to that of Proposition \ref{prop-3}, we have
	\begin{align*}
	\mathbb{E}\big[{1}_{\{X\geqslant t\}} X\big]
	=
	\dfrac{1}{Z_{\boldsymbol{\theta}}}\,\left\{
	2\, \mathbb{E}\big[{1}_{\{Y\geqslant t\}} Y\big]
	-
	2\delta\, \mathbb{E}\big[{1}_{\{Y\geqslant t\}} Y^2\big]
	+
	\delta^2\, \mathbb{E}\big[{1}_{\{Y\geqslant t\}} Y^3\big]
	\right\},
	\quad
	Y\sim \text{Weibull}(\alpha,\beta),
	\end{align*}
	where $Z_{\boldsymbol{\theta}}$ is as in \eqref{partition}.
Hence, by taking $s=1,2,3$ in \eqref{int-formula}, the proof follows.
\end{proof}

\begin{proposition}[Mean residual life function]\label{rem-1}
	If $X\sim \text{BWeibull}(\boldsymbol{\theta})$ then
\begin{align*}
	\textrm{MRL}(t;\boldsymbol{\theta})
	=
	\dfrac{\displaystyle
	{2\beta(1+\delta t)}\,
	\Gamma\bigg({1\over\alpha},{t^\alpha\over\beta^\alpha}\bigg)
	-
	{2\delta\beta^2(2+\delta t)}\, \Gamma\bigg({2\over\alpha},{t^\alpha\over\beta^\alpha}\bigg)
	+
	3 {\delta^2\beta^2}\, \Gamma\bigg({3\over\alpha},{t^\alpha\over\beta^\alpha}\bigg)
}{\displaystyle
		\alpha\big[1+(1-\delta t)^2\big]\,
\exp\Bigg[-\bigg({t\over\beta}\bigg)^\alpha\Bigg]
-
{2\delta\beta}\,
\bigg[ \Gamma\bigg({1\over\alpha},{t^\alpha\over\beta^\alpha}\bigg)
-
{\delta\beta}\, \Gamma\bigg({2\over\alpha},{t^\alpha\over\beta^\alpha}\bigg)
\bigg]}.
\end{align*}
\end{proposition}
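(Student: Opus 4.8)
The plan is to reduce the mean residual life to the two quantities already computed, namely the reliability $R(t;\boldsymbol{\theta})$ from Item~1) of Proposition~\ref{prop-3} and the truncated first moment $\mathbb{E}\big[{1}_{\{X\geqslant t\}}X\big]$ from Proposition~\ref{prop-exp-ind}. The bridge between them is the classical reliability identity
\begin{align*}
\int_{t}^{\infty} R(x;\boldsymbol{\theta})\,\mathrm{d}x
=
\mathbb{E}\big[(X-t)\,{1}_{\{X\geqslant t\}}\big].
\end{align*}
First I would establish this by writing $R(x;\boldsymbol{\theta})=\int_{x}^{\infty} f(u;\boldsymbol{\theta})\,\mathrm{d}u$ and applying Tonelli's theorem to the nonnegative integrand over the region $\{u\geqslant x\geqslant t\}$; performing the inner integration in $x$ over $[t,u]$ produces the factor $(u-t)$ and yields the identity. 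Substituting into the definition of $\textrm{MRL}(t;\boldsymbol{\theta})$ and splitting $(X-t){1}_{\{X\geqslant t\}}=X{1}_{\{X\geqslant t\}}-t\,{1}_{\{X\geqslant t\}}$ gives
\begin{align*}
\textrm{MRL}(t;\boldsymbol{\theta})
=
\frac{\mathbb{E}\big[{1}_{\{X\geqslant t\}}X\big]-t\,R(t;\boldsymbol{\theta})}{R(t;\boldsymbol{\theta})}.
\end{align*}

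Next I would insert the closed forms. Both $\mathbb{E}\big[{1}_{\{X\geqslant t\}}X\big]$ and $R(t;\boldsymbol{\theta})$ carry the common normalizing factor $1/Z_{\boldsymbol{\theta}}$ from \eqref{partition}, so it cancels in the ratio; clearing the $1/\alpha$ that multiplies every upper incomplete gamma term is what produces the factor $\alpha$ appearing in the denominator of the claimed expression. The decisive simplification is that the purely exponential contribution $t\big[1+(1-\delta t)^2\big]\exp[-(t/\beta)^\alpha]$ present in $\mathbb{E}\big[{1}_{\{X\geqslant t\}}X\big]$ is cancelled exactly against the corresponding term of $t\,R(t;\boldsymbol{\theta})$, so that the numerator reduces to incomplete gamma terms only, while the denominator becomes $R(t;\boldsymbol{\theta})$ rescaled by $\alpha$, matching the stated form.

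Finally I would collect the coefficients of $\Gamma(1/\alpha,\cdot)$, $\Gamma(2/\alpha,\cdot)$ and $\Gamma(3/\alpha,\cdot)$. The $\Gamma(1/\alpha,\cdot)$ terms merge the $2\,\Gamma(1/\alpha,\cdot)$ coming from the truncated moment with the $2\delta t\,\Gamma(1/\alpha,\cdot)$ arising out of $-t\,R(t;\boldsymbol{\theta})$, giving the coefficient $2\beta(1+\delta t)$; in the same way the two $\Gamma(2/\alpha,\cdot)$ contributions combine into $-2\delta\beta^2(2+\delta t)$, while the $\Gamma(3/\alpha,\cdot)$ term passes through untouched up to the overall scaling. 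This recombination is the only genuine labor in the argument: there is no real obstacle beyond the justification of the interchange of integrals (immediate from nonnegativity via Tonelli) and a careful tracking of the powers of $\beta$ when the $1/\alpha$ factors in front of the incomplete gamma functions are cleared.
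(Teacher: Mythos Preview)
Your proposal is correct and follows essentially the same route as the paper: reduce $\textrm{MRL}(t;\boldsymbol{\theta})$ to $\mathbb{E}\big[{1}_{\{X\geqslant t\}}X\big]/R(t;\boldsymbol{\theta})-t$ via the identity $\int_t^\infty R(x;\boldsymbol{\theta})\,\mathrm{d}x=\mathbb{E}\big[(X-t){1}_{\{X\geqslant t\}}\big]$, then plug in Propositions~\ref{prop-3} and~\ref{prop-exp-ind} and simplify. The only cosmetic difference is that the paper obtains this identity by integration by parts (using $xR(x;\boldsymbol{\theta})\to 0$) rather than Tonelli, and it leaves the algebraic collection of incomplete-gamma coefficients implicit where you spell it out.
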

\begin{proof}
	Integration by parts gives
\[
\mathbb{E}\big[{1}_{\{X\geqslant t\}} X\big]
=
t R(t;\boldsymbol{\theta})
+
\int_t^{\infty}
R(x;\boldsymbol{\theta})
\, {\rm d} x,
\]
because $x R(x;\boldsymbol{\theta})\to 0$ as $x\to\infty$. Then
\[
\textrm{MRL}(t;\boldsymbol{\theta})
=
{1\over R(t;\boldsymbol{\theta})}\,
\mathbb{E}\big[{1}_{\{X\geqslant t\}} X \big]
- t,
\]
where $R(t;\boldsymbol{\theta})$ and
$\mathbb{E}\big[{1}_{\{X\geqslant t\}} X\big]$
are given in Propositions \ref{prop-3} and \ref{prop-exp-ind}, respectively.
\end{proof}

\subsection{Entropies for continuous measure}
The Tsallis \cite{Tsallis1988}, Quadratic \cite{Rao2010} and  Shannon \cite{Shannon1948} entropies associated with a non-negative random variable $X$ are defined by
\begin{align}\label{Tsallis-entropy}
&S_q(X)=\dfrac{1}{q-1}\, \bigg[1-\int_{0}^{\infty} f^q(x;\boldsymbol{\theta}) \, {\rm d}x\bigg], \quad q\in\mathbb{R},
\\[0,3cm]
&H_2(X) \label{quadratic-entropy}
=
-\log\bigg[\int_{0}^{\infty} f^2(x;\boldsymbol{\theta}) \, {\rm d}x\bigg],
\\[0,3cm]
&
H_1(X)
=
-\int_{0}^{\infty} f(x;\boldsymbol{\theta}) \log\big[f(x;\boldsymbol{\theta})\big] \, {\rm d}x,
\label{Shannon-entropy}
\end{align}
respectively.

\begin{theorem}[Tsallis entropy]
Let $X\sim \text{BWeibull}(\boldsymbol{\theta})$ and 
\begin{align*}
\sum_{k=0}^{\infty} 
\bigg\vert\binom{q}{k} \bigg\vert
\int_{0}^{\infty} 
(1-\delta x)^{2k} \, \bigg({x\over\beta}\bigg)^{q(\alpha-1)}
\exp\Bigg[-q\bigg({x\over\beta}\bigg)^{\alpha}\Bigg]
\, {\rm d}x<\infty,
\end{align*}
where $\binom{q}{k}$ is the generalized binomial coefficient and $q\neq 1$.
If $\delta<0$ and $q(\alpha-1)>-1$, then the Tsallis entropy is given by
\begin{align*}
S_q(X)=	 
\dfrac{1}{q-1}-
\dfrac{
	\displaystyle 
	\alpha^{q-1}    \sum_{k=0}^{\infty} \sum_{l=0}^{2k} \binom{q}{k} \binom{2k}{l}  
\dfrac{(-\delta)^l\beta^{l-1}}{q^{q+(l-q+1)/\alpha}}\,
\Gamma\bigg(q+{l-q+1\over\alpha}\bigg)
}
{
\displaystyle
(q-1) 
	\beta^q\bigg[2+
	\delta^2\beta^2\Gamma\bigg(1+{2\over\alpha}\bigg)
	-
	2\delta\beta\Gamma\bigg(1+{1\over\alpha}\bigg)\bigg]^q
}.
	\end{align*}	
\end{theorem}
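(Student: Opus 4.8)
The plan is to reduce the whole statement to the evaluation of the single integral $\int_{0}^{\infty} f^q(x;\boldsymbol{\theta})\,\mathrm{d}x$, since by the definition \eqref{Tsallis-entropy} we have $S_q(X)=\frac{1}{q-1}\bigl[1-\int_{0}^{\infty}f^q\,\mathrm{d}x\bigr]$, and the claimed formula is exactly the assertion that this integral equals $(q-1)$ times the second term on the right-hand side. Raising the density \eqref{def-Weibull} to the power $q$ pulls out the constant $(\alpha/\beta Z_{\boldsymbol{\theta}})^q$ and leaves the kernel $[1+(1-\delta x)^2]^q\,(x/\beta)^{q(\alpha-1)}\exp[-q(x/\beta)^\alpha]$, so everything comes down to integrating this kernel.

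First I would expand the bimodal factor by the generalized binomial theorem, $[1+(1-\delta x)^2]^q=\sum_{k=0}^{\infty}\binom{q}{k}(1-\delta x)^{2k}$, and then expand each power by the ordinary binomial theorem, $(1-\delta x)^{2k}=\sum_{l=0}^{2k}\binom{2k}{l}(-\delta)^l x^l$, producing precisely the double sum with coefficients $\binom{q}{k}\binom{2k}{l}(-\delta)^l$ that appears in the statement. Because $\delta<0$, every factor $(-\delta)^l$ is nonnegative and every monomial $x^l$ is nonnegative on $[0,\infty)$, so all terms of the resulting double series are nonnegative. This is exactly what allows me to interchange the double summation with the integral by Tonelli's theorem; the hypothesis that $\sum_{k}\bigl|\binom{q}{k}\bigr|\int_{0}^{\infty}(1-\delta x)^{2k}(x/\beta)^{q(\alpha-1)}\exp[-q(x/\beta)^\alpha]\,\mathrm{d}x<\infty$ guarantees the interchanged object is finite, so Fubini--Tonelli legitimizes term-by-term integration.

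Next I would evaluate, for each pair $(k,l)$, the elementary integral $\int_{0}^{\infty}x^l(x/\beta)^{q(\alpha-1)}\exp[-q(x/\beta)^\alpha]\,\mathrm{d}x$. The substitution $u=(x/\beta)^\alpha$ (equivalently, the Weibull--moment identity from \cite{Cankaya2018} used in the proof of Proposition \ref{moments}) turns it into the gamma integral $\frac{\beta^{l+1}}{\alpha}\int_{0}^{\infty}u^{\,q+(l-q+1)/\alpha-1}e^{-qu}\,\mathrm{d}u=\frac{\beta^{l+1}}{\alpha}\,q^{-[q+(l-q+1)/\alpha]}\,\Gamma\!\bigl(q+\tfrac{l-q+1}{\alpha}\bigr)$. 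Here the condition $q(\alpha-1)>-1$ is exactly what makes the gamma argument positive for every $l\geqslant 0$, since $q+\frac{l-q+1}{\alpha}=\frac{q(\alpha-1)+l+1}{\alpha}>0$; this is simultaneously the integrability condition at the origin. Substituting this value back into the double series, collecting the constant $(\alpha/\beta Z_{\boldsymbol{\theta}})^q$ together with the factors $\beta^{l+1}/\alpha$ and $q^{-[\cdots]}$, and recognizing the bracket in \eqref{partition} as $Z_{\boldsymbol{\theta}}$, yields the announced closed form for $\int_{0}^{\infty}f^q\,\mathrm{d}x$ and hence for $S_q(X)$.

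The main obstacle is the rigorous justification of the term-by-term integration, and this is precisely where the three hypotheses do their work: the sign condition $\delta<0$ makes the double series nonnegative so that Tonelli's theorem applies with no further estimates, the stated summability assumption certifies finiteness of the swapped object, and $q(\alpha-1)>-1$ keeps each gamma factor well defined. The subtle point worth flagging is the generalized-binomial expansion itself: for noninteger $q$ the series $\sum_k\binom{q}{k}(1-\delta x)^{2k}$ must be understood in conjunction with the summability hypothesis rather than as an unconditional pointwise identity, and once the interchange is granted the remainder is just the bookkeeping of the powers of $\alpha,\beta,\delta$ and $q$ described above.
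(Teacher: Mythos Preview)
Your proposal is essentially the paper's own proof: reduce to computing $\int_0^\infty f^q\,\mathrm{d}x$, expand $[1+(1-\delta x)^2]^q$ by the generalized binomial theorem and then $(1-\delta x)^{2k}$ by the ordinary binomial theorem, invoke the summability hypothesis to justify the interchange of sum and integral via Fubini, and evaluate the remaining $x$-integral as a gamma function using the Weibull-moment identity from \cite{Cankaya2018}, with $q(\alpha-1)>-1$ ensuring positivity of the gamma argument. The only cosmetic difference is that the paper performs the interchange after the first (outer) expansion and only then applies the inner binomial expansion, whereas you do both expansions first; and the paper phrases the role of $\delta<0$ as guaranteeing convergence of the generalized binomial series, while you use it to get nonnegativity of the inner terms---your closing caveat about the pointwise validity of the generalized binomial expansion is, if anything, a more honest acknowledgment of the same delicate step the paper glosses over.
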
	
\begin{proof}
By definition of Tsallis entropy \eqref{Tsallis-entropy} note that it is enough to prove that
\begin{align}\label{cond-suf}
\int_{0}^{\infty} f^q(x;\boldsymbol{\theta}) \, {\rm d}x
=
\dfrac{\alpha^q}{(\beta Z_{\boldsymbol{\theta}})^q}\,
\sum_{k=0}^{\infty} \sum_{l=0}^{2k} \binom{q}{k} \binom{2k}{l}  
\dfrac{(-\delta)^l\beta^{l-1}}{\alpha q^{q+(l-q+1)/\alpha}}\,
\Gamma\bigg(q+{l-q+1\over\alpha}\bigg),
\end{align}
whenever $q(\alpha-1)>-1$, where $Z_{\boldsymbol{\theta}}$ is the normalization constant in equation \eqref{partition}.

Indeed,  since
\begin{align*}
\int_{0}^{\infty} f^q(x;\boldsymbol{\theta}) \, {\rm d}x
=
\int_{0}^{\infty} 
{\alpha^q\over (\beta Z_{\boldsymbol{\theta}})^q}\, 
\big[1+(1-\delta x)^2\big]^q\, \bigg({x\over\beta}\bigg)^{q(\alpha-1)}
\exp\Bigg[-q\bigg({x\over\beta}\bigg)^{\alpha}\Bigg]
\, {\rm d}x,
\end{align*}
by using the Newton's generalized binomial theorem the above integral is 
\begin{align*}
=
{\alpha^q\over (\beta Z_{\boldsymbol{\theta}})^q}\, 
\int_{0}^{\infty} 
\sum_{k=0}^{\infty} 
\binom{q}{k}
(1-\delta x)^{2k} \, \bigg({x\over\beta}\bigg)^{q(\alpha-1)}
\exp\Bigg[-q\bigg({x\over\beta}\bigg)^{\alpha}\Bigg]
\, {\rm d}x \nonumber
\end{align*}
where the condition $\delta<0$ guarantees that $x\neq \delta/2$, and therefore the convergence of the above series.
Seeing that $\sum_{k=0}^{\infty} 
\big\vert\binom{q}{k} \big\vert
\int_{0}^{\infty} 
(1-\delta x)^{2k} \, \big({x\over\beta}\big)^{q(\alpha-1)}
\exp\big[-q\big({x\over\beta}\big)^{\alpha}\big]
\, {\rm d}x<\infty$ (hypothesis),  by Fubini's theorem \cite{knapp} the above integral is
\begin{align} \label{inte-pre}
&=
{\alpha^q\over (\beta Z_{\boldsymbol{\theta}})^q}\, 
\sum_{k=0}^{\infty} 
\binom{q}{k}
\int_{0}^{\infty} 
(1-\delta x)^{2k} \, \bigg({x\over\beta}\bigg)^{q(\alpha-1)}
\exp\Bigg[-q\bigg({x\over\beta}\bigg)^{\alpha}\Bigg]
\, {\rm d}x \nonumber
\\[0,2cm]
&={\alpha^q\over (\beta Z_{\boldsymbol{\theta}})^q}\, 
\sum_{k=0}^{\infty}
\sum_{l=0}^{2k}  
\binom{q}{k}
\binom{2k}{l}
(-\delta)^l 
\int_{0}^{\infty} x^l\, \bigg({x\over\beta}\bigg)^{q(\alpha-1)}
\exp\Bigg[-q\bigg({x\over\beta}\bigg)^{\alpha}\Bigg]
\, {\rm d}x,
\end{align}
where in the second equality we have used the classic binomial expansion.
By using the formula of Item (6) in  \cite{Cankaya2018}: 
\begin{align*}
\Gamma\bigg(s+{1\over\alpha}\bigg)
=
\alpha p^{\alpha s+1} 
\int_{0}^{\infty} 
y^{\alpha s}\, \exp\big[-(yp)^\alpha\big]
\, {\rm d}y,
\end{align*}
the expression in \eqref{inte-pre} is rewritten as
\begin{align*}
= \dfrac{\alpha^q}{(\beta Z_{\boldsymbol{\theta}})^q}\,
\sum_{k=0}^{\infty} \sum_{l=0}^{2k} \binom{q}{k} \binom{2k}{l}  
\dfrac{(-\delta)^l\beta^{l-1}}{\alpha q^{q+(l-q+1)/\alpha}}\,
\Gamma\bigg(q+{l-q+1\over\alpha}\bigg),
\end{align*}
whenever $q(\alpha-1)>-1$.
Therefore, combining the above identities we have proved that the statement in \eqref{cond-suf} is valid. This completes the proof of theorem. 
\end{proof}

\begin{theorem}[Quadratic entropy]
Let $X\sim \text{BWeibull}(\boldsymbol{\theta})$. 
If $\alpha>1/2$ then the quadratic entropy can be written as
{
\begin{align*}
H_2(X)
&=
2\log(\beta)
-\log(\alpha)
+2\log{\displaystyle 
	\left[2+
	\delta^2\beta^2\Gamma\bigg(1+{2\over\alpha}\bigg)
	-
	2\delta\beta\Gamma\bigg(1+{1\over\alpha}\bigg)\right]}
\\[0,2cm]
&
-\log\left[
{\displaystyle 
	{2^{1/\alpha}\over \beta}\,
	\Gamma\bigg(2-{1\over\alpha}\bigg)
	+
	{2\delta^2 \beta\over 2^{1/\alpha}}\,
	\Gamma\bigg(2+{1\over\alpha}\bigg)
	-
	{\delta^3 \beta^2\over 2^{2/\alpha}}\,
	\Gamma\bigg(2+{2\over\alpha}\bigg)
	+
	{\delta^4 \beta^3\over 2^{2+3/\alpha}}\,
	\Gamma\bigg(2+{3\over\alpha}\bigg)
	-
	2\delta
}
\right].
\end{align*}
}	
%
\end{theorem}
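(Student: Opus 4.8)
The plan is to reduce the whole statement to the single integral $\int_{0}^{\infty} f^2(x;\boldsymbol{\theta})\,\mathrm{d}x$, since by definition \eqref{quadratic-entropy} we have $H_2(X)=-\log\big[\int_{0}^{\infty} f^2(x;\boldsymbol{\theta})\,\mathrm{d}x\big]$. Substituting the density \eqref{def-Weibull} and pulling out every factor that does not depend on $x$ — in particular $(x/\beta)^{2(\alpha-1)}$ contributes a $\beta^{-(2\alpha-2)}$ — I would write
\[
\int_{0}^{\infty} f^2(x;\boldsymbol{\theta})\,\mathrm{d}x
=
\frac{\alpha^2}{\beta^{2\alpha}Z_{\boldsymbol{\theta}}^2}
\int_{0}^{\infty} \big[1+(1-\delta x)^2\big]^2\,x^{2\alpha-2}\,
\exp\!\Big[-2\big(x/\beta\big)^\alpha\Big]\,\mathrm{d}x,
\]
with $Z_{\boldsymbol{\theta}}$ as in \eqref{partition}.

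The next step is to expand the quartic bracket as $[1+(1-\delta x)^2]^2 = 4 - 8\delta x + 8\delta^2 x^2 - 4\delta^3 x^3 + \delta^4 x^4$, so that the integral splits into a \emph{finite} sum of five monomial integrals $\int_{0}^{\infty} x^{\,k+2\alpha-2}\exp[-2(x/\beta)^\alpha]\,\mathrm{d}x$ for $k=0,1,2,3,4$. Because this is a polynomial and not an infinite series, no interchange of summation and integration is required — in contrast with the Tsallis entropy proof — so the only analytic point is the finiteness of each of the five integrals. Each integrand behaves like $x^{\,k+2\alpha-2}$ near the origin and is therefore integrable precisely when $k+2\alpha-1>0$; the binding constraint is the $k=0$ term, which requires exactly $\alpha>1/2$, matching the hypothesis, and under this condition every gamma argument $2+(k-1)/\alpha$ generated below is positive.

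To evaluate each monomial integral I would reuse the reduction already invoked in Proposition \ref{moments} (Item (6) of \cite{Cankaya2018}), now taking the scaling constant $p=2^{1/\alpha}/\beta$ so that $\exp[-(xp)^\alpha]=\exp[-2(x/\beta)^\alpha]$ absorbs the factor $2$ in the exponent. This gives
\[
\int_{0}^{\infty} x^{\,k+2\alpha-2}\,\exp\!\Big[-2\big(x/\beta\big)^\alpha\Big]\,\mathrm{d}x
=
\frac{\beta^{\,k+2\alpha-1}}{\alpha\,2^{\,2+(k-1)/\alpha}}\,
\Gamma\!\Big(2+\frac{k-1}{\alpha}\Big),\qquad k=0,1,2,3,4.
\]
Multiplying by the coefficients $4,-8\delta,8\delta^2,-4\delta^3,\delta^4$ and factoring the common power $\beta^{2\alpha}/\alpha$ turns the integral into a constant multiple of the bracketed expression displayed in the statement; here the isolated $-2\delta$ summand is exactly the $k=1$ contribution, for which $\Gamma(2)=1$ and no gamma factor survives.

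The final step is purely formal: after combining this with the prefactor $\alpha^2/(\beta^{2\alpha}Z_{\boldsymbol{\theta}}^2)$, taking $-\log$ and applying $\log(uv)=\log u+\log v$ separates the elementary logarithmic contributions (in $\alpha$, $\beta$ and $Z_{\boldsymbol{\theta}}$) from the remaining $-\log[\,\cdots]$ term, yielding the claimed closed form. There is no genuine obstacle in the argument; the one place to be careful — and where I would organize a short table of $(k,\ \text{coefficient},\ \text{power of }\beta,\ \text{power of }2,\ \text{gamma argument})$ before summing — is the bookkeeping of the powers of $2$, $\beta$ and $\delta$ across the five monomials, which is the only realistic source of error.
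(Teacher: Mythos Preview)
Your proposal is correct and follows essentially the same route as the paper: expand $[1+(1-\delta x)^2]^2$ into the five monomials, evaluate each monomial integral via the gamma formula with $p=2^{1/\alpha}/\beta$, and then take $-\log$. The only cosmetic difference is that the paper keeps the factor $(x/\beta)^{2(\alpha-1)}$ inside the integral (defining an auxiliary $v_{\boldsymbol{\theta}}(s)$ with prefactor $\alpha^2/(\beta^2 Z_{\boldsymbol{\theta}}^2)$) whereas you extract the $\beta$-power upfront; your explicit remark that the binding integrability constraint at $k=0$ is exactly $\alpha>1/2$ is a clarifying addition the paper leaves implicit.
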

\begin{proof}
A simple algebraic manipulation shows that
{
\begin{align*}
\int_{0}^{\infty} f^2(x;\boldsymbol{\theta}) \, {\rm d}x
&=
\int_{0}^{\infty} 
{\alpha^2\over \beta^2 Z_{\boldsymbol{\theta}}^2}\, 
\big[1+(1-\delta x)^2\big]^2\, \bigg({x\over\beta}\bigg)^{2(\alpha-1)}
\exp\Bigg[-2\bigg({x\over\beta}\bigg)^{\alpha}\Bigg]
\, {\rm d}x
\\[0,2cm]
&=
{\alpha^2\over \beta^2 Z_{\boldsymbol{\theta}}^2}\,
\int_{0}^{\infty} 
\big(
4
-
8\delta x
+
8\delta^2 x^2
-
4\delta^3 x^3
+
\delta^4 x^4
\big)
\bigg({x\over\beta}\bigg)^{2(\alpha-1)}
\exp\Bigg[-\bigg({x\over\beta}\, 2^{1/\alpha}\bigg)^{\alpha}\Bigg]
\, {\rm d}x,
\end{align*}
where $Z_{\boldsymbol{\theta}}$ is as in \eqref{partition}.
}
By using the formula of Item (6) in reference \cite{Cankaya2018}:
\begin{align*}
\Gamma\bigg(s+{1\over\alpha}\bigg)
=
\alpha p^{\alpha s+1} 
\int_{0}^{\infty} 
y^{\alpha s}\, \exp\big[-(yp)^\alpha\big]
\, {\rm d}y,
\end{align*}
we have 
\begin{align*}
&{\alpha^2\over \beta^2 Z_{\boldsymbol{\theta}}^2}\,
\int_{0}^{\infty} 
\big(
4
-
8\delta x
+
8\delta^2 x^2
-
4\delta^3 x^3
+
\delta^4 x^4
\big)
\bigg({x\over\beta}\bigg)^{2(\alpha-1)}
\exp\Bigg[-\bigg({x\over\beta}\, 2^{1/\alpha}\bigg)^{\alpha}\Bigg]
\, {\rm d}x
\\[0,2cm]
&=
{\alpha^2\over \beta^2 Z_{\boldsymbol{\theta}}^2}\,
\left[
{\displaystyle 
	4v_{\boldsymbol{\theta}}(0)
	-
	8\delta v_{\boldsymbol{\theta}}(1)
	+
	8\delta^2 v_{\boldsymbol{\theta}}(2)
	-
	4\delta^3 v_{\boldsymbol{\theta}}(3)
	+
	\delta^4 v_{\boldsymbol{\theta}}(4)
}
\right],
\end{align*}
where
\begin{align*}
v_{\boldsymbol{\theta}}(s)
&=
\int_{0}^{\infty} 
x^s
\bigg({x\over\beta}\bigg)^{2(\alpha-1)}
\exp\Bigg[-\bigg({x\over\beta}\, 2^{1/\alpha}\bigg)^{\alpha}\Bigg]
\, {\rm d}x
\\[0,2cm]
&=
\dfrac{\beta^{s-1}}{\alpha 2^{2+(s-1)/\alpha}}\,
\Gamma\bigg(2+{s-1\over\alpha}\bigg), \quad  s>1-2\alpha.
\end{align*}

Therefore, 
\begin{align*}
\int_{0}^{\infty} f^2(x;\boldsymbol{\theta}) \, {\rm d}x
=
{\alpha^2\over \beta^2 Z^2}\,
\left[
{\displaystyle 
	4v_{\boldsymbol{\theta}}(0)
	-
	8\delta v_{\boldsymbol{\theta}}(1)
	+
	8\delta^2 v_{\boldsymbol{\theta}}(2)
	-
	4\delta^3 v_{\boldsymbol{\theta}}(3)
	+
	\delta^4 v_{\boldsymbol{\theta}}(4)
}
\right].
\end{align*}
Finally, taking logarithms to both sides of the above equation and then multiplying by $-1$, by definition \eqref{quadratic-entropy} of quadratic entropy and by standard algebraic manipulations,  we complete the proof.

\end{proof}

Let $Y\sim \text{Weibull}(\alpha,\beta)$. By taking the change of variable $x=\big({y\over\beta}\big)^\alpha$, for each $s>-\alpha$, we have 
\begin{align*}
\mathbb{E}[Y^s\log(Y)]
&=
{\alpha\over\beta}\, 
\int_{0}^{\infty} y^{s}\log(y)\,  \bigg({y\over\beta}\bigg)^{\alpha-1} \exp\Bigg[-\bigg({y\over\beta}\bigg)^{\alpha}\Bigg]\, {\rm d}y
\\[0,2cm]
&=
\beta^s\log(\beta)
\int_{0}^{\infty} x^{s\over\alpha} \exp(-x)\, {\rm d}x
+
{\beta^s\over\alpha}\, 
\int_{0}^{\infty} \log(x) x^{s\over\alpha} \exp(-x)\, {\rm d}x.
\end{align*}
Since 
\begin{align*}
&\int_{0}^{\infty} x^{s\over\alpha} \exp(-x)\, {\rm d}x
=
\Gamma\bigg(1+\dfrac{s}{\alpha}\bigg),
\\[0,2cm]
&\int_{0}^{\infty} \log(x) x^{s\over\alpha} \exp(-x)\, {\rm d}x
=
\Gamma\bigg(1+\dfrac{s}{\alpha}\bigg)
\Psi^{(0)}\bigg(1+\dfrac{s}{\alpha}\bigg),
\end{align*}
the expectation $\mathbb{E}[Y^s\log(Y)]$ can be written as follows
\begin{align}\label{id-main}
\mathbb{E}[Y^s\log(Y)]
=
\beta^s \Gamma\bigg(1+\dfrac{s}{\alpha}\bigg)
\bigg[\log(\beta)  
+
{1\over\alpha}\, 
\Psi^{(0)}\bigg(1+\dfrac{s}{\alpha}\bigg)
\bigg],
\end{align}
where $\Psi^{(m)}(z)={{\rm d}^{m+1}\log[\Gamma(z)] \over {\rm d}z^{m+1}}$ is the the polygamma function of order $m$.

\begin{proposition}\label{propo-exp-log}
If $X\sim \text{BWeibull}(\boldsymbol{\theta})$ then
{\scalefont{0.97}
\begin{align*}
\mathbb{E}[\log(X)]
&=
\dfrac{\displaystyle
2\bigg[\log(\beta)-{\gamma\over\alpha}\bigg]	
}{\displaystyle
2+
\delta^2\beta^2\Gamma\bigg(1+{2\over\alpha}\bigg)
-
2\delta\beta\Gamma\bigg(1+{1\over\alpha}\bigg)}
\\[0,2cm]
&
-
\dfrac{\displaystyle
2\delta\beta\Gamma\bigg(1+{1\over\alpha}\bigg) \bigg[\log(\beta)+{1\over\alpha}\, \Psi^{(0)}\bigg(1+{1\over\alpha}\bigg)\bigg]
	-
\delta^2\beta^2\Gamma\bigg(1+{2\over\alpha}\bigg) 
\bigg[\log(\beta)+{1\over\alpha}\, \Psi^{(0)}\bigg(1+{2\over\alpha}\bigg)\bigg]
}{
\displaystyle
2+
\delta^2\beta^2\Gamma\bigg(1+{2\over\alpha}\bigg)
-
2\delta\beta\Gamma\bigg(1+{1\over\alpha}\bigg)
},
\end{align*}
}
where $\gamma=-{{\rm d} \Gamma(x)\over {\rm d}x}\big\vert_{x=1}\approx 0.57721$ is the Euler-Mascheroni constant.
\end{proposition}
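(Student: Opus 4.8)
The plan is to exploit exactly the same decomposition that was used in the proof of Proposition~\ref{moments}. Since $1+(1-\delta x)^2 = 2 - 2\delta x + \delta^2 x^2$, the BWeibull density factors as
\[
f(x;\boldsymbol{\theta}) = {1\over Z_{\boldsymbol{\theta}}}\,\big(2 - 2\delta x + \delta^2 x^2\big)\, g_Y(x),
\]
where $g_Y$ is the density of $Y\sim\text{Weibull}(\alpha,\beta)$ and $Z_{\boldsymbol{\theta}}$ is the normalization constant in \eqref{partition}. Multiplying by $\log(x)$ and integrating term by term then gives
\[
\mathbb{E}[\log(X)] = {1\over Z_{\boldsymbol{\theta}}}\,\Big[2\,\mathbb{E}[\log(Y)] - 2\delta\,\mathbb{E}[Y\log(Y)] + \delta^2\,\mathbb{E}[Y^2\log(Y)]\Big].
\]

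First I would record that all three Weibull expectations on the right are finite, because the identity \eqref{id-main} holds for every $s>-\alpha$ and the exponents $s=0,1,2$ satisfy $s\geqslant 0 > -\alpha$. This absolute integrability is what legitimizes the term-by-term integration above, so no separate convergence argument beyond \eqref{id-main} is needed.

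Next I would simply substitute \eqref{id-main} at $s=0,1,2$, that is,
\[
\mathbb{E}[\log(Y)] = \log(\beta) + {1\over\alpha}\,\Psi^{(0)}(1), \qquad \mathbb{E}[Y\log(Y)] = \beta\,\Gamma\Big(1+\tfrac{1}{\alpha}\Big)\Big[\log(\beta)+\tfrac{1}{\alpha}\,\Psi^{(0)}\Big(1+\tfrac{1}{\alpha}\Big)\Big],
\]
and analogously for $s=2$. Using $\Gamma(1)=1$ and the classical value $\Psi^{(0)}(1)=-\gamma$ converts the $s=0$ contribution into $\log(\beta)-\gamma/\alpha$, which produces the first fraction in the statement; collecting the $s=1$ and $s=2$ contributions over the common denominator $Z_{\boldsymbol{\theta}}$ yields the second fraction, the signs matching because the coefficients $-2\delta$ and $+\delta^2$ are carried through the substitution.

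The argument involves no genuine obstacle: it reduces entirely to the pre-established formula \eqref{id-main} together with the quadratic factorization of the density. The only point deserving a word of care is the justification for integrating $\log(x)\,(2 - 2\delta x + \delta^2 x^2)\, g_Y(x)$ term by term, and this is supplied by the validity of \eqref{id-main} on the full range $s>-\alpha$; what remains is purely the bookkeeping of writing the three terms over the single denominator $Z_{\boldsymbol{\theta}}$.
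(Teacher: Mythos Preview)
Your proposal is correct and follows exactly the paper's own argument: decompose $\mathbb{E}[\log(X)]$ as $Z_{\boldsymbol{\theta}}^{-1}\big\{2\,\mathbb{E}[\log(Y)]-2\delta\,\mathbb{E}[Y\log(Y)]+\delta^2\,\mathbb{E}[Y^2\log(Y)]\big\}$ with $Y\sim\text{Weibull}(\alpha,\beta)$, and then substitute \eqref{id-main} at $s=0,1,2$ using $\Psi^{(0)}(1)=-\gamma$. The only addition beyond the paper is your explicit remark on absolute integrability, which is harmless and correct.
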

\begin{proof}
	Through a simple calculation we have 
\begin{align*}
\mathbb{E}[\log(X)]
=
\dfrac{1}{Z_{\boldsymbol{\theta}}}\,\left\{
2 \, \mathbb{E}\big[\log(Y)\big]
-
2\delta \, \mathbb{E}\big[Y \log(Y)\big]
+
\delta^2 \, \mathbb{E}\big[Y^2 \log(Y)\big]\right\},
\quad Y\sim \text{Weibull}(\alpha,\beta).
\end{align*}
Then, by taking $s = 0,1,2$ in \eqref{id-main}, from the above identity, the statement of proposition follows.
\end{proof}

\begin{theorem}[Shannon entropy]
Let $X\sim \text{BWeibull}(\boldsymbol{\theta})$,  
$\mathbb{E}(X)=\mu$, $\mathrm{Var}(X)=\sigma^2$, $G(x) = 1+(1-\delta x)^2$ for $x\geqslant 0$, and $\mu_G=\mathbb{E}[G(X)]=1+(1-\delta\mu)^2+\delta^2\sigma^2$. If 
\begin{align*}
\sum_{n=1}^\infty 
\dfrac{1}{n \mu_G^n}\, \Big\vert\mathbb{E}\big[\big(G(X)-\mu_G\big)^n\big]\Big\vert<\infty,
\end{align*}
then the Shannon entropy can be written as
{\scalefont{0.8}
\begin{align*}
&H_1(X)
=
\alpha\log(\beta)-\log(\alpha)	
-\log\big[1+(1-\delta\mu)^2+\delta^2\sigma^2\big]
\\[0,2cm]
&+
\log\bigg[2+
\delta^2\beta^2\Gamma\bigg(1+{2\over\alpha}\bigg)
-
2\delta\beta\Gamma\bigg(1+{1\over\alpha}\bigg)\bigg]
+
\dfrac
{\displaystyle 
	2+\delta^2\beta^2\Gamma\bigg(2+{2\over\alpha}\bigg)
	-
	2\delta\beta\Gamma\bigg(2+{1\over\alpha}\bigg)}
{\displaystyle 
	2+\delta^2\beta^2\Gamma\bigg(1+{2\over\alpha}\bigg)
	-
	2\delta\beta\Gamma\bigg(1+{1\over\alpha}\bigg)}
\\[0,2cm]
&
-
\sum_{n=1}^\infty
\sum_{k=0}^n 
\sum_{j=0}^k 
\sum_{l=0}^{2j}
\binom{n}{k} 
\binom{k}{j}
\binom{2j}{l}  \,
\dfrac{(-1)^{l-k+1} (\beta\delta)^l}{n \big[1+(1-\delta\mu)^2+\delta^2\sigma^2\big]^k}\,
\dfrac
{\displaystyle 
	2\Gamma\bigg(1+{l\over\alpha}\bigg)+\delta^2\beta^2\Gamma\bigg(1+{l+2\over\alpha}\bigg)
	-
	2\delta\beta\Gamma\bigg(1+{l+1\over\alpha}\bigg)}
{\displaystyle 
	2+\delta^2\beta^2\Gamma\bigg(1+{2\over\alpha}\bigg)
	-
	2\delta\beta\Gamma\bigg(1+{1\over\alpha}\bigg)}
\\[0,2cm]
&
-
(\alpha-1)\, 
\dfrac{\displaystyle
	2\bigg[\log(\beta)-{\gamma\over\alpha}\bigg] - 2\delta\beta\Gamma\bigg(1+{1\over\alpha}\bigg) \bigg[\log(\beta)+{1\over\alpha}\, \Psi^{(0)}\bigg(1+{1\over\alpha}\bigg)\bigg]+
	\delta^2\beta^2\Gamma\bigg(1+{2\over\alpha}\bigg) 
	\bigg[\log(\beta)+{1\over\alpha}\, \Psi^{(0)}\bigg(1+{2\over\alpha}\bigg)\bigg]
}{
	\displaystyle
	2+
	\delta^2\beta^2\Gamma\bigg(1+{2\over\alpha}\bigg)
	-
	2\delta\beta\Gamma\bigg(1+{1\over\alpha}\bigg)
}.
\end{align*}
}
\end{theorem}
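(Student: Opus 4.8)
The plan is to work directly from the definition \eqref{Shannon-entropy}, which gives $H_1(X)=-\mathbb{E}\big[\log f(X;\boldsymbol{\theta})\big]$, and to split the logarithm of the density \eqref{def-Weibull} into additive pieces that can be treated separately. Writing $G(x)=1+(1-\delta x)^2$ and taking logs in \eqref{def-Weibull} yields
\begin{align*}
\log f(X;\boldsymbol{\theta})
=
\log(\alpha)-\alpha\log(\beta)-\log(Z_{\boldsymbol{\theta}})
+\log G(X)+(\alpha-1)\log(X)-\bigg({X\over\beta}\bigg)^\alpha,
\end{align*}
so that, after taking expectations, $H_1(X)$ is the sum of the deterministic constants $-\log(\alpha)+\alpha\log(\beta)+\log(Z_{\boldsymbol{\theta}})$ together with the three expectation terms $-\mathbb{E}[\log G(X)]$, $-(\alpha-1)\mathbb{E}[\log X]$ and $\mathbb{E}[(X/\beta)^\alpha]$. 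Recalling the form of $Z_{\boldsymbol{\theta}}$ in \eqref{partition}, the constant block already produces $\alpha\log(\beta)-\log(\alpha)$ and the $\log[\,2+\cdots]$ term of the statement.

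Next I would dispatch the two easy expectation terms. For $-(\alpha-1)\mathbb{E}[\log X]$ I simply insert the closed form of $\mathbb{E}[\log X]$ from Proposition \ref{propo-exp-log}, which reproduces verbatim the final displayed line of the statement. For $\mathbb{E}[(X/\beta)^\alpha]$ I would use the linear decomposition $\mathbb{E}[h(X)]=Z_{\boldsymbol{\theta}}^{-1}\{2\mathbb{E}[h(Y)]-2\delta\mathbb{E}[Yh(Y)]+\delta^2\mathbb{E}[Y^2h(Y)]\}$ with $Y\sim\text{Weibull}(\alpha,\beta)$ and $h(x)=(x/\beta)^\alpha$, evaluating each Weibull moment through $\mathbb{E}[Y^s]=\beta^s\Gamma(1+s/\alpha)$; since $\Gamma(2)=1$, this collapses to the fraction $[2+\delta^2\beta^2\Gamma(2+2/\alpha)-2\delta\beta\Gamma(2+1/\alpha)]/Z_{\boldsymbol{\theta}}$ that appears in the statement.

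The substantive term is $-\mathbb{E}[\log G(X)]$, which must generate both the $-\log\mu_G$ term and the quadruple series. I would center the logarithm at $\mu_G=\mathbb{E}[G(X)]$ by writing $\log G(X)=\log\mu_G+\log\big(1+(G(X)-\mu_G)/\mu_G\big)$ and expanding the second logarithm by its Maclaurin series $\sum_{n\geqslant1}(-1)^{n-1}n^{-1}\mu_G^{-n}(G(X)-\mu_G)^n$. The standing hypothesis $\sum_{n\geqslant1}(n\mu_G^n)^{-1}|\mathbb{E}[(G(X)-\mu_G)^n]|<\infty$ is precisely what licenses Fubini's theorem \cite{knapp} to interchange $\mathbb{E}$ with the infinite sum, giving $\mathbb{E}[\log G(X)]=\log\mu_G+\sum_{n\geqslant1}(-1)^{n-1}(n\mu_G^n)^{-1}\mathbb{E}[(G(X)-\mu_G)^n]$.

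Finally I would expand each central moment combinatorially through the nested binomial expansions $(G-\mu_G)^n=\sum_{k}\binom{n}{k}(-\mu_G)^{n-k}G^k$, then $G(X)^k=[1+(1-\delta X)^2]^k=\sum_j\binom{k}{j}(1-\delta X)^{2j}$, and finally $(1-\delta X)^{2j}=\sum_l\binom{2j}{l}(-\delta)^l X^l$. Taking expectations term by term and replacing each $\mathbb{E}[X^l]$ by its value from Proposition \ref{moments} turns $-\mathbb{E}[\log G(X)]$ into the stated quadruple sum; the only delicate point is the bookkeeping of signs and powers of $\mu_G$, where the factors coming from $(-1)^{n-1}$, $(-\mu_G)^{n-k}$ and $(-\delta)^l$ combine (together with $(-1)^{2n}=1$) to the exponent $(-1)^{l-k+1}$ once the overall minus sign is pulled out, and where the identification $(\beta\delta)^l\beta^{-l}=\delta^l$ absorbs the $\beta^l$ from the moment formula into the coefficient $(\beta\delta)^l$. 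I expect this sign and index reconciliation, rather than any analytic difficulty, to be the main obstacle; once it is settled, collecting all the pieces yields the claimed expression for $H_1(X)$.
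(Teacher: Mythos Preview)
Your proposal is correct and follows essentially the same route as the paper's proof: the same additive decomposition of $\log f(X;\boldsymbol{\theta})$, the same appeal to Propositions \ref{moments} and \ref{propo-exp-log} for $\mathbb{E}[(X/\beta)^\alpha]$ and $\mathbb{E}[\log X]$, and the same Taylor expansion of $\log G(X)$ about $\mu_G$ with Fubini justified by the stated hypothesis, followed by the triple binomial expansion. The only cosmetic difference is that the paper simply cites Proposition \ref{moments} (with $r=\alpha$) for $\mathbb{E}(X^\alpha)$ rather than redoing the Weibull decomposition explicitly as you suggest.
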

\begin{proof}
	The Shannon entropy formula for the random variable $X\sim \text{BWeibull}(\boldsymbol{\theta})$, as defined in \ref{Shannon-entropy}, is given by
	\[
	H_1(X)= -\mathbb{E}\big[\log f(X;\boldsymbol{\theta}) \big].
	\]
	Developing the logarithm $\log[f(X;\boldsymbol{\theta})]$  in the above identity, we have
	\begin{align}\label{hdef}
	H_1(X)
	=
	\log(Z_{\boldsymbol{\theta}})+\alpha\log(\beta)-\log(\alpha)
	-
	\mathbb{E}\big[
	\log G(X)
	\big]
	-
	(\alpha-1)\mathbb{E}[\log(X)]+{1\over\beta^\alpha}\, \mathbb{E}(X^\alpha),
	\end{align} 
where $\mathbb{E}[\log(X)]$ and $\mathbb{E}(X^\alpha)$ are given in Propositions \ref{propo-exp-log} and \ref{moments}, respectively.

It is worth mentioning that the expectation $\mathbb {E}[\log G(X)]$ cannot be expressed through known mathematical functions.
To obtain an expression for $\mathbb {E}[\log G(X)]$ we consider Taylor's expansion of the function $\log[G(X)] $ around the point $\mu_G=\mathbb{E}[G(X)]=1+(1-\delta\mu)^2+\delta^2\sigma^2$. Hence
	\begin{align}\label{expansion}
	\log[G(X)]
	=
	\log(\mu_G)+\sum_{n=1}^\infty 
	\dfrac{(-1)^{n+1}}{n \mu_G^n}\, \big(G(X)-\mu_G\big)^n.
	\end{align}
	Taking expectation on both sides of \eqref{expansion} we get 
\begin{align}\label{obj-2}
\mathbb{E}[\log G(X)]
=
\log(\mu_G)+\mathbb{E}\left[\sum_{n=1}^\infty 
\dfrac{(-1)^{n+1}}{n \mu_G^n}\, \big(G(X)-\mu_G\big)^n\right].
\end{align}
Since $0\leqslant \mathbb{E}[\log G(X)]\leqslant \mathbb{E}\big[(1-\delta X)^2\big]<\infty$ and 
$\sum_{n=1}^\infty 
\frac{1}{n \mu_G^n}\, \big\vert\mathbb{E}\big[\big(G(X)-\mu_G\big)^n\big]\big\vert<\infty$ (hypothesis), by applying Fubini's Theorem \cite{knapp}, 
the expression on the right-hand side of \eqref{obj-2} is 
\begin{align*}
=
\log(\mu_G)+
\sum_{n=1}^\infty 
\dfrac{(-1)^{n+1}}{n \mu_G^n}\, \mathbb{E}\big[\big(G(X)-\mu_G\big)^n\big].
\end{align*}
By using binomial expansion repeatedly in the above expression, this is
\begin{align*}
=
\log(\mu_G)+
\sum_{n=1}^\infty
\sum_{k=0}^n 
\sum_{j=0}^k 
\sum_{l=0}^{2j}
\binom{n}{k} 
\binom{k}{j}
\binom{2j}{l}  \,
\dfrac{(-1)^{l-k+1} \delta^l}{n \mu_G^k}\, \mathbb{E}(X^l).
\end{align*}
Therefore, we get
\begin{align}\label{obj-3}
\mathbb{E}[\log G(X)]
=
\log(\mu_G)
+
\sum_{n=1}^\infty
\sum_{k=0}^n 
\sum_{j=0}^k 
\sum_{l=0}^{2j}
\binom{n}{k} 
\binom{k}{j}
\binom{2j}{l}  \,
\dfrac{(-1)^{l-k+1} \delta^l}{n \mu_G^k}\, \mathbb{E}(X^l).
\end{align}

Finally, by combining \eqref{obj-3}, Proposition \ref{moments} and \eqref{hdef}, the proof of theorem follows.
\end{proof}


\section{Inference: Estimation Methods and Fisher information}\label{inference}
\subsection{Maximum likelihood estimation method}
Maximum likelihood estimation (MLE) method is the standard approach for estimations of parameters of $f(x;\boldsymbol{\theta})$, mainly
due to the desirable asymptotic properties of consistency, efficiency and asymptotic normality \cite{Fer196,Fer296}.

The log likelihood function for $\boldsymbol{\theta}$ is given by
\begin{align}\label{likelitheta}
l(\boldsymbol{\theta}; \boldsymbol{x})
&=
\sum_{i=1}^{n} 
\left\{
\log\Big(\frac{\alpha}{\beta Z_{\boldsymbol{\theta}}}\Big) 
+ 
\log\left[1+(1-\delta x_i)^2\right]
+
(\alpha-1)\log\Big(\frac{x_i}{\beta}\Big) - \Big({x_i \over\beta}\Big)^{\alpha} 
\right\}
\\[0,2cm]  \nonumber
&=
-n\log(Z_{\boldsymbol{\theta}})+\sum_{i=1}^{n} \log\left[1+(1-\delta x_i)^2\right]
+
l_Y(\alpha,\beta; \boldsymbol{x}),
\end{align}
where $Z_{\boldsymbol{\theta}}=2+
\delta^2\beta^2\Gamma\big(1+{2\over\alpha}\big)
-
2\delta\beta\Gamma\big(1+{1\over\alpha}\big)$ 
is as in \eqref{partition} and $l_Y(\alpha,\beta; \boldsymbol{x})$, $Y\sim \text{Weibull}(\alpha,\beta)$, is the  log likelihood function for the parameter vector $(\alpha,\beta)$.

A standard calculation shows that the first-order partial derivatives of $l(\boldsymbol{\theta}; \boldsymbol{x})$ are
\begin{align}\label{abdscore}
\begin{array}{llllll}
\displaystyle
{\partial l(\boldsymbol{\theta}; \boldsymbol{x})\over\partial\alpha}
&=
\displaystyle
-{n\over Z_{\boldsymbol{\theta}}}\, {\partial Z_{\boldsymbol{\theta}}\over\partial\alpha}
+{\partial l_Y(\alpha,\beta; \boldsymbol{x})\over\partial\alpha},
\\[0,5cm]
\displaystyle
{\partial l(\boldsymbol{\theta}; \boldsymbol{x})\over\partial\beta}
&= 
\displaystyle
-{n\over Z_{\boldsymbol{\theta}}}\, {\partial Z_{\boldsymbol{\theta}}\over\partial\beta}
+{\partial l_Y(\alpha,\beta; \boldsymbol{x})\over\partial\beta},
\\[0,5cm] 
\displaystyle
{\partial l(\boldsymbol{\theta}; \boldsymbol{x})\over\partial\delta}
&=
\displaystyle
-{n\over Z_{\boldsymbol{\theta}}}\, {\partial Z_{\boldsymbol{\theta}}\over\partial\delta}
-\sum_{i=1}^{n}
{2\delta(1-\delta x_i)\over 1+(1-\delta x_i)^2},
\end{array}
\end{align}
where
\begin{align*}
{\partial l_Y(\alpha,\beta; \boldsymbol{x})\over\partial\alpha}
&=
{n\over\alpha}-n\log(\beta)+\sum_{i=1}^{n}\log(x_i)-
{1\over\beta^{\alpha}} \sum_{i=1}^{n}x_i^\alpha \log(x_i),
\\[0,2cm]
{\partial l_Y(\alpha,\beta; \boldsymbol{x})\over\partial\beta}
&=
-{n\alpha\over\beta}+{\alpha\over\beta^{\alpha+1}} \sum_{i=1}^{n}x_i^\alpha,
\end{align*}
and 
\begin{align*}
{\partial Z_{\boldsymbol{\theta}}\over\partial\alpha}
&=
-{2\delta^2\beta^2\over\alpha^2}\, \Psi^{(0)}\bigg(1+{2\over\alpha}\bigg)+ {2\delta\beta\over\alpha^2}\, \Psi^{(0)}\bigg(1+{1\over\alpha}\bigg),
\\[0,2cm]
{\partial Z_{\boldsymbol{\theta}}\over\partial\beta}
&=
2\delta^2\beta\Gamma\bigg(1+{2\over\alpha}\bigg)
-
2\delta\Gamma\bigg(1+{1\over\alpha}\bigg),
\\[0,2cm]
{\partial Z_{\boldsymbol{\theta}}\over\partial\delta}
&=
2\delta\beta^2\Gamma\bigg(1+{2\over\alpha}\bigg)
-
2\beta\Gamma\bigg(1+{1\over\alpha}\bigg).
\end{align*}
Here, $\Psi^{(m)}(z)={{\rm d}^{m+1}\log[\Gamma(z)] \over {\rm d}z^{m+1}}$ is the the polygamma function of order $m$.

The second-order partial derivatives of $l(\boldsymbol{\theta}; \boldsymbol{x})$ can be written as
\begin{align*}
{\partial^2 l(\boldsymbol{\theta}; \boldsymbol{x})\over\partial\alpha^2}
&=
{n\over Z_{\boldsymbol{\theta}}^2}\, \bigg({\partial Z_{\boldsymbol{\theta}}\over\partial\alpha}\bigg)^2
-{n\over Z_{\boldsymbol{\theta}}}\, {\partial^2 Z_{\boldsymbol{\theta}}\over\partial\alpha^2}
+{\partial^2 l_Y(\alpha,\beta; \boldsymbol{x})\over\partial\alpha^2},
\\[0,2cm]
{\partial^2 l(\boldsymbol{\theta}; \boldsymbol{x})\over\partial\beta^2}
&=
{n\over Z_{\boldsymbol{\theta}}^2}\, \bigg({\partial Z_{\boldsymbol{\theta}}\over\partial\beta}\bigg)^2
-{n\over Z_{\boldsymbol{\theta}}}\, {\partial^2 Z_{\boldsymbol{\theta}}\over\partial\beta^2}
+{\partial^2 l_Y(\alpha,\beta; \boldsymbol{x})\over\partial\beta^2},
\\[0,2cm]
{\partial^2 l(\boldsymbol{\theta}; \boldsymbol{x})\over\partial\delta^2}
&=
{n\over Z_{\boldsymbol{\theta}}^2}\, \bigg({\partial Z_{\boldsymbol{\theta}}\over\partial\delta}\bigg)^2
-{n\over Z_{\boldsymbol{\theta}}}\, {\partial^2 Z_{\boldsymbol{\theta}}\over\partial\delta^2}
-2\sum_{i=1}^{n}
{(1-2\delta x_i)[1+(1-\delta x_i)^2]+2\delta^2(1-\delta x_i)^2\over [1+(1-\delta x_i)^2]^2},
\\[0,2cm]
{\partial^2 l(\boldsymbol{\theta}; \boldsymbol{x})\over\partial\alpha\partial\beta}
&=
{\partial^2 l(\boldsymbol{\theta}; \boldsymbol{x})\over\partial\beta\partial\alpha}
=
{n\over Z_{\boldsymbol{\theta}}^2}\, {\partial Z_{\boldsymbol{\theta}}\over\partial\alpha}\, {\partial Z_{\boldsymbol{\theta}}\over\partial\beta}
-{n\over Z_{\boldsymbol{\theta}}}\, {\partial^2 Z_{\boldsymbol{\theta}}\over\partial\alpha\partial\beta}
+{\partial^2 l_Y(\alpha,\beta; \boldsymbol{x})\over\partial\alpha\partial\beta},
\\[0,2cm]
{\partial^2 l(\boldsymbol{\theta}; \boldsymbol{x})\over\partial\alpha\partial\delta}
&=
{\partial^2 l(\boldsymbol{\theta}; \boldsymbol{x})\over\partial\delta\partial\alpha}
=
{n\over Z_{\boldsymbol{\theta}}^2}\, {\partial Z_{\boldsymbol{\theta}}\over\partial\alpha}\, {\partial Z_{\boldsymbol{\theta}}\over\partial\delta}
-{n\over Z_{\boldsymbol{\theta}}}\, {\partial^2 Z_{\boldsymbol{\theta}}\over\partial\alpha\partial\delta},
\\[0,2cm]
{\partial^2 l(\boldsymbol{\theta}; \boldsymbol{x})\over\partial\beta\partial\delta}
&=
{\partial^2 l(\boldsymbol{\theta}; \boldsymbol{x})\over\partial\delta\partial\beta}
=
{n\over Z_{\boldsymbol{\theta}}^2}\, {\partial Z_{\boldsymbol{\theta}}\over\partial\beta}\, {\partial Z_{\boldsymbol{\theta}}\over\partial\delta}
-{n\over Z_{\boldsymbol{\theta}}}\, {\partial^2 Z_{\boldsymbol{\theta}}\over\partial\beta\partial\delta},
\end{align*}
where
\begin{align*}
{\partial^2 l_Y(\alpha,\beta; \boldsymbol{x})\over\partial\alpha^2}
&=
-{n\over\alpha^2}+{\log(\beta)\over\beta^{\alpha}} \sum_{i=1}^{n}x_i^\alpha \log(x_i) - 
{1\over\beta^{\alpha}} \sum_{i=1}^{n}x_i^\alpha \log^2(x_i),
\\[0,2cm]
{\partial^2 l_Y(\alpha,\beta; \boldsymbol{x})\over\partial\beta^2}
&=
{n\alpha\over\beta^2}-{\alpha(\alpha+1)\over\beta^{\alpha+2}} \sum_{i=1}^{n}x_i^\alpha,
\\[0,2cm]
{\partial^2 l_Y(\alpha,\beta; \boldsymbol{x})\over\partial\alpha\partial\beta}
&=
{\partial^2 l_Y(\alpha,\beta; \boldsymbol{x})\over\partial\beta\partial\alpha}
=
-{n\over\beta}+{\alpha\over\beta^{\alpha+1}} \sum_{i=1}^{n}x_i^\alpha \log(x_i),
\end{align*}
and
\begin{eqnarray}\label{Zscore}
\begin{array}{lllll}
\displaystyle
{\partial^2 Z_{\boldsymbol{\theta}}\over\partial\alpha^2}
&= \displaystyle
{4\delta^2\beta^2\over\alpha^3}\, \Psi^{(0)}\bigg(1+{2\over\alpha}\bigg)+ {4\delta^2\beta^2\over\alpha^4}\, \Psi^{(1)}\bigg(1+{2\over\alpha}\bigg)
-
{4\delta\beta\over\alpha^3}\, \Psi^{(0)}\bigg(1+{1\over\alpha}\bigg)- {2\delta\beta\over\alpha^4}\, \Psi^{(1)}\bigg(1+{1\over\alpha}\bigg),
\\[0,5cm]
\displaystyle
{\partial^2 Z_{\boldsymbol{\theta}}\over\partial\beta^2}
&= \displaystyle
2\delta^2\Gamma\bigg(1+{2\over\alpha}\bigg),
\\[0,5cm]
\displaystyle
{\partial^2 Z_{\boldsymbol{\theta}}\over\partial\delta^2}
&= \displaystyle
2\beta^2\Gamma\bigg(1+{2\over\alpha}\bigg),
\\[0,5cm]
\displaystyle
{\partial^2 Z_{\boldsymbol{\theta}}\over\partial\alpha\partial\beta}
&= \displaystyle
{\partial^2 Z_{\boldsymbol{\theta}}\over\partial\beta\partial\alpha}
=
-{4\delta^2\beta\over\alpha^2}\, \Psi^{(0)}\bigg(1+{2\over\alpha}\bigg)+ {2\delta\over\alpha^2}\, \Psi^{(0)}\bigg(1+{1\over\alpha}\bigg),
\\[0,5cm]
{\partial^2 Z_{\boldsymbol{\theta}}\over\partial\alpha\partial\delta}
&= \displaystyle
{\partial^2 Z_{\boldsymbol{\theta}}\over\partial\delta\partial\alpha}
=
-{4\delta\beta^2\over\alpha^2}\, \Psi^{(0)}\bigg(1+{2\over\alpha}\bigg)+ {2\beta\over\alpha^2}\, \Psi^{(0)}\bigg(1+{1\over\alpha}\bigg),
\\[0,5cm]
\displaystyle
{\partial^2 Z_{\boldsymbol{\theta}}\over\partial\beta\partial\delta}
&= \displaystyle
{\partial^2 Z_{\boldsymbol{\theta}}\over\partial\delta\partial\beta}
=
4\delta\beta\Gamma\bigg(1+{2\over\alpha}\bigg)
-2\Gamma\bigg(1+{1\over\alpha}\bigg).
\end{array}
\end{eqnarray}

\subsection{Maximum $\log_q$ likelihood estimation method}
The generalization of MLE with $\log$ form  is defined as maximum $\log_q$ likelihood estimation (MLqE). $\log_q$ is $q$-deformed logarithm of natural logarithm ($\log$) in MLE \cite{Tsallisbook09}. The concavity property of $\log_q$ guarantees to replace $\log$ in MLE by $\log_q$ \cite{Lindsay94,CanKor18,Jan1,Jan2}.
%
\begin{align}\label{likeliqtheta}
l_q(\boldsymbol{\theta}; \boldsymbol{x}) = \sum_{i=1}^{n}  \log_q\big[f(x_i;\boldsymbol{\theta})\big],
\end{align}
where $\log_q(f)={f^{1-q}-1 \over 1-q}, q \in \mathbb{R} \backslash \{1\}$. If $q \rightarrow 1$, then we have $\log$ and so equation \eqref{likelitheta} is dropped. When $\log_q$ likelihood is compared with $\log$ likelihood, the deformation gives an advantage to manage the modelling capability of a parametric model.   $\log_q(f)$ as an objective or loss function in MLqE in the M-estimation  is used to manage efficiency and robustness for contamination.  Note that $\log_q(f)$ is M-function \cite{MLqEBio,CanKor18}. This estimation method is applied to BWeibull$(\boldsymbol{\theta})$.

The $l_q$ function in equation \eqref{likeliqtheta} is optimized according to parameters $\boldsymbol{\theta}$ in order to get the estimators $\widehat{\boldsymbol{\theta}}$ of $\boldsymbol{\theta}$. Alternatively, a system of estimating equations can be solved simultaneously  according to the corresponding estimating equation for the parameters ${\alpha}$, ${\beta}$ and ${\delta}$. We omit to rewrite the system $\sum_{i=1}^{n} f(x_i;\boldsymbol{\theta})^{1-q} \, {\partial \log[f(x_i;\boldsymbol{\theta})] \over \partial \boldsymbol{\theta}}= \boldsymbol{0}$. Note that ${\partial \log[f(x_i;\boldsymbol{\theta})] \over \partial \boldsymbol{\theta}}$ is given by equation \eqref{abdscore} (with $n=1$) for $\alpha,\beta$ and $\delta$.


\subsection{Fisher information based on $\log$ and $\log_q$}
Fisher information matrix (FI) is given by
\begin{equation}\label{F2case}
 \mathbb{E}\left[
 \bigg({\partial l \over \partial \boldsymbol{\theta}}\bigg)
 \bigg({\partial l \over \partial \boldsymbol{\theta}}\bigg)^{\intercal }\, \right]
 =
 \mathbb{E}\left[{\partial^2 l \over \partial \boldsymbol{\theta} \partial \boldsymbol{\theta}^{\intercal}}\right]
\end{equation}
and it is a tool to provide the variances of $\widehat{\boldsymbol{\theta}}$, i.e., Var($\widehat{\boldsymbol{\theta}}$). As it is well-known, the inverse of FI gives  Var($\widehat{\boldsymbol{\theta}}$) if the inverse of FI matrix exists \cite{Fisher25,LehmannCas98}. If the FI matrix is singular, then the generalized inverse techniques are used to get the inverse of FI even though we have information loss due to the used generalized inverse \cite{LiYeh12gen}. 

FI matrix based on $\log_q$ is given by the following form:
\begin{equation}\label{abdF}
F=n\begin{bmatrix} 
E_{\alpha \alpha} & E_{\alpha \beta} & E_{\alpha \delta} \\
E_{\beta \alpha} & E_{\beta \beta} & E_{\beta \delta}  \\
E_{\delta \alpha} & E_{\delta \beta} & E_{\delta \delta}  
\end{bmatrix},
\end{equation}
\noindent where $n$ is sample size. $E$ is integral for partial derivatives of $\log(L)$ according to parameters and it is taken over probability density function $f(x;\boldsymbol{\theta})$. The subscript in $E$ represents second-order partial derivatives of $\log(L)$ according to parameters $\alpha, \beta$ and $\delta$. In other words, if $X\sim \text{BWeibull}(\boldsymbol{\theta})$ then
\begin{align*}
E_{\alpha \alpha}
&=
{1\over Z_{\boldsymbol{\theta}}^2}\, \bigg({\partial Z_{\boldsymbol{\theta}}\over\partial\alpha}\bigg)^2
-{1\over Z_{\boldsymbol{\theta}}}\, {\partial^2 Z_{\boldsymbol{\theta}}\over\partial\alpha^2}
+-{1\over\alpha^2}+{\log(\beta)\over\beta^{\alpha}}\, \mathbb{E}\big[X^\alpha \log(X)\big] - 
{1\over\beta^{\alpha}}\, \mathbb{E}\big[X^\alpha \log^2(X)\big],
\\[0,2cm]
E_{\beta \beta}
&= 
{1\over Z_{\boldsymbol{\theta}}^2}\, \bigg({\partial Z_{\boldsymbol{\theta}}\over\partial\beta}\bigg)^2
-{1\over Z_{\boldsymbol{\theta}}}\, {\partial^2 Z_{\boldsymbol{\theta}}\over\partial\beta^2}
+{\alpha\over\beta^2}-{\alpha(\alpha+1)\over\beta^{\alpha+2}}\, \mathbb{E}\big(X^\alpha\big),
\\[0,2cm]
E_{\delta \delta}
&= 
{1\over Z_{\boldsymbol{\theta}}^2}\, \bigg({\partial Z_{\boldsymbol{\theta}}\over\partial\delta}\bigg)^2
-{1\over Z_{\boldsymbol{\theta}}}\, {\partial^2 Z_{\boldsymbol{\theta}}\over\partial\delta^2}
-2\mathbb{E}\left[
{(1-2\delta X)\big(1+(1-\delta X)^2\big)+2\delta^2(1-\delta X)^2\over \big(1+(1-\delta X)^2\big)^2}
\right],
\\[0,2cm]
E_{\alpha \beta}
&= 
E_{\beta \alpha}
=
{1\over Z_{\boldsymbol{\theta}}^2}\, {\partial Z_{\boldsymbol{\theta}}\over\partial\alpha}\, {\partial Z_{\boldsymbol{\theta}}\over\partial\beta}
-{1\over Z_{\boldsymbol{\theta}}}\, {\partial^2 Z_{\boldsymbol{\theta}}\over\partial\alpha\partial\beta}
-{1\over\beta}+{\alpha\over\beta^{\alpha+1}}\, \mathbb{E}\big[X^\alpha \log(X)\big],
\\[0,2cm]
E_{\alpha \delta}
&= 
E_{\delta \alpha}
=
{1\over Z_{\boldsymbol{\theta}}^2}\, {\partial Z_{\boldsymbol{\theta}}\over\partial\alpha}\, {\partial Z_{\boldsymbol{\theta}}\over\partial\delta}
-{1\over Z_{\boldsymbol{\theta}}}\, {\partial^2 Z_{\boldsymbol{\theta}}\over\partial\alpha\partial\delta},
\\[0,2cm]
E_{\beta \delta}
&= 
E_{\delta \beta}
=
{1\over Z_{\boldsymbol{\theta}}^2}\, {\partial Z_{\boldsymbol{\theta}}\over\partial\beta}\, {\partial Z_{\boldsymbol{\theta}}\over\partial\delta}
-{1\over Z_{\boldsymbol{\theta}}}\, {\partial^2 Z_{\boldsymbol{\theta}}\over\partial\beta\partial\delta},
\end{align*}
where ${\partial^2 Z_{\boldsymbol{\theta}}\over\partial\theta\partial\theta'}$, $\theta,\theta' \in\{\alpha,\beta,\delta\}$, are given in Item \eqref{Zscore}  and 
\begin{align}\label{xalogx}
&\mathbb{E}\big[X^\alpha \log(X)\big]
=
\dfrac{
\beta^{\alpha} 
\big[
2-2\gamma + 2\alpha \log(\beta)
\big]
}{
\displaystyle 
{\alpha\over 2} \bigg[
2+
\delta^2\beta^2\Gamma\bigg(1+{2\over\alpha}\bigg)
-
2\delta\beta\Gamma\bigg(1+{1\over\alpha}\bigg)\bigg]-\alpha
}
\\[0,2cm] 
\nonumber
&+
\dfrac{ \displaystyle
\beta^{\alpha+1} \delta 
\bigg\{
-2\Gamma\bigg(2+{1\over\alpha}\bigg)
\bigg[
\alpha\log(\beta) 
+
\psi^{(0)}\bigg(2+{1\over\alpha}\bigg)  
\bigg]
+
\beta \delta \Gamma\bigg(2 + {2\over\alpha}\bigg)
\bigg[\alpha\log(\beta)+\psi^{(0)}\bigg(2+{1\over\alpha}\bigg) \bigg] 
\bigg\}
}{\displaystyle 
{\alpha\over 2} \bigg[
2+
\delta^2\beta^2\Gamma\bigg(1+{2\over\alpha}\bigg)
-
2\delta\beta\Gamma\bigg(1+{1\over\alpha}\bigg)\bigg]-\alpha},
\end{align}

\begin{align}\label{xalog2x}
&\mathbb{E}\big[X^\alpha \log^2(X)\big]
=
\dfrac{\displaystyle
\beta^{\alpha}\big[
-12 \gamma + 6 \gamma^2 + \pi^2 + 12 \alpha \log(\beta) - 12 \gamma \alpha \log(\beta)
+6 \alpha^2 \log(\beta)^2 
\big]
}{\displaystyle 3\alpha^2 \bigg[
2+
\delta^2\beta^2\Gamma\bigg(1+{2\over\alpha}\bigg)
-
2\delta\beta\Gamma\bigg(1+{1\over\alpha}\bigg)\bigg]}
\\[0,2cm]
\nonumber
&
-
\dfrac{\displaystyle
6\beta^{\alpha+1} \delta \Gamma\bigg(2+{1\over\alpha}\bigg) 
\bigg[
\alpha^2 \log^2(\beta)+2\alpha \log(\beta)\psi^{(0)}\bigg(2+{1\over\alpha}\bigg) + \psi^{(0)}\bigg(2+{1\over\alpha}\bigg)^2 + \psi^{(1)}\bigg(2+{1\over\alpha}\bigg)
\bigg]
}{\displaystyle 3\alpha^2 \bigg[
2+
\delta^2\beta^2\Gamma\bigg(1+{2\over\alpha}\bigg)
-
2\delta\beta\Gamma\bigg(1+{1\over\alpha}\bigg)\bigg]}
\\[0,2cm]
\nonumber
&+
\dfrac{ \displaystyle
3\beta^{\alpha+2}\delta^2 \Gamma\bigg(2+{2\over\alpha}\bigg) 
\bigg[
\alpha^2 \log(\beta)^2 + 2 \alpha \log(\beta)\psi^{(0)}\bigg(2+{2\over\alpha}\bigg)+\psi^{(0)}\bigg(2+{2\over\alpha}\bigg)^2+\psi^{(1)}\bigg(2+{2\over\alpha}\bigg)
\bigg]
}{\displaystyle  3\alpha^2 \bigg[
2+
\delta^2\beta^2\Gamma\bigg(1+{2\over\alpha}\bigg)
-
2\delta\beta\Gamma\bigg(1+{1\over\alpha}\bigg)\bigg]},
\end{align}
where
$\gamma=-{{\rm d} \Gamma(x)\over {\rm d}x}\big\vert_{x=1}\approx 0.57721$ is the Euler-Mascheroni constant, and
\begin{align}\label{xa}
\mathbb{E}\big(X^\alpha \big)
= 
\beta^{\alpha}\,
\dfrac{
2+\beta \delta
\bigg[\beta \delta \Gamma\bigg(2+\dfrac{2}{\alpha}\bigg)-2\Gamma\bigg(2+\dfrac{1}{\alpha}\bigg)\bigg]
}{\displaystyle 
2+
\delta^2\beta^2\Gamma\bigg(1+{2\over\alpha}\bigg)
-
2\delta\beta\Gamma\bigg(1+{1\over\alpha}\bigg)}.
\end{align}
The proofs of equations \eqref{xalogx}-\eqref{xa} are followed by the integral kernel given by item (6) \cite{Cankaya2018} and taking derivative of both sides of integral or the software Mathematica 12.0$\copyright$ can be used.
 
The definition of Fisher information based on $\log_q$ is given by, see \cite{Plastinoetal97,CanKor18}, 
\begin{align}\label{qFisher}
_qF(\boldsymbol{\theta})
\Big\vert_{\boldsymbol{\theta}\coloneqq\widehat{\boldsymbol{\theta}}}	
=  
_q\mathbb{E}
\left[{\partial l \over \partial \boldsymbol{\theta}}  \left( {\partial l\over \partial \boldsymbol{\theta}}  \right)^{\intercal}  f^{1-q} (X;\boldsymbol{\theta}) \right] 
= 
\int_{0}^\infty {\partial l \over \partial \boldsymbol{\theta}}   
\left({\partial \l \over \partial \boldsymbol{\theta}} \right)^{\intercal} 
f^{2-q}(x;\boldsymbol{\theta}) \,  \ud x,
\end{align}
where $l = \sum_{i=1}^{n} \log[f(x_i;\boldsymbol{\theta})]$ and $f(x;\boldsymbol{\theta})$ is a parametric model.  The calculation of integral is performed for one variable case $x_i$ from observations $x_1,x_2,\dots,x_n$. Since it is replicated for $n$ case, FI matrix can be rewritten as the form in equation \eqref{abdF}.  If $q=1$ in equation \eqref{qFisher}, then $_qF$ drops to $F$ in equation \eqref{abdF}. The connection between Fisher information and Tsallis $q$-entropy is proved by \cite{CanKor18}. Since the analytical tractability of integral calculation cannot be easy to follow, the numerical integration technique in Mathematica 12.0$\copyright$ is used to get the values of elements in matrix in  equation \eqref{qFisher}. 

\section{Application on real data sets}\label{appsection}

The real data sets are modelled by using BWeibull and BGamma distributions which have bimodality property. The parameter $\delta$ that gives an advantage to model the bimodal data  which can occur due to contamination or irregularity into underlying or main distribution as an indicator for abnormalization in the working principle of a phenomena in the universe is used in Weibull and Gamma distributions. As it is clearly observed, the parameter $\delta$ with $1+(1-\delta x)^2$ produces a modality due to fact that the function is polynomial with order $2$. We confine ourself the real data sets which can be modelled by function having the  bimodality property. 

MLE and MLqE methods are used to estimate the model parameters.  If a data set has two peaks, then it is expected that BWeibull and BGamma distributions capture the peaks and so they are flexible when they are compared with Weibull and Gamma distributions. Since BWeibull and BGamma distributions have analytical expression for their CDF, it is advantegous to use them for getting  statistics from Kolmogorov-Smirnov (KS) and Cram\'{e}r–von Mises (CVM). Since we use $\log$ and $\log_q$ in MLE and MLqE methods for getting the estimates of parameters, using information criteria (IC) such as Akaike and Bayesian, etc is not appropriate in view of the fact that $\log$ and $\log_q$ are not comparable. As it is well-known \cite{Hamparsum87}, the lack of fit part of IC depends on $\log$. There should be an equilibrium among the values of IC applied to different probability density functions. Since there does not exist have the equilibrium between $\log$ and $\log_q$ to make a comparison for lack of fit part of IC, it is not appropriate to use IC and we consult goodness of fit tests which are KS and CVM in order to test the modelling competence of the used objective functions produced by $\log(f)$ and $\log_q(f)$ \cite{God60,Hub64} as objective functions from MLE and MLqE, respectively. Note that  MLE and MLqE are methods used to estimate the parameters of $f(x;\boldsymbol{\theta})$. However, $\log(f)$ are alternative step for calculation in order to reach the estimators $\widehat{\boldsymbol{\theta}}$ from MLE. This alternative situation is a way to generalize MLE method as M-estimation method \cite{God60,Hub64}. $q$ in $\log_q$ is a parameter to derive the different forms of $f(x;\boldsymbol{\theta})$ \cite{Bercher12a}. For this reason,   $\log_q$ likelihood and its special form ($\log$ likelihood) are applied to estimate the parameters $\boldsymbol{\theta}=(\alpha,\beta,\delta)$.

The functions in  equations \eqref{likelitheta}-\eqref{likeliqtheta} are nonlinear and the estimating equations for the parameters $\alpha,\beta$ and $\delta$ are also nonlinear  to get the estimators $\widehat{\alpha},\widehat{\beta}$ and $\widehat{\delta}$ analytically.  The stochasticity and metaheuristic algorithms are tools used to optimize the nonlinear functions according to the parameters to get the estimates of $\widehat{\alpha},\widehat{\beta}$ and $\widehat{\delta}$.  The optimization of functions in equations \eqref{likelitheta}-\eqref{likeliqtheta} is performed by a 'metaheuristicOpt' package with 'metaOpt' function in free statistical software $R$ version $4.0.2$. The chosen algorithm in 'metaOpt' is Harmony Search (HS) \cite{harmony} due to fact that we have the highest p-values of KS and CVM, which can show that BWeibull and BGamma with $\log$ and $\log_q$ should be optimized by 'metaOpt' with HS. After applying the 'metaOpt' with HS to functions in equations \eqref{likelitheta} and \eqref{likeliqtheta}, we have the estimated values of $\widehat{\boldsymbol{\theta}}$ from $\log$ and $\log_q$ likelihood estimaton methods. The parameter $q$ in $\log_q$ likelihood estimation method is chosen according the biggest values of probability (p-value) for KS and CVM test statistics. Note that the objective function $\log_q(f)$ takes different forms for each value of $q$. Thus, for only one parametric model $f(x;\boldsymbol{\theta})$, we can have neighborhoods  of $f(x;\boldsymbol{\theta})$ for different value of $q$, which helps us to manage the efficiency and robustness while performing the modelling on data. 

BWeibull and BGamma distributions with their $\log$, $\log_q$ forms, i.e., $\log(f)$ and $\log_q(f)$ as objective functions  are applied at $5$ real data sets in order to test the modelling competence on the data sets.  The c.d. and p.d. functions abbreviated as CDF and PDF are superimposed on the empirical c.d. function and histogram of data to depict the best modelling chosen according to the p-values of KS and CVM (see Figures \ref{figcarbonfibers}-\ref{figgastric}). In Figures \ref{figcarbonfibers}(a)-\ref{figgastric}(a), CDF is more accurate illustration to depict the modelling of function with estimates and also the illustrations of CDF and PDF from Figures \ref{figcarbonfibers}-\ref{figgastric}) show that the modelling on data sets is accomplished well. Using the probability values of KS and CVM is necessary to provide  the homogenity for comparison with their corresponding test statistics of KS and CVM.

{\bf Example 1 }: The data are  the breaking stress of carbon fibers. First 10 subgroups in control which consists of $50$ observations are modelled by using Weibull distribution \cite{carbonfibers}.  Table \ref{tablecarbonfibers} represents the values of estimates for the carbon fibers data modelled by BWeibull and BGamma with $\log$ and $\log_q$ as objective functions. The role of MLqE is observed. That is, there is an important increasing on the modelling competence when $\log_q(f)$ from MLqE is used. Note that it is observed that there is a good accommodation between this data as an empricial distribution and $\log_q(f)$ from MLqE, which can lead to have the highest p-value of KS. The parameter $q$ as tuning constant produces the neighborhoods of a parametric model, i.e., objective function, in order to increase the modelling competence of a parametric model \cite{Bercher12a}. 

\begin{table}[htbp]
	 \caption{Inference values for parameters, KS and CVM values for  carbon fibers data}
	   \label{tablecarbonfibers}
				\scalebox{0.88}
	{   
	\begin{tabular}{c|ccccccc}
Model		& $\hat{\alpha}$ & $\hat{\beta}$ & $\hat{\delta}$ & KS & p-value(KS) & CVM & p-value(CVM)\\ \hline
BWeibull($\hat{{\theta}}_{\text{MLE}}$)		& 3.6961(0.0807)&  2.7482(0.0306)& 2.3073(1.2630) & 0.14&0.7112&0.0690 &0.1556 \\

BWeibull($\hat{{\theta}}_{\text{MLqE}}$), $q=0.8$		&  4.8707(0.1450)& 2.8434(0.0378)& 1.8986(2.4058) &0.06 & $\approx$ {\bf 1} &   0.0158& \bf{0.1641}\\

BGamma($\hat{{\theta}}_{\text{MLE}}$)		&14.9970(0.6521)& 5.8832(0.2449)& 1.5823(1.0677)   & 0.12& 0.8643 & 0.0734 &0.1549\\

BGamma($\hat{{\theta}}_{\text{MLqE}}$), $q=0.75$			& 14.9994(0.7332) &5.9217(0.3195) &1.7915(3.0210) &0.10 & 0.9639 & 0.0706 &0.1553 \\ \hline

	\end{tabular}
}
\begin{footnotesize}
KS: Kolmogorov-Smirnov, CVM: Cram\'{e}r–von Mises, {\bf Bold} represents the best fitting. 
\end{footnotesize}
\end{table}

\begin{figure}[htb!]
  \centering
  \subfigure[(Empirical) CDF of BWeibull($\widehat{{\theta}}_{\text{MLqE}}$)]{\label{fig:carbonfibersCDFq}\includegraphics[width=0.41\textwidth]{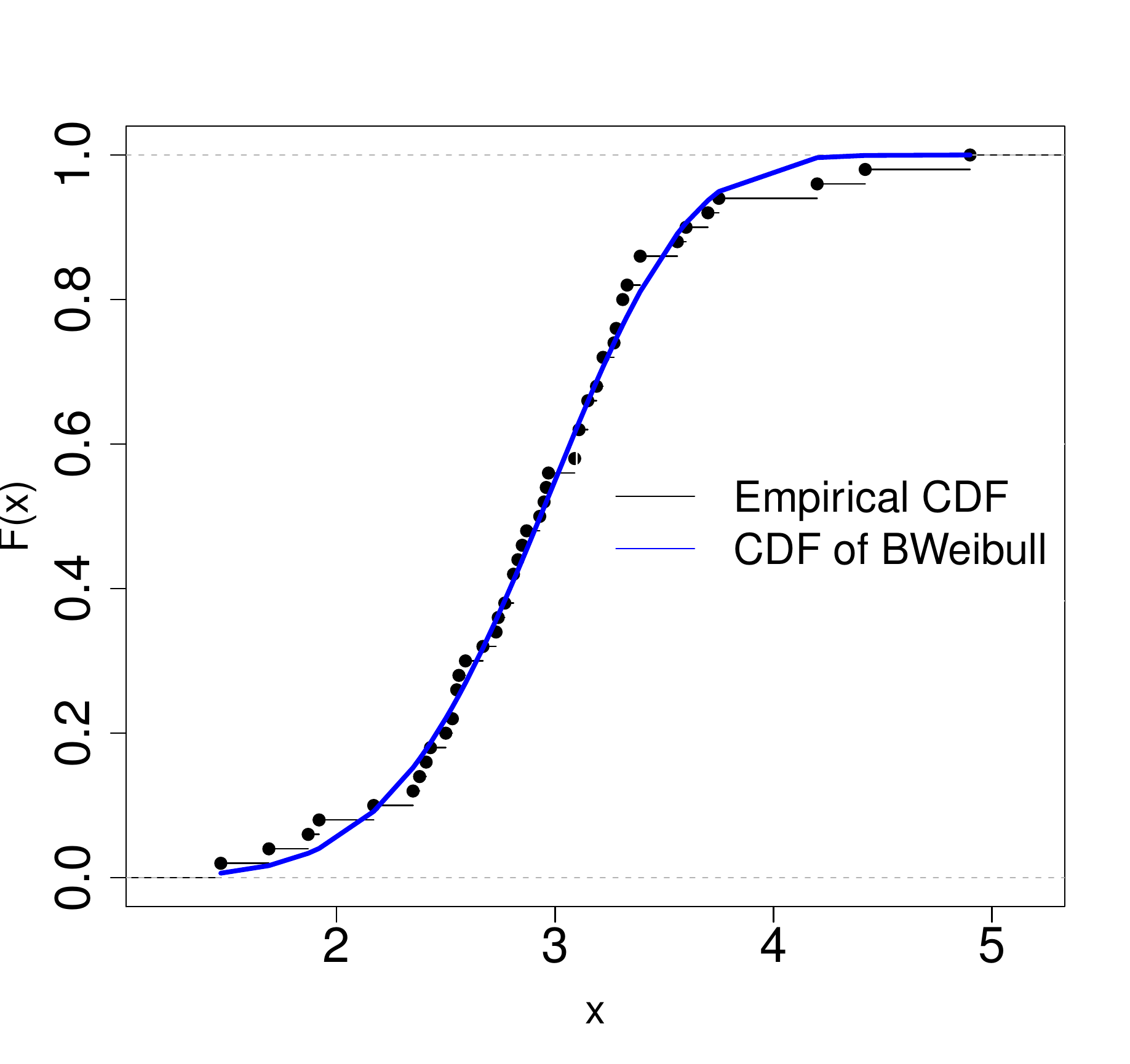}}
  \subfigure[Frequency and PDF of BWeibull($\widehat{{\theta}}_{\text{MLqE}}$)]{\label{fig:carbonfibersPDFq}\includegraphics[width=0.41\textwidth]{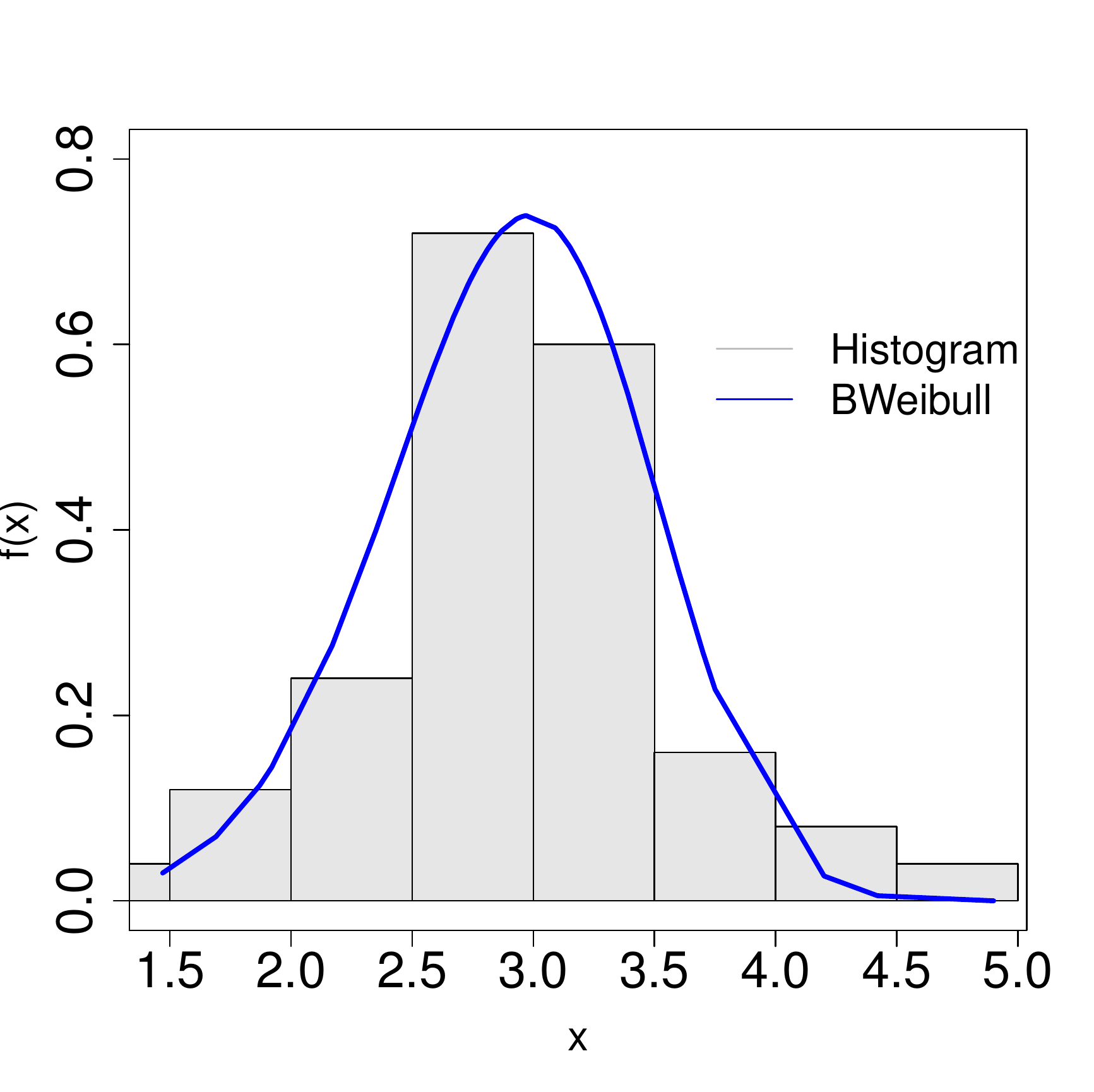}}
  \caption{The fitted values from MLqE of parameters of bimodal Weibull distribution for  carbon fibers data}
  \label{figcarbonfibers}
\end{figure}

{\bf Example 2 }: The data are maximum ozone concentrations labeled as "o3max" in "goft" package in free statistical software $R$ version $4.0.2$. 24 observations are included by o3max.  Table \ref{tableo3max} shows  MLE and MLqE with BWeibull are better than that of BGamma for the p-values of KS and CVM. Note that BWeibull with $\log$ and $\log_q$ has an important fitting competence when compared with that of BGamma.  
\begin{table}[htbp]
		\caption{Inference values for parameters, KS and CVM values for o3max data }
		  \label{tableo3max}
			\scalebox{0.86}
	{
	\begin{tabular}{c|ccccccc}
Model		& $\hat{\alpha}$ & $\hat{\beta}$ & $\hat{\delta}$ & KS & p-value(KS) & CVM & p-value(CVM)\\ \hline
BWeibull($\hat{{\theta}}_{\text{MLE}}$)	&10.0851(0.7039)& 0.1728(0.0018)& 14.9983(10.9941) & 0.1667 &{\bf 0.8928} 
 & 0.0556  &{\bf 0.1577}  \\
BWeibull($\hat{{\theta}}_{\text{MLqE}}$), $q=0.99$		&  10.1609(0.7079)& 0.1728(0.0018) & 14.9995(11.1710) & 0.1667 &{\bf 0.8928} 
&0.0556 & {\bf 0.1577}\\
BGamma($\hat{{\theta}}_{\text{MLE}}$)		&3.2524(0.2537) &14.9963(1.0863)& 1.6134(1.0448) & 0.4583 & 0.0129 &0.6667 & 0.0856
 \\
BGamma($\hat{{\theta}}_{\text{MLqE}}$), $q=0.99$	&  3.2513(0.2542) &14.9935(1.0867) &1.6130(1.0585)& 0.4583 & 0.0129 &0.6667 & 0.0856
 \\ \hline
	\end{tabular}
}
\end{table}


\begin{figure}[htbp]
	\centering
	\subfigure[(Empirical) CDF of BWeibull($\widehat{{\theta}}_{\text{MLqE}}$)]{\label{fig:o3maxCDFq}\includegraphics[width=0.41\textwidth]{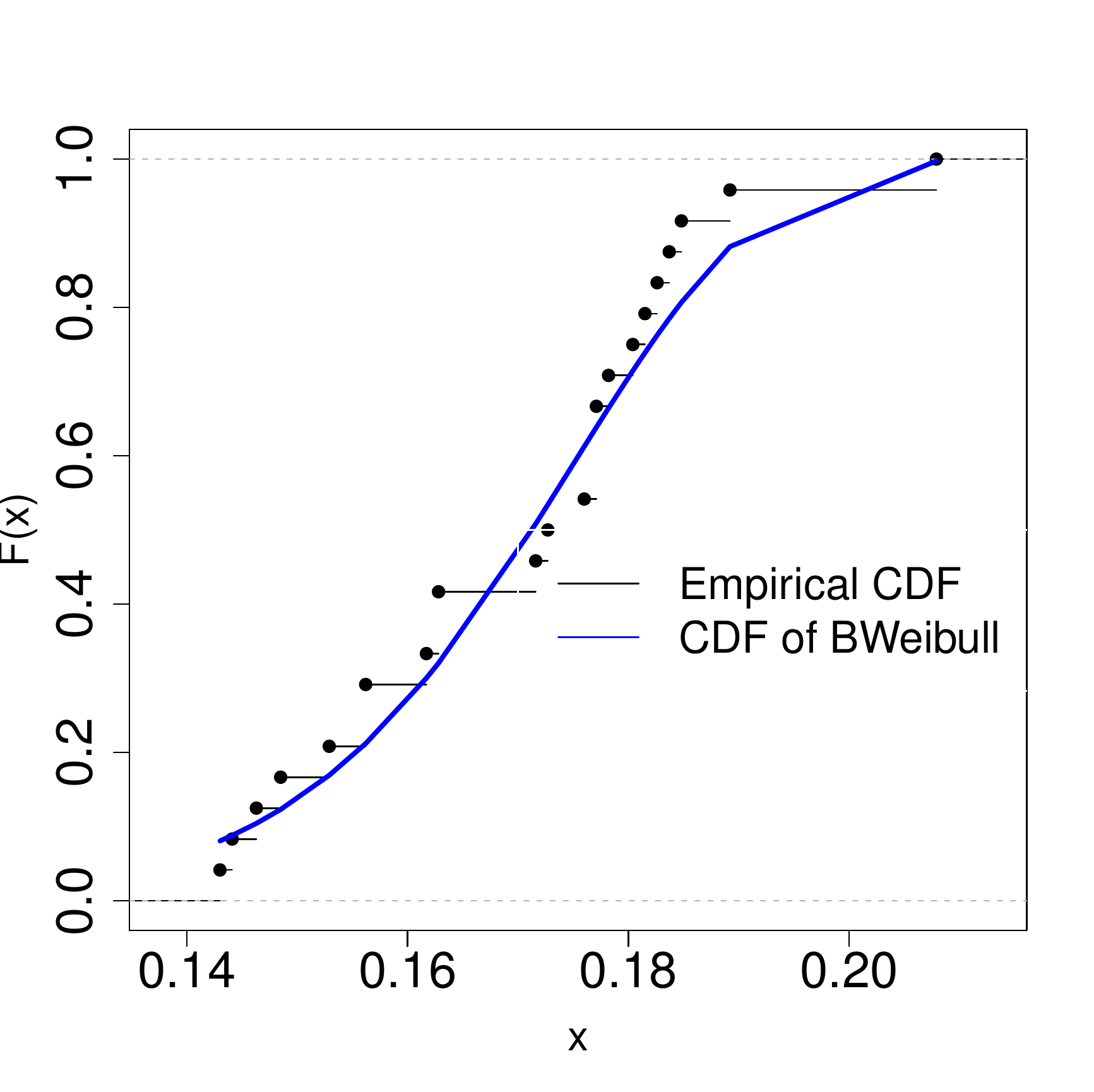}}
	\subfigure[Frequency and PDF of BWeibull($\widehat{{\theta}}_{\text{MLqE}}$)]{\label{fig:o3maxPDFq}\includegraphics[width=0.41\textwidth]{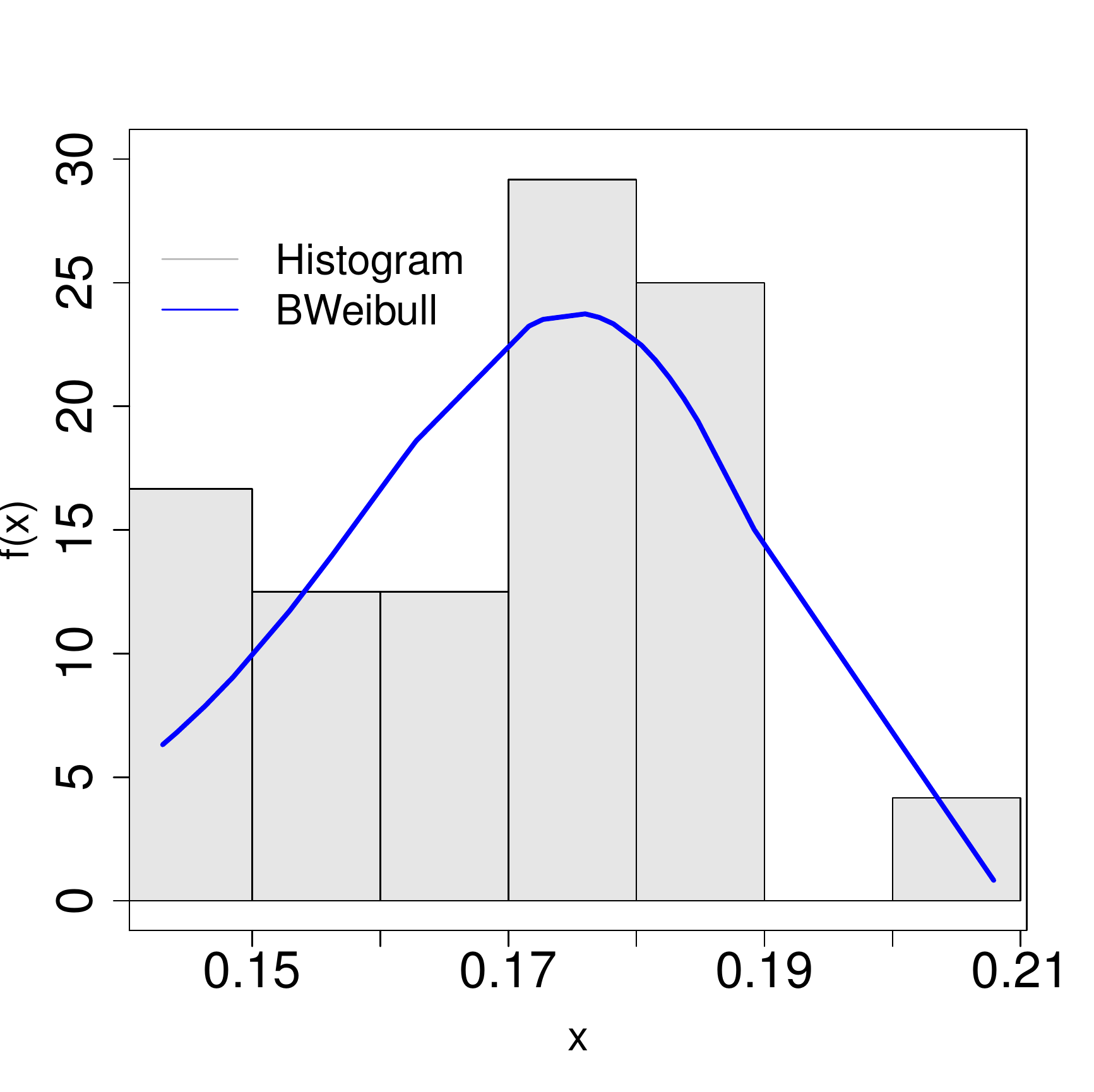}}
	\caption{The fitted values from MLqE of parameters of bimodal Weibull distribution for o3max data}
  \label{figo3max}
\end{figure}

{\bf Example 3 }: Growth hormone data was also modelled by \cite{Alizadeh}. The number of observation is 35. MLqEs with BWeibull and BGamma distributions outperform when compared with parametric model in  \cite{Alizadeh}.  
\begin{table}[htbp]
	\caption{ Inference values for parameters, KS and CVM values for  growth hormone data}
	\label{tableghormone}
	\scalebox{0.88}
	{
		\begin{tabular}{c|ccccccc}
			Model		& $\hat{\alpha}$ & $\hat{\beta}$ & $\hat{\delta}$ & KS & p-value(KS) & CVM & p-value(CVM)\\ \hline
			BWeibull($\hat{{\theta}}_{\text{MLE}}$)	& 1.1344(0.0370) &2.1303(0.1213)& 3.8210(1.0365)	 &0.1429 &  0.8674
			& {\it 0.0651} & {\it0.1562}\\
			BWeibull($\hat{{\theta}}_{\text{MLqE}}$), $q=0.85$		& 1.2489(0.0548) &2.2989(0.1541)& 3.5132(1.4327) &0.0857 &0.9995 & 0.0357 &0.1608 \\
			BGamma($\hat{{\theta}}_{\text{MLE}}$)&1.9136(0.1685) &0.7620(0.0319) &3.6132(1.7301) &0.1429 &0.8674 & {\it 0.0765} &{\it 0.1544} \\
			BGamma($\hat{{\theta}}_{\text{MLqE}}$), $q=0.87$&2.4648(0.2891) &0.9224(0.0513) &3.2896(2.9941) &0.0857 &0.9995 &0.0349 & {\bf 0.1610} \\ \hline
		\end{tabular}
	}
	\begin{footnotesize}
		{\it Italic} represents  comparison for the best fitting for MLE of BWeibull and BGamma. 
	\end{footnotesize}
\end{table}

Table \ref{tableghormone} represents the values of estimates for growth hormone data modelled by objective functions. {\it Italic} represents comparison between CVM values when  MLE method is used. Note that MLE method reflects directly parametric models for which BWeibull and BGamma are used. In other words, deformation on the parametric model does not exist when we compare with $\log_q$. When MLEs of parameters of BWeibull and BGamma are compared, we observe that MLE of parameters of BWeibull is a little ahead for the performance of the fitting competence according to the values of CVM and its corresponding p-values. KS statistic and the corresponding p-values are not capable to assess the modelling competence. According to the values of p-value(CVM), MLqE with BGamma  give the best performance on the modelling data when compared with MLqE with BWeibull. 

\begin{figure}[htbp]
	\centering
	\subfigure[(Empirical) CDF of BWeibull($\widehat{{\theta}}_{\text{MLE}}$)]{\label{fig:hormoneCDFq}\includegraphics[width=0.41\textwidth]{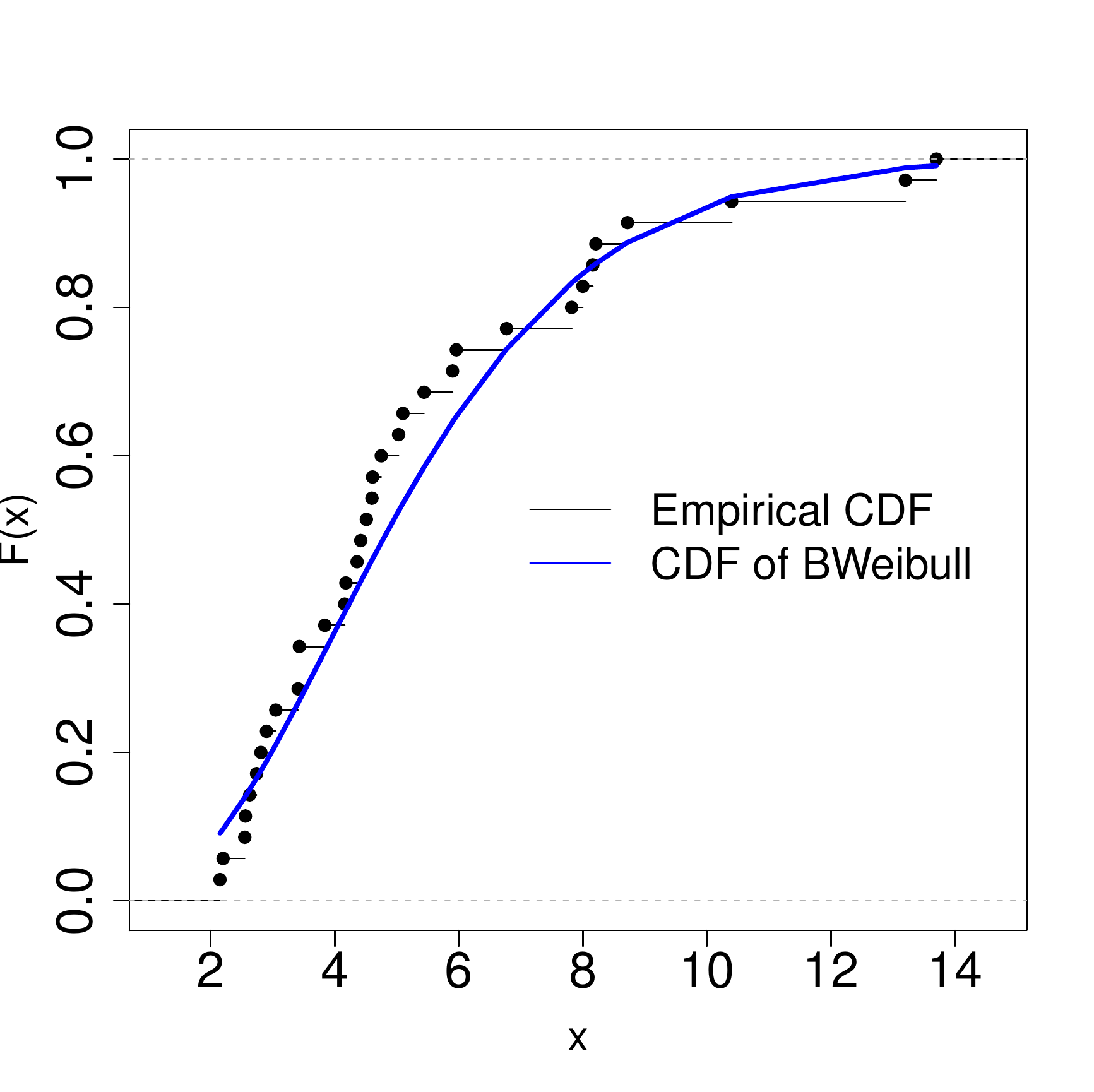}}
	\subfigure[Frequency and PDF of BWeibull($\widehat{{\theta}}_{\text{MLE}}$)]{\label{fig:hormonePDFq}\includegraphics[width=0.41\textwidth]{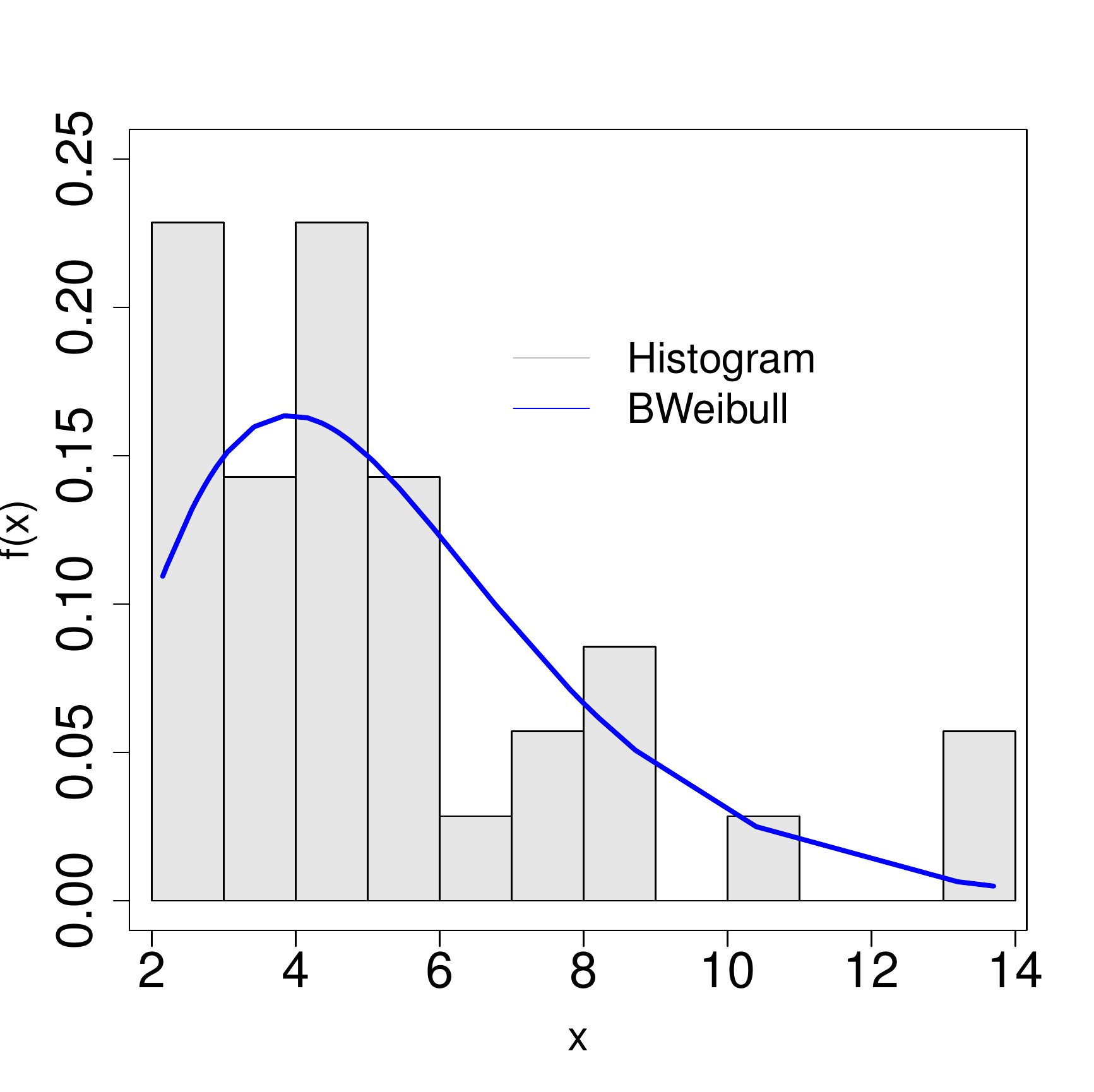}}
	\caption{The fitted values from MLE of parameters of bimodal Weibull distribution for  growth hormone data}
	\label{fighormone}
\end{figure}

{\bf Example 4 }: Wheaton River data are analyzed by \cite{Vila2020} and references therein. The number of observation is 72. MLqE with BWeibull give outperforming when we compare by MLqE with BGamma.  BGamma from \cite{Vila2020} has a good performance for fitting among the distributions \cite{Ortega11} and references therein. The advantage of using MLqE is observed when we look at the p-values of KS and CVM in Table \ref{tableriver}. 

\begin{table}[htbp]
	\caption{Inference values for parameters, KS and CVM values for  Wheaton River data}
	  \label{tableriver}
			\scalebox{0.88}
	{
	\begin{tabular}{c|ccccccc}
Model		& $\hat{\alpha}$ & $\hat{\beta}$ & $\hat{\delta}$ & KS & p-value(KS) & CVM & p-value(CVM)\\ \hline
BWeibull($\hat{{\theta}}_{\text{MLE}}$)	& 0.9721(0.0102) &5.4258(0.1397)& 0.1924(0.0050) & 0.0694 & 0.9951&0.0260 &  0.1624\\
BWeibull($\hat{{\theta}}_{\text{MLqE}}$), $q=0.99$	&  0.9770(0.0103)& 5.4536(0.1411)& 0.1910(0.0050) & 0.0694 &{\bf 0.9951} &0.0206 &{\bf 0.1633} \\
BGamma($\hat{{\theta}}_{\text{MLE}}$)	&  1.0591(0.0182)& 0.1768(0.0023)& 0.1776(0.0039) &0.0833 &0.9639
 &0.0336 & 0.1612 \\
		BGamma($\hat{{\theta}}_{\text{MLqE}}$), $q=0.99$&  1.0568(0.0184) &0.1780(0.0024)& 0.1783(0.0040) &0.0833 &0.9639
		 & 0.0287&0.1620 \\\hline
	\end{tabular}
}
\end{table}

\begin{figure}[htbp]
	\centering
	\subfigure[(Empirical) CDF of BWeibull($\widehat{{\theta}}_{\text{MLqE}}$)]{\label{fig:riverCDFq}\includegraphics[width=0.41\textwidth]{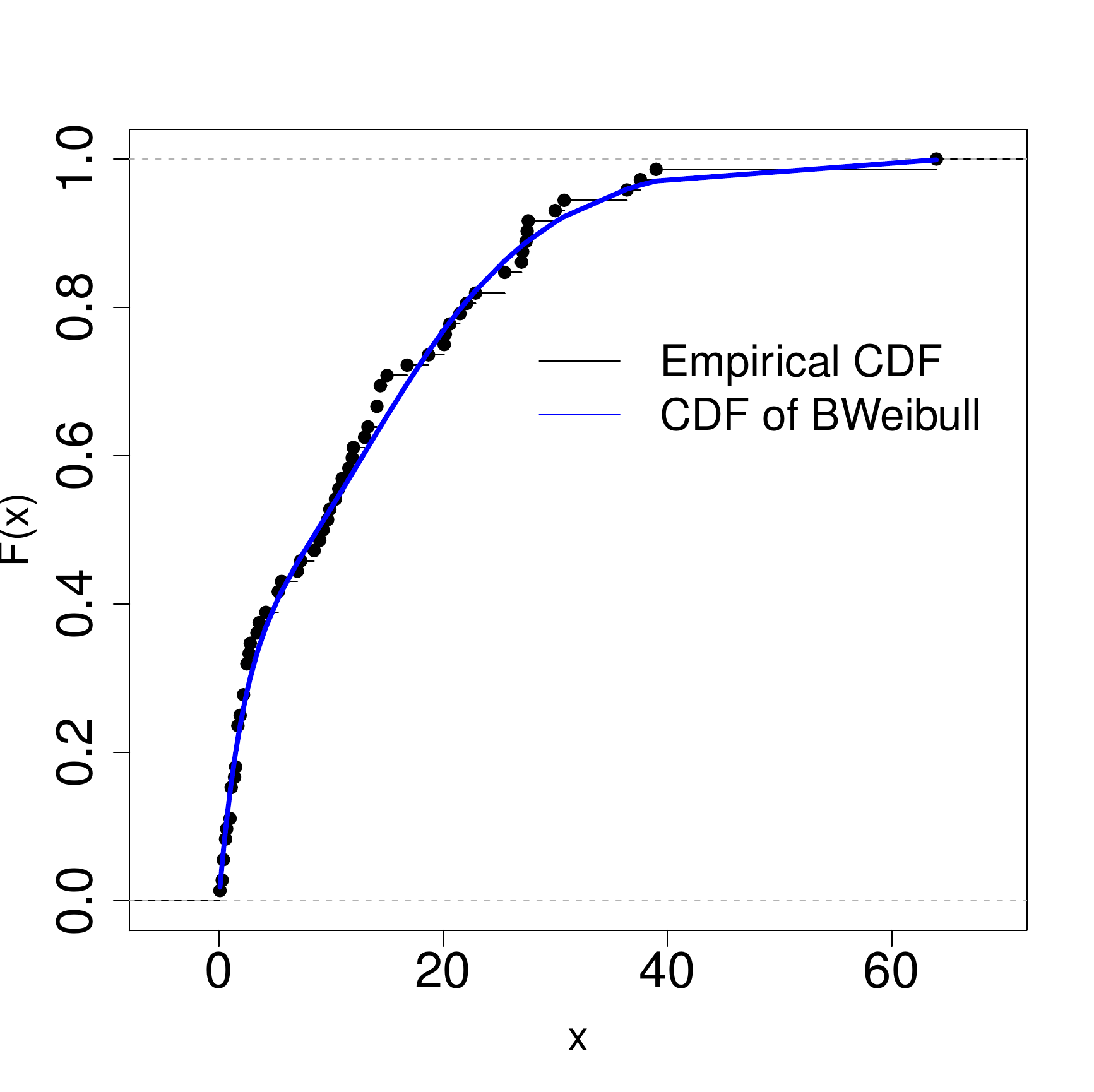}}
	\subfigure[Frequency and PDF of BWeibull($\widehat{{\theta}}_{\text{MLqE}}$)]{\label{fig:riverPDFq}\includegraphics[width=0.41\textwidth]{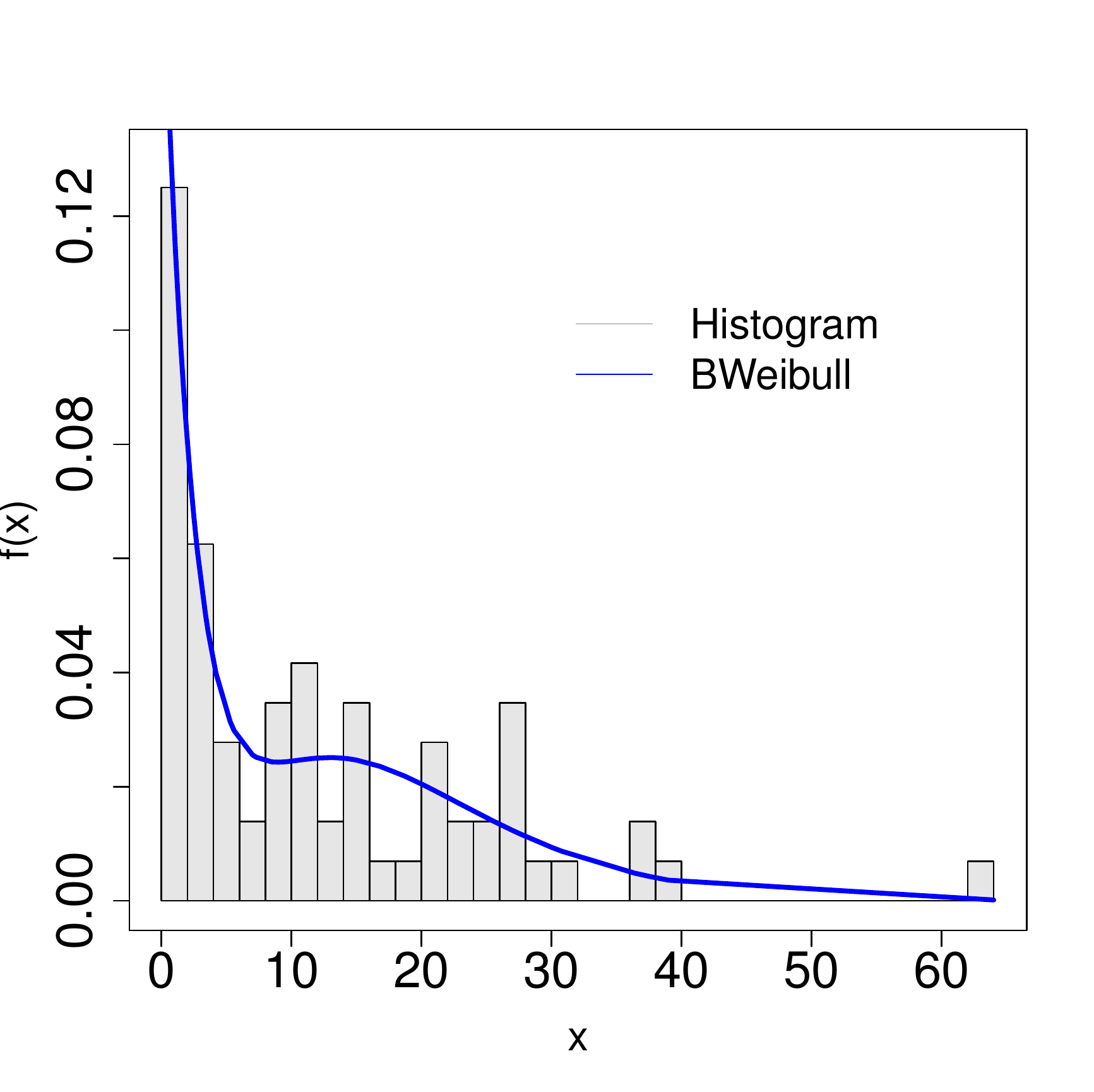}}
	\caption{The fitted values from MLqE of parameters of bimodal Weibull distribution for  Wheaton River data}
	\label{figriver}
\end{figure}
{\bf Example 5}: The death occurring due to cancer increases.  Gastric cancer data consisting of 125 observations are the censored data because of the difficulty of observering the patients who can die in the progress of medical treatment of patients. If data are censored, then there exists a replication for some observations, which leads to bimodaliy for frequency. For this reason, we prefer to use this data set which will be modelled by BWeibull and BGamma with $\log$ and $\log_q$ as objective functions. Let us note that censoring  designs \cite{censoring,Ngcensoring} can be originally regarded as an objective function \cite{Hampeletal86,MLqEBio,PhDthesis,CanKor18,Canorder20} used to fit a data set. Thus,  using $\log(f)$ and $\log_q(f)$ is reasonable and $f$ can be chosen as BWeibull and BGamma. Table \ref{tablecancer} shows that MLqE with BWeibull should be considered as good fitting performance assessed by p-value of KS even if p-value of CVM of BGamma with $\log$ as a little ahead from others is the best one among them.

\begin{table}[htbp]
	\caption{Inference values for parameters, KS and CVM values for  gastric cancer data}
	  \label{tablecancer}
		\scalebox{0.88}
	{
	\begin{tabular}{c|ccccccc}
		Model		& $\hat{\alpha}$ & $\hat{\beta}$ & $\hat{\delta}$ & KS & p-value(KS) & CVM & p-value(CVM)\\ \hline
BWeibull($\hat{{\theta}}_{\text{MLE}}$)	&  1.1181(0.0092)& 8.4517(0.1373)& 0.1764(0.0031) & 0.104 & 0.5085& 0.0962&0.1514 \\
BWeibull($\hat{{\theta}}_{\text{MLqE}}$), $q=0.9$&1.0092(0.0097) &6.7659(0.1403)& 0.2106(0.0043) &0.096 &{\bf 0.6121} &0.1390 & 0.1450 \\
BGamma($\hat{{\theta}}_{\text{MLE}}$)& 0.9531(0.0133) &0.1433(0.0030)& 0.2193(0.0031) &0.112 & 0.4131& 0.0849 &{\bf 0.1531} \\
BGamma($\hat{{\theta}}_{\text{MLqE}}$), $q=0.9$& 0.9439(0.0147) &0.1481(0.0012)& 0.2225(0.0036) &0.104 &  0.5085&  0.1422& 0.1446 \\ \hline
	\end{tabular}
}
\end{table}

\begin{figure}[htbp]
	\centering
	\subfigure[(Empirical) CDF of BWeibull($\widehat{{\theta}}_{\text{MLqE}}$)]{\label{fig:gastricCDFq}\includegraphics[width=0.41\textwidth]{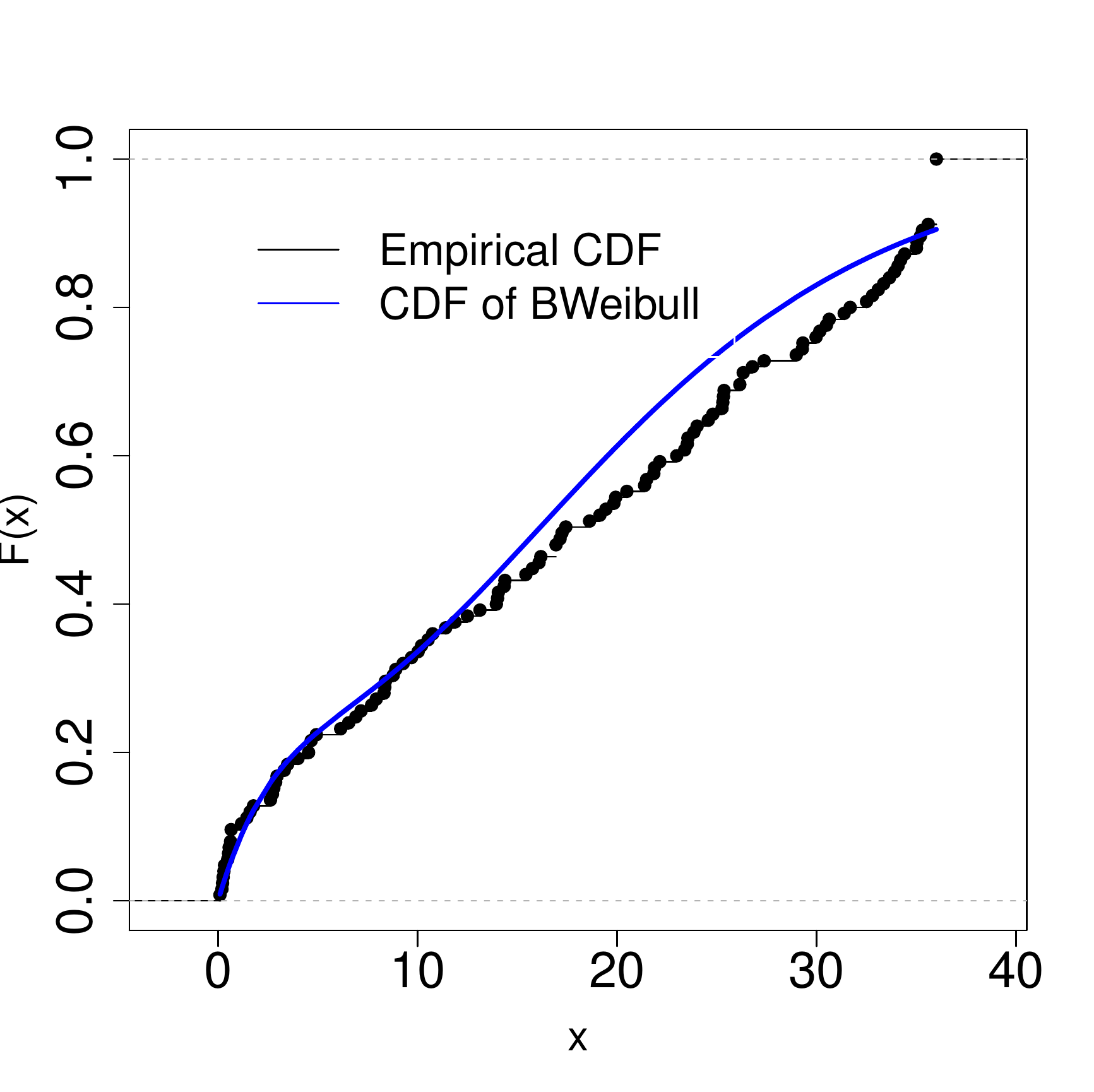}}
	\subfigure[Frequency and PDF of BWeibull($\widehat{{\theta}}_{\text{MLqE}}$)]{\label{fig:gastricPDFq}\includegraphics[width=0.41\textwidth]{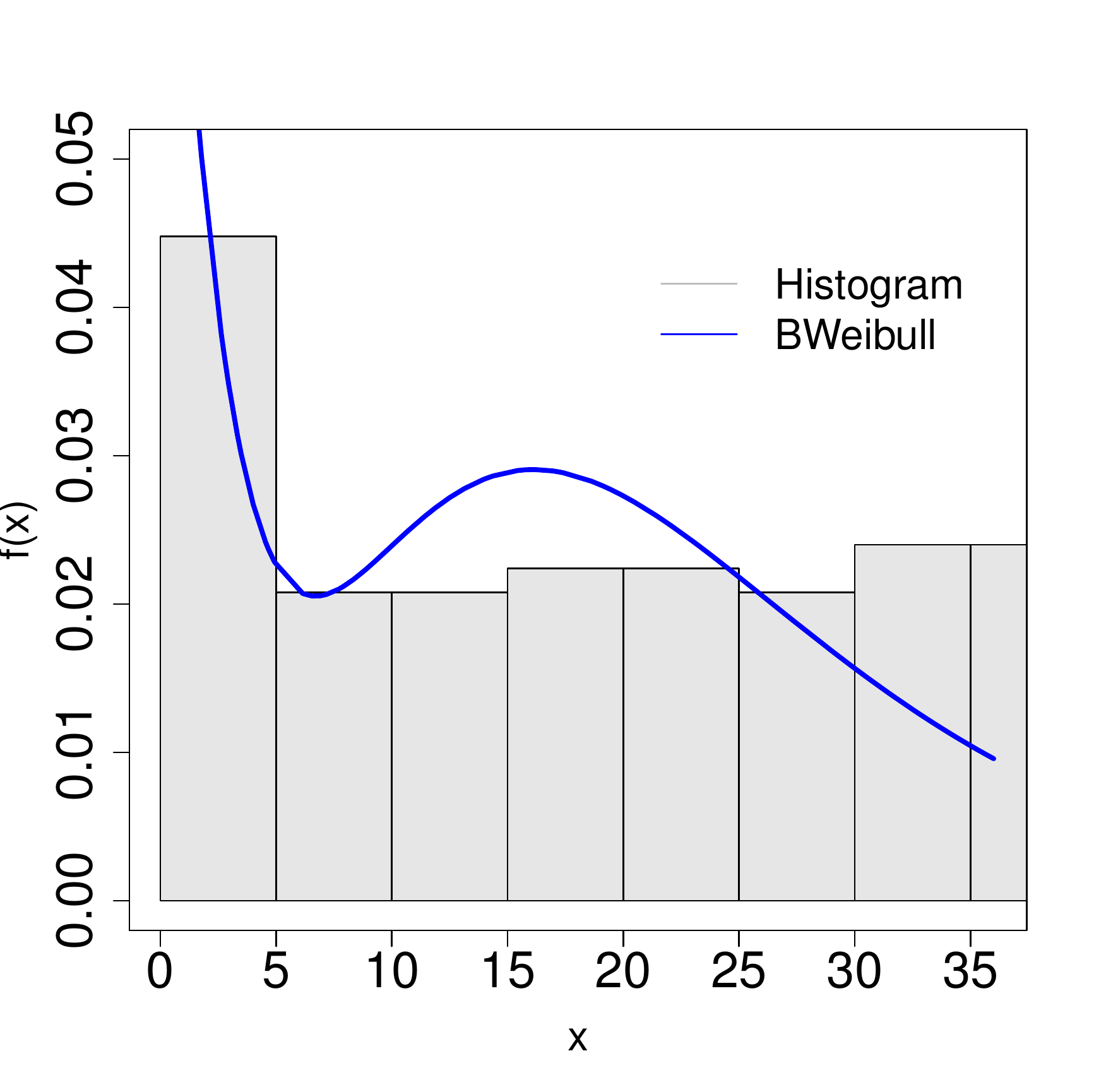}}
	\caption{The fitted values from MLqE of parameters of bimodal Weibull distribution for  gastric cancer data}
	\label{figgastric}
\end{figure}


\section*{Conclusions}\label{concsect}
\label{sec5}
A bimodal form of Weibull distribution has been derived. The properties such as unimodality, bimodality, moment, moment generation functions, entropies, etc. have been obtained. In order to get the estimators $\widehat{\boldsymbol{\theta}}$ of parameters, MLE and MLqE methods have been used. The concavity property of $\log$ and $\log_q$ are very important to apply for estimation of parameters \cite{Lindsay94}. Further, it should be noted that the existences of Shannon  and Tsallis $q$-entropies can also be important to apply MLE and MLqE methods. The estimates of $\widehat{\boldsymbol{\theta}}$ from MLE and MLqE are obtained by using of the heuristic algorithm because of nonlinearity of $\log(f)$ and $\log_q(f)$. The p.d. function $f$ has been chosen as  BWeibull and BGamma distributions. Instead of using a parametric model directly for modelling a data set via $\log$, it should be preferred to apply $\log_q$ which can derive the different forms of a parametric model for the different values of $q$ from MLqE method. Thus, we have obtained the results from numerical experiments showing that using MLqE method for parameters of BWeibull distribution is suggested. The anaytical expressions for the elements of FI based on $\log$ have been obtained. The square root of variance values of estimators $\widehat{\boldsymbol{\theta}}$ from MLE and MLqE are given in the numerical experiments. 

The application of the proposed distribution for case of the censored data will be studied and comparative study for different forms of the censoring designs \cite{censoring,Ngcensoring} will be applied to test performance of the censoring designs when we have the empirical distribution which has bimodality. The  properties such as existence, uniqueness of roots for $\log$ and $\log_q$ likelihood functions of the censoring designs which can be regarded as objective function \cite{Hampeletal86} will be studied.

\section*{Acknowledgments} 

This study was financed in part by the Coordenação de Aperfeiçoamento de Pessoal de Nível Superior - Brasil (CAPES) - Finance Code 001.
\appendix
\section*{Appendix: Main buildings of codes used for computation of parameters and goodness of fit test statistics}
The PDF is computed by
\begin{verbatim}
biwblpdf<- function(x,a,b,d){
	f=dweibull(x, a, b, log = FALSE);
	Z = 2 + ((2 *b *d *(-gamma(1/a) + b *d* gamma(2/a)))/a);
	g=(1 + (1 - d*x)^2)*f/Z;
}
\end{verbatim}
The optimization of $l_q$ in equation \eqref{likeliqtheta} is performed by following functions:
\begin{verbatim}
	logqLikFun <- function(p) {
		a <- p[1]; b <- p[2]; d <- p[3];
		-sum((biwblpdf(x,a,b,d)^(1-q)-1)/(1-q))
	}
	mlqe <- metaOpt(logqLikFun, optimType = "MIN", algorithm = "HS", 
	numVar,rangeVar,control = list(), seed = NULL)  
\end{verbatim}
The CDF arranged by manipulation and adopted for the computation of R platform is computed by
\begin{verbatim}
	CDFbiWei<- function(x,a,b,d){
		x=sort(x);   
		G=(-2 * b * d * gamma(1/a) + a * (2 - 2 * exp(-(x/b)^a) + 
		b^2 * d^2 * gamma(2/a+1) +   2 * b * d * (gamma(1 + 1/a) 
		* pgamma((x/b)^a, 1 + 1/a, 1, lower = FALSE)) 
		- b^2 * d^2 * (gamma(2/a+1) * pgamma((x/b)^a, 2/a+1, 1, lower = FALSE))))
		/(a * (2 + (2 * b * d * (-gamma(1/a) + b * d * gamma(2/a)))/a))
	}
\end{verbatim}
The values of goodness of fit test statistics are obtained by \cite{r1,r2}
\begin{verbatim}
	install.packages("CDFt") 
	library("CDFt") 
	val_CDFbiWeiq=CDFbiWei(sort(x),mleBiWeiabq[1],mleBiWeiabq[2],mleBiWeiabq[3])
	fun.ecdf=ecdf(x);my.ecdf <- fun.ecdf(sort(x));
	ks.test(my.ecdf, val_CDFbiWeiq,alternative = c("two.sided", "less", "greater"),
	exact = NULL, tol=1e-8,simulate.p.value=FALSE,B=2000)
	resq = CramerVonMisesTwoSamples(my.ecdf,val_CDFbiWeiq); 
	pvalueCVMq = 1/6*exp(-resq); 
\end{verbatim}

\section*{References}

\end{document}